\def\l{\langle}
\def\r{\rangle}
\def\mb{\bar{m}}
\def\Mb{\bar{M}}
\def\qb{\bar{q}}
\def\pb{\bar{p}}
\def\qt{\tilde{q}}
\def\Qb{\bar{Q}}
\def\Pb{\bar{P}}
\def\pt{\tilde{p}}
\def\Pt{\tilde{P}}
\def\N{\mathcal{N}_1}
\def\Nn{\mathcal{N}_2}
\newcommand{\SOMMA}[2]{\sum\limits_{#1}^{#2}}
\newcommand{\resub}[1]{\textcolor{black}{#1}}
\newtheorem{remark}{Remark}
\newtheorem{theorem}{Theorem}
\newtheorem{proposition}{Proposition}
\newtheorem{definition}{Definition}
\newtheorem{lemma}{Lemma}
\newtheorem{assumption}{Assumption}
\title{Serial vs parallel recall in the Blume-Every-Griffiths neural networks}
\author[a,b,1]{Linda Albanese,}
\author[b, c, 1]{Andrea Alessandrelli,}%
\author[d, 2, 3]{Adriano Barra,}%
\author[d,2]{Emilio N.M. Cirillo}%
\affiliation[a]{Istituto Nazionale di Alta Matematica Francesco Severi, Roma, Italy}
\affiliation[b]{Dipartimento di Matematica e Fisica, Università del Salento}
\affiliation[c]{Dipartimento di Informatica, Università di Pisa}
\affiliation[d]{Dipartimento di Scienze di Base e Applicazioni all'Ingegneria, Sapienza Università di Roma}
\affiliation[1]{Istituto Nazionale di Fisica Nucleare, Sezione di Lecce}
\affiliation[2]{Istituto Nazionale di Fisica Nucleare, Sezione di Roma}
\affiliation[3]{CNR Nanotec, Sezione di Lecce}
\abstract{Fully connected Blume-Emery-Griffiths neural networks performing pattern recognition and associative memory have been heuristically studied in the past (mainly via the replica trick and under the replica  symmetric assumption) as generalization of the standard Hopfield reference. In these notes, at first, by relying upon Guerra interpolation, we re-obtain the existing picture rigorously. Next we show that, due to dilution in the patterns, these networks are able to switch from serial recall (where one pattern is retrieved per time) to parallel recall (where several patterns are retrieved at once) and the larger the dilution, the stronger this emerging multi-tasking capability. In particular, we inspect the regimes of {\em mild dilution} (where solely a low storage of pattern can be enabled) and {\em extreme dilution} (where a medium storage of patterns can be sustained) separately as they give rise to different outcomes: the former displays hierarchical recall (distributing the amplitudes of the retrieved signals with different amplitudes), the latter --instead-- performs a equal-strength recall (where a $O(1)$ fraction of all the patterns is simultaneously retrieved with the same amplitude per pattern). 
\newline
Finally, in order to implement graded responses in the neurons,  variations on theme obtained by enlarging the possible values of neural activity these neurons may sustain are also discussed generalizing the Ghatak-Sherrington model for {\em inverse freezing} in Hebbian terms. }
\begin{document}
\maketitle

\section{Introduction}
Statistical mechanics of complex systems has long provided a fruitful framework for understanding pattern recognition, memory storage and information processing in neural networks \cite{Amit,Coolen} and, nowadays,  together with high-dimensional statistical inference, it is becoming a benchmark reference for theoretical inspections of modern Machine Learning algorithms and neural architectures \cite{BarraLenka,Carleo}. 
In particular, since the seminal contribution of Hopfield \cite{Hopfield}, Hebbian neural networks have been successfully recast in terms of spin-glass models, thereby enabling the use of powerful analytical tools originally developed in the study of disordered systems as well as importing the plethora of concepts  originally stemmed in that field (e.g. replicas, overlaps, aging,  etc.)  \cite{guerra_broken,MPV,talagrand2003spin}. 
\newline
Still within the research in disordered systems, in the past decades several models were proposed as spin glasses undergoing the phenomenon of {\em inverse freezing}\footnote{In a nutshell, {\em inverse freezing} may happen in systems whose  ground state entropy is higher than that associated to a positive temperature and this results in the (rather counterintuitive) phenomenon that a liquid object becomes solid upon heating, namely it undergoes inverse freezing.} \cite{NoiInverseFreezing,crisanti2005stable,crisanti-leuzziFullRSB_PRB} and these generalizations of the Sherrington-Kirkpatrick reference \cite{sherrington1975solvable}  motivated a bunch of extensions from the Hopfield ones \cite{Hopfield} also in the neural network's counterpart: indeed, starting from simple perceptrons up to fully connected neural networks, the statistical mechanics of these {\em inverse freezing}-driven neural networks has been studied extensively within the replica trick \cite{FreezingHopfield,InverseNew1,InverseNew2,InverseNew3}, and, in particular, under the so-called (and fairly standard in the field) replica-symmetric approximation \cite{BlumeCambelNN1,BlumeCambelNN2}: these analyses can be summarized in effective capacity-noise-dilution phase diagrams where different regions of this three-dimensional space highlight different computational capabilities of the underlying networks much as different regions of physical phase diagrams tackle different macroscopic behavior of networks of atoms or molecules (e.g. liquid, solid, gas) in the classical framework where statistical mechanics is used to describe the structure of matter.  
\newline
However, while the replica method has provided deep insights, its heuristic nature leaves open the question of whether these results can be placed on firm mathematical ground: in these regards, in the past two decades, rigorous methods have been developed to complement and in some cases replace replica-based reasoning (see e.g. \cite{bovier2012mathematical,talagrand2000rigorous}) and, within this bag of tools,  Guerra’s interpolation  has emerged as probably one of the most powerful and flexible technique\resub{s}, enabling exact results in a variety of disordered systems, including spin glasses \cite{barra2010replica,B-war1,B-war3,B-war4} and neural networks \cite{EliGuerra,GuerraNN,AABO-JPA2020,Albanese2021}. Applying such rigorous methods to these {\em inverse freezing}-driven neural networks is thus a natural step in consolidating the theoretical understanding of their associative and retrieval properties and it constitutes indeed the first part of the present investigation (where previous results on pattern recognition in the high storage regime have been punctually confirmed).

Beyond mathematical rigor, an equally important motivation for the existence of the present work is the exhaustive exploration of possible new computational capabilities eventually shown by these networks w.r.t. the classical pattern recognition (where one pattern is retrieved per time). In these regards, in the second part of the paper, we show how dilution of patterns deeply modifies the retrieval phenomenology (in suitable regions of the phase diagram)\footnote{This observation is rather general (i.e. it holds also for the Hopfield benchmark if forced to work with diluted patterns \cite{MultiTasking,Coolen1,Coolen2,FedeParallelo}), yet --at the time the above investigations on these networks were performed-- these multitasking capabilities nor were yet understood neither were present in the Literature.}.  
\newline
Indeed, as we prove in the second part of the manuscript, pattern's dilution actually allows the network to switch between serial recall  and parallel recall, the latter becoming stronger as dilution increases\footnote{Stronger in the sense that, the higher the dilution, the larger the amount of different patterns that can be simultaneously held by the machine with a transition from a hierarchical ordering of signals (for mild dilution) toward a regime (for extreme dilution) where the signals pertaining to all the patterns are simultaneously raised with the same strength.}, highlighting a close connection between sparsity of representation and distributed processing in such networks.

A further variation on theme that we face here lies in studying also models equipped with graded response neurons: in the Hopfield model, neurons are reduced to their most basic representation, namely they are stylized as binary variables (i.e., noisy on-off switches accounted by Ising spins or Boolean variables). While this strong approximation is enough to capture the essential mechanism for neural dialogues,  the quest for networks equipped with graded response neurons soon raised in the Community \cite{Treveso}  (even by Hopfield himself \cite{HopfieldGraded}\footnote{We remark that, another over-simplification  assumed in this minimal stylization of neurons is the total absence of their dendritic arbor, whose contribution  toward a graded response by the neuron can be relevant \cite{Kanter1,Kanter2}.}) and, to overcome the binary approximation, a natural line of generalization  paved in the past consisted in the shift from Ising spins to Potts spins. While certainly implementing  multi-state neurons via Potts variables in the Hopfield model \textit{à la Kanter} \cite{KanterPotts} is a way out, another alternative route lies in the graded neural response offered by the spin variables used in these models for inverse freezing \cite{ghatak1977crystal,BlumeCambelNN1} as  they have spins taking several values, ranging from $\{-1,0,+1\}$ of the Blume-Every-Griffith model to more fragmented partitions of the domain $[-1,+1]$  of the Ghatak-Sherrington model (for a review where all these models are analyzed at once see \cite{NoiInverseFreezing}).  
 
Therefore, purpose of this work is to investigate two particular neural network models inspired by the \emph{inverse freezing} Literature.  
The first is the Blume-Emery-Griffiths (BEG) model \cite{BlumeCambelNN2}, where both neurons and pattern entries can take values in $\{-1,0,+1\}$.  
The second is a generalization of the BEG model equipped with graded response neurons, namely the Ghatak-Sherrington (GS) model (originally proposed to study inverse freezing in \cite{katayama1999ghatak} but here equipped with Hebbian couplings). Our aim is to understand, for both models, how their information processing capabilities are affected by the interplay between \emph{pattern dilution} and \emph{pattern storage}.  

The paper is structured as follows: in Sec ~\ref{sec:generalities} we introduce the two models and their related  control and order parameters required to allow for a statistical mechanical treatment. Then, in Sec.~\ref{sec:serial} we inspect standard pattern recognition (i.e. serial retrieval) in these models forced to work in the high storage regime: purpose of this Section is to confirm existing know-how heuristically obtained in the past. 
\newline
After that, in Sec.~\ref{sec:lowloadS} we turn our focus toward the, so far not-explored, regime where an optimal tradeoff among strong enough dilution and suitably reduced pattern load  open the possibility of \emph{parallel retrieval}:  in this scenario, under suitable conditions that stabilize these multi-tasking capabilities, the network can sustain the retrieval of several patterns simultaneously, raising multiple Mattis magnetizations at once\footnote{For the sake of clearness, we highlight that this multitasking scenario \cite{MultiTasking,Coolen1,Coolen2} has nothing to share with a Hopfield model trapped into a spurious states \cite{Amit,Bovier1,Bovier2}: the reasons that allow to raise multiple magnetizations in these two cases are totally different, hence the two mechanisms -parallel retrieval and metastability- have not to be confused.}.

\section{Generalities}
\label{sec:generalities}

In this section we introduce the two models studied in this work, namely the Blume-Emery-Griffiths (BEG) and the Ghatak-Sherrington (GS) extensions of the classical Hopfield model for pattern recognition. We then discuss the two computational regimes relevant for these architectures --serial and parallel-- highlighting how they affect the dynamics and the interpretation of retrieval. Finally, we collect the main tools from the statistical mechanics of disordered systems that will be employed in the rest of the manuscript.

\subsection{The Blume-Emery-Griffiths neural network}
\label{subsec:BEG_model}

We start by introducing the Hamiltonian of the BEG model together with the standard statistical mechanics framework.

\begin{definition}[BEG Hamiltonian]
\label{def:probBEG}
Consider a neural configuration $\boldsymbol{\sigma}\in \boldsymbol{\Omega}^N \equiv \{0,\pm1\}^N$ and a set of $K$ ($N$-bits long) patterns $\boldsymbol{\xi}=\{\xi_i^\mu\}$ whose entries distribution is
\begin{align}
\mathbb{P}(\xi_i^\mu)
=\frac{a}{2}\,\delta_{\xi_i^\mu,+1}
+\frac{a}{2}\,\delta_{\xi_i^\mu,-1}
+(1-a)\,\delta_{\xi_i^\mu,0},
\label{eq:dist_xi}
\end{align}
with $a\in[0,1]$\footnote{This means that,  for general $a \in (0,1)$, patterns bring information (i.e. a value $\pm 1$) but they can also have uninformative entries (i.e. a value $0$)  while for $a=1$ patterns are no longer diluted and in the $a\to 0$ limit they yields completely blank entries (hence there is no information to retrieve and no network required for such a task).}. 
\newline
The Hamiltonian (or \emph{cost function} in machine-learning language) of the BEG neural network model is \cite{blume1971ising}
\begin{align}
\mathcal H^{(BEG)}_{N,K,a}(\boldsymbol{\sigma}\mid \boldsymbol{\xi})
&=-\frac{1}{2Na}\sum_{i,j=1}^N\sum_{\mu=1}^K \xi_i^\mu\xi_j^\mu\,\sigma_i\sigma_j
-\frac{1}{2Na(1-a)}\sum_{i,j=1}^N\sum_{\mu=1}^K \eta_i^\mu\eta_j^\mu\,\sigma_i^2\sigma_j^2,
\label{eq:Hamiltonian}
\end{align}
where for $i=1,\dots,N$ and $\mu=1,\dots,K$ we define
\begin{equation}
    \eta_i^\mu:=(\xi_i^\mu)^2-a.
    \label{eq:def_eta}
\end{equation} 
\end{definition}

\begin{remark}[Normalization and basic statistics]
The prefactors\textcolor{black}{, and the variables in general,} in \eqref{eq:Hamiltonian} are chosen so that the synaptic couplings have \textcolor{black}{null mean and} unitary variance. Indeed,
\begin{align}
\mathbb{E}[\xi_i^\mu]=0,\qquad \mathrm{Var}(\xi_i^\mu)=a,
\end{align}
and
\begin{align}
\mathbb{E}[\eta_i^\mu]=0,\qquad \mathrm{Var}(\eta_i^\mu)=a(1-a).
\end{align}
Although $(\boldsymbol{\xi},\boldsymbol{\eta})$ are deterministically related and hence dependent, their covariance vanishes:
\begin{align}
\mathrm{Cov}(\xi_i^\mu,\eta_i^\mu)
=\mathbb{E}\!\left[\xi_i^\mu\big((\xi_i^\mu)^2-a\big)\right]
=\mathbb{E}[(\xi_i^\mu)^3]-a\,\mathbb{E}[\xi_i^\mu]=0,
\label{eq:cov}
\end{align}
by symmetry of the distribution of $\xi_i^\mu$.
\end{remark}

\resub{To quantitatively describe the behavior of the model, we recall the following standard definitions.}
\resub{\begin{definition}[Partition function]
\label{def:partitionfunction}
For a given inverse temperature $\beta=1/T$ and considered the BEG Hamiltonian $\mathcal H$, the partition function of the model \eqref{eq:Hamiltonian} is
\begin{equation} \label{ZCW}
 \mathcal{Z}^{(BEG)}_{N,K,a}(\beta\mid \boldsymbol{\xi})
 := \sum_{\{\boldsymbol \sigma\}} 
 e^{-\beta \mathcal H^{(BEG)}_{N,K,a}(\boldsymbol \sigma\mid\boldsymbol{\xi})}
 = \sum_{\{\boldsymbol \sigma\}} \mathcal B^{(BEG)}_{N,K,a}(\beta,\boldsymbol\sigma\mid\boldsymbol{\xi}),
\end{equation}
where $\mathcal B^{(BEG)}_{N,K,a}(\beta,\boldsymbol\sigma\mid\boldsymbol{\xi})$ denotes the Boltzmann weight.  
Throughout the paper, the notation $\sum\limits_{\{\boldsymbol\sigma\}}$ indicates the sum over all neural configurations $\boldsymbol\sigma\in\Omega=\{-1,0,+1\}^N$.
\end{definition}}
\resub{\begin{definition}[Gibbs measure]
The Gibbs expectation of a generic (smooth) observable $f(\boldsymbol\sigma)$ is
\begin{equation}   \label{BMCW}
 \omega_{N,K,a}^{(BEG),\boldsymbol{\xi}}(f(\boldsymbol\sigma))
 := \frac{\sum_{\{\boldsymbol\sigma\}} f(\boldsymbol\sigma)\, 
 \mathcal B^{(BEG)}_{N,K,a}(\beta,\boldsymbol\sigma\mid\boldsymbol{\xi})}
 {\mathcal{Z}^{(BEG)}_{N,K,a}(\beta\mid\boldsymbol{\xi})}.
\end{equation}
\end{definition}}
\resub{\begin{definition}[Replicated system]
    A \emph{replica} is a copy of the system with exactly the same realization of the disorder $\bm J = \bm \xi$ \cite{MPV}, i.e all the weights must coincide among two replicas of the same network.\\
    More generally, we can introduce an arbitrary number $n$ of independent replicas of the system, all subject to the same realization of the disorder  $\bm J =\bm \xi$ and characterized by neural configurations $\boldsymbol \sigma^{(1)}, \ldots, \boldsymbol \sigma^{(n)}$. Their joint distribution over the replicated neurons is given by the product state
\begin{equation}
    \label{eq:Gibbstotalaver}
    \Omega_{N,K,a}^{(BEG),\boldsymbol J} = (\omega_{N,K,a}^{(BEG),\boldsymbol J})^{(1)} \times (\omega_{N,K,a}^{(BEG),\boldsymbol J})^{(2)} \times \cdots \times (\omega_{N,K,a}^{(BEG),\boldsymbol J})^{(n)},
\end{equation}
where each factor $(\omega_{N,K,a}^{(BEG),\boldsymbol J})^{(l)}$ acts on the  neurons $\sigma_i^{(l)}$ of replica $l$.  
\end{definition}}

\resub{\begin{definition}[Global average]
\label{def:global}
The global average of a generic observable $f$, depending on the configurations of $n$ replicas as well as on their weights, is
\begin{equation}
\label{eq:totalaver}
    \langle f(\boldsymbol \sigma^{(1)},\boldsymbol  \sigma^{(2)}, \ldots,\boldsymbol  \sigma^{(n)}) \rangle_{N,K,a}
    = \mathbb{E}_{\boldsymbol\xi}\,\Big[ \Omega_{N,K,a}^{(BEG),\boldsymbol \xi}\big(f(\boldsymbol \sigma^{(1)}, \boldsymbol \sigma^{(2)}, \ldots, \boldsymbol \sigma^{(n)})\big)\Big],
\end{equation}
where the expectations $\mathbb{E}_{\boldsymbol\xi}$ average over the quenched patterns.
\end{definition}}

\resub{\begin{definition}[Quenched statistical pressure / Quenched free energy]
\label{def:quenchedfreeenergy}
The quenched (intensive) statistical pressure related to the BEG neural network is defined as
\begin{equation}
\mathcal A^{(BEG)}_a(\beta, \alpha)
 := \lim_{N\to\infty} \frac{1}{N}\,
 \mathbb{E}_{\boldsymbol\xi}\,
 \ln \mathcal{Z}^{(BEG)}_{N,K,a}(\beta\mid\boldsymbol{\xi}),
\end{equation}
and relates to the quenched (intensive) free energy via $\mathcal A^{(BEG)}_a(\beta, \alpha)=-\beta \mathcal F^{(BEG)}_a(\beta, \alpha)$. 
\end{definition}}
\resub{Note that, for historical reasons, we will study the statistical pressure (more standard at work with interpolation techniques) but clearly we can obtain the same identical information by studying the free energy (more standard at work with the replica methods).}

We are interested in the collective behavior of the system rather than in single neuron statistics, hence we introduce suitable \emph{order parameters}, namely macroscopic observables able to capture  the organization of the network as the control parameters are tuned: in these regards, it is useful to introduce the two next

\begin{definition}[Control parameters]\label{Def:ControlParameters}
The natural control parameters of these neural networks are 
\begin{itemize}
    \item  the fast (thermal) noise, tuned by $\beta \equiv T^{-1} \in \mathbb{R}^+$.
    \item the pattern entry's dilution, ruled by  $a \in [0,1]$.
    \item the pattern storage capacity (or network load) $\alpha$,  defined as
\begin{align}
\label{eq:load}
\alpha := \lim_{N\to\infty} \frac{K}{N^{\delta}}, \ \ \ \delta \in (0,1].
\end{align}
\end{itemize}
\end{definition}
In particular, when the amount of patterns scales at most as $K \propto \ln N$ we informally speak about the {\em low storage} regime (here $\alpha=0$ and $\delta$ does not play any role), for $K \sim N^{\delta}$ and $0<\delta<1$ we informally speak about the {\em medium storage} regime  (here $\alpha=0$) and, finally, for $K \propto N^{\delta}$ and $\delta=1$ we speak of the {\em high storage} regime   (here $\alpha>0$). 
  
\begin{definition}[Order parameters]\label{Def:OrderParameters}
The natural order parameters of these neural networks are the Mattis magnetizations with respect to $\boldsymbol{\xi}$ and $\boldsymbol{\eta}$:
\begin{align}
m_\mu(\boldsymbol{\sigma})
:=\frac{1}{Na}\sum_{i=1}^N \xi_i^\mu\,\sigma_i,  \ \ \ \mu \in (1,...,K),
\label{eq:m_mattis_order}
\end{align}
which measure the quality of retrieval of pattern $\mu$, and
\begin{align}
M_\mu(\boldsymbol{\sigma})
:=\frac{1}{Na(1-a)}\sum_{i=1}^N \eta_i^\mu\,\sigma_i^2,  \ \ \ \mu \in (1,...,K),
\label{eq:M_order}
\end{align}
which measure the alignment of the activity field $\boldsymbol{\sigma}^2$ with $\boldsymbol{\eta}^\mu$.\footnote{Solely in the high-storage investigation $\alpha > 0$, deepened only in the first part of the manuscript (where our aim is simply to recover the heuristic picture already achieved via the replica trick), we will need also the spin glass order parameters, but -for the sake of simplicity- we will introduce them just when we will need them.}
\end{definition}

Rewriting the Hamiltonian in terms of these parameters yields
\begin{align}
\mathcal H_{N,K,a}(\boldsymbol{\sigma}\mid \boldsymbol{\xi})
=-\frac{N}{2}\,a\sum_{\mu=1}^K m^2_\mu(\boldsymbol{\sigma})
-\frac{N}{2}\,a(1-a)\sum_{\mu=1}^K M_\mu^2(\boldsymbol{\sigma}),
\label{eq:Hamiltonian_mM}
\end{align}
which will be the starting point of the analysis in Subsec.~\ref{subsec:stat_mecc_app_BEG}.

\begin{remark}[On the non-dense character of the model]
Although the Hamiltonian \eqref{eq:Hamiltonian} contains quartic terms, the model cannot be regarded as a dense network. The $P=4$ quartic interactions appear only in the form $\sigma_i^2\sigma_j^2$ (self-interaction of spin activities), rather than genuine multi-spin terms such as $\sigma_{i_1}\sigma_{i_2}\cdots\sigma_{i_p}$.  
As a consequence, the storage capacity of this network scales linearly with $N$ (i.e. $K\propto N$), unlike truly dense $P$-spin models where the capacity grows in a polynomial way as $K\propto N^{P-1}$ \cite{EliGuerra,Albanese2021,HopKro1}.
\end{remark}

In particular our goal is to express the statistical pressure in terms of the control and order parameters of the theory so to extremize it w.r.t the order parameters: the extremization procedure results in a series of self-consistent equations for the order parameters that rule their evolution in the space of the control parameters and whose inspection finally allows to obtain the phase diagram of the network ({\em vide infra}).

\subsection{The Ghatak--Sherrington neural network}
\label{subsec:GS_model}

Motivated by the study of spin-glass models for \emph{inverse freezing}, where such extensions were first introduced \cite{crisanti2005stable,crisanti-leuzziFullRSB_PRB,Kaufman1990,NoiInverseFreezing}, we now focus on a generalization of the previous setting: in this variant, known as the Ghatak-Sherrington (GS)  model \cite{ghatak1977crystal}, spins are allowed to take values in a larger, Potts-like, state space, namely
\begin{align}
\sigma \in \Omega=\left\{-1+\frac{k}{S}\; :\; k=0,\ldots,2S\right\}, 
\qquad 2S+1\in\mathbb{N}, \quad S\geq 1/2.
\end{align}
Note that for $S=1$ neurons in the GS network reduce to those of the BEG one, i.e.\ $\bm\sigma\in\{-1,0,+1\}^N$ and for $S=1/2$ they collapse  to standard Ising spins, i.e.\ $\bm\sigma\in\{-1,+1\}^N$.
This extension was originally considered by Ghatak and Sherrington \cite{ghatak1977crystal} and deepened later by Katayama and Horiguchi \cite{katayama1999ghatak} in the context of {\em inverse freezing}:  in the present context instead, this variability in the neural states plays the role of endowing neurons with a \emph{graded response}, rather than a purely binary one: the interest for graded response neurons (largely investigated along these decades, see e.g. \cite{Treveso,KanterPotts,Raimer91,Kanter1,kanter,Kanter2}) thus drives the analysis of a GS model whose interactions are now made Hebbian.
\newline
On remark, we stress that --as the neural activities gets scattered over several values-- the same fragmentation is implemented in the pattern entries too, as stated by the next
\begin{definition}
The distribution of pattern entries of the GS neural network model $\bm \xi= \{ \xi^\mu_i\}$ is defined as
\begin{align}
\mathbb{P}(\xi_i^\mu) =
\begin{cases}
(1-a)\,\delta_{\xi_i^\mu,0}
+\dfrac{a}{2S}\sum\limits_{\substack{k=0\\k\neq S}}^{2S}
\delta_{\xi_i^\mu,\,-1+\tfrac{k}{S}} & \text{if $S\in\mathbb{N}$}, 
\\\\
(1-a)\,\delta_{\xi_i^\mu,0}
+\dfrac{a}{2S+1}\sum\limits_{k=0}^{2S}
\delta_{\xi_i^\mu,\,-1+\tfrac{k}{S}} & \text{if $\tfrac{2S+1}{2}\in\mathbb{N}$},
\end{cases}
\label{eq:probdistrS}
\end{align}
with $a\in[0,1]$ as before (see Definition \ref{def:probBEG}).  

\end{definition}

We highlight that for $S=1$ this reduces to the BEG distribution (see Eq. \eqref{eq:dist_xi} in Def. \ref{def:probBEG}). \textcolor{black}{Moreover, we stress that, since our purpose is to explore the parallel processing, despite the absence of the $0$ in the state space when $(2S+1)/2 \in \mathbb{N}$, the patterns can also assume this value.}

\begin{definition}[GS Hamiltonian]
Let $\boldsymbol{\sigma}\in\Omega^N$ be a neural configuration and $\bm \xi=\{\xi_i^\mu\}$ the pattern entries distributed as in \eqref{eq:probdistrS}.  
The Hamiltonian of the GS neural network model is
\begin{align}
\mathcal H^{(GS)}_{N,K,a,S}(\boldsymbol\sigma \mid \boldsymbol\xi)
&=-\frac{1}{2N\,\N(a,S)}\sum_{i,j=1}^N\sum_{\mu=1}^K \xi_i^\mu\xi_j^\mu\,\sigma_i\sigma_j
-\frac{1}{2N\,\Nn(a,S)}\sum_{i,j=1}^N\sum_{\mu=1}^K \eta_i^\mu\eta_j^\mu\,\sigma_i^2\sigma_j^2,
\label{eq:HamiltSspin}
\end{align}
where we introduced
\begin{align}
\N(a,S) := \mathrm{Var}[\xi_i^\mu], 
\qquad 
\Nn(a,S) := \mathrm{Var}[\eta_i^\mu],
\label{eq:GS_normalization}
\end{align}
and, as in the BEG model,
\begin{align}
\eta_i^\mu := (\xi_i^\mu)^2 - \N(a,S).
\end{align}
\end{definition}
The explicit expressions of the normalizations $\N(a,S)$ and $\Nn(a,S)$ for a generic value of $a$ and $S$ are reported in Appendix \ref{app:momenta}.
\begin{remark}[Basic statistics]
By symmetry of the pattern distribution we still have
\begin{align}
\mathbb{E}[\xi_i^\mu]=\mathbb{E}[\eta_i^\mu]=0,
\qquad
\mathrm{Cov}(\xi_i^\mu,\eta_i^\mu)=\mathbb{E}[\xi_i^\mu\eta_i^\mu]
=\mathbb{E}[(\xi_i^\mu)^3]-\N(a,S)\,\mathbb{E}[\xi_i^\mu]=0.
\end{align}
\end{remark}

While the control parameters of the GS neural network are still those provided in Definition \ref{Def:ControlParameters} (that is, the noise $\beta$, the dilution $a$ and the patterns storage $\alpha$ fine tuned by  $\delta$), the order parameters slightly differ from those introduced in Definition \ref{Def:OrderParameters} --despite their information content is preserved-- as they read accordingly to the next
\begin{remark}[Order parameters]
For the GS neural network model is more convenient to define the Mattis magnetizations as
\begin{align}
m_\mu(\boldsymbol\sigma)
&:=\frac{1}{N\,\N(a,S)}\sum_{i=1}^N \xi_i^\mu\sigma_i,
&
M_\mu(\boldsymbol\sigma)
&:=\frac{1}{N\,\Nn(a,S)}\sum_{i=1}^N \eta_i^\mu\sigma_i^2,
\end{align}
\label{rem:order}
\end{remark}
As a result, the GS Hamiltonian \eqref{eq:HamiltSspin} can be rewritten in terms of these order parameters as
\begin{align}\label{FormaQuadratica}
\mathcal H^{(GS)}_{N,K,a,S}(\boldsymbol\sigma \mid \boldsymbol\xi)
=-\frac{N}{2}\,\N(a,S)\sum_{\mu=1}^K m_\mu^2(\boldsymbol\sigma)
-\frac{N}{2}\,\Nn(a,S)\sum_{\mu=1}^K M_\mu^2(\boldsymbol\sigma).
\end{align}

From now on, for notational convenience, we omit the dependence on $a$ and $S$ in $\N(a,S)$ and $\Nn(a,S)$ and on $\bm \sigma$ from $m_\mu$ and $M_\mu$. Moreover, we shall omit the superscripts $(BEG)$ and $(GS)$, as the context will always make clear which model is being considered, keeping in mind that the BEG model is recovered as the special case of the GS model with $S=1$.

\textcolor{black}{We stress that the statistical mechanics tools introduced in Defs. \ref{def:partitionfunction}-\ref{def:quenchedfreeenergy} are the same in this model. }

\subsection{Serial versus parallel retrieval}
\label{subsec:serial_vs_parallel_overview}

\begin{figure}[t!]
    \centering
    \includegraphics[width=15cm]{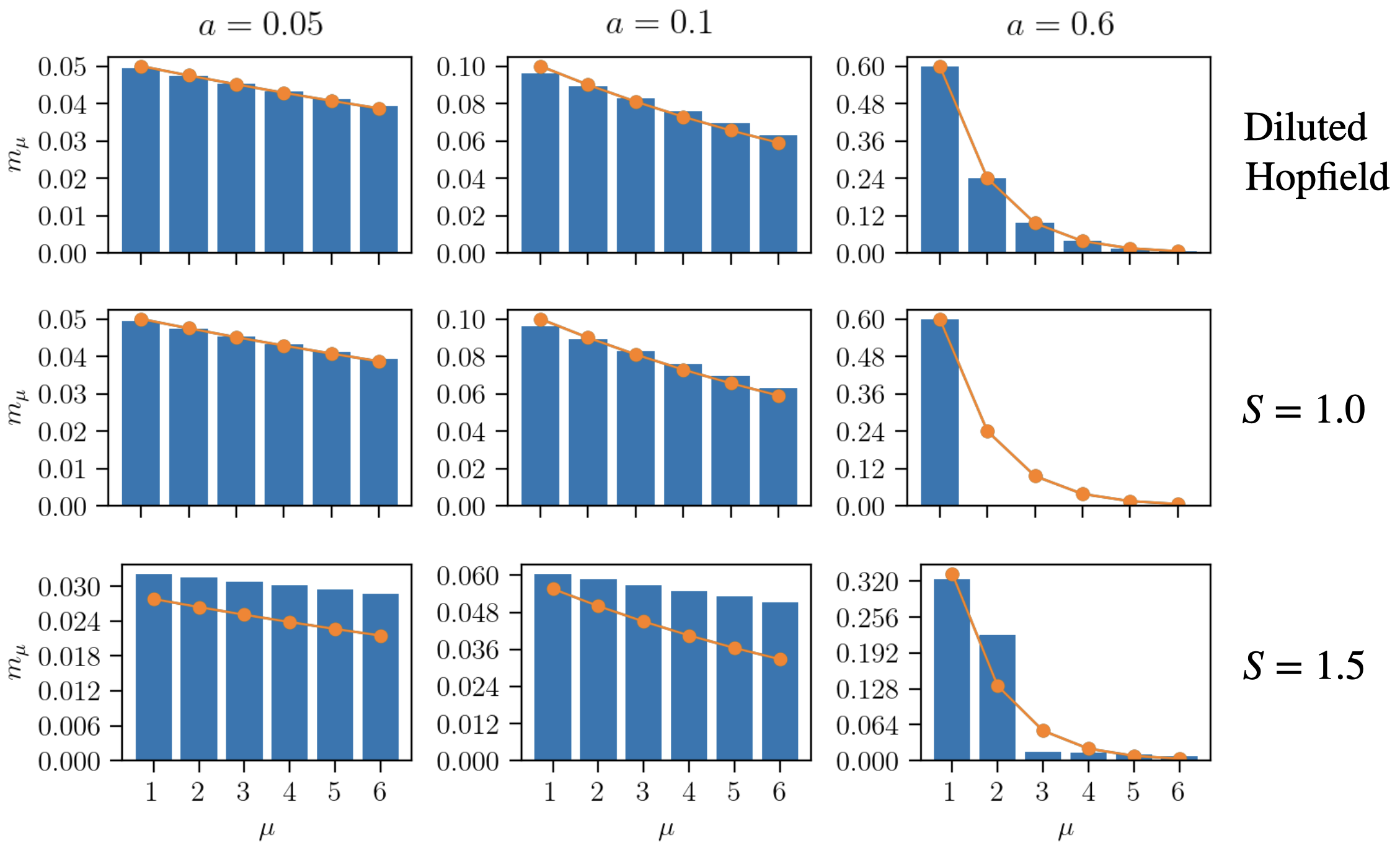}
    \caption{\resub{MCMC outcomes: Mattis magnetizations for \(N=1000\) and \(K=6\) diluted patterns, shown at \(\beta=1000\) for three activity levels: \(a=0.05\) (left), \(a=0.1\) (center), and \(a=0.6\) (right). Rows correspond to: \emph{(top)} diluted Hopfield, \emph{(middle)} BEG, and \emph{(bottom)} GS with \(S=1.5\).
    Bars report the steady-state overlaps \(m_\mu\) measured in Monte Carlo simulations, while the orange dotted line shows the hierarchical prediction \(m_\ell=\N(1-a)^{\ell-1}\) in \eqref{eq:hierarc_ansatz}.
    The diluted Hopfield model closely follows the hierarchical profile for all \(a\). When the additional \(\eta\)-sector is present (BEG and GS), deviations from the hierarchical ordering grow as dilution is reduced (larger \(a\)), and the number of non-vanishing overlaps drops more abruptly than in the diluted Hopfield case. This is explained by the energetic competition discussed in Sec.~\ref{subsec:eta_term}.}}
    \label{fig:gerarc_Hopf}
\end{figure}

\resub{
Before delving into the mathematical analysis, it is useful to outline the qualitative information-processing regimes that these networks can display as the control parameters are varied.
The relevant knobs are thermal noise (\(\beta\)), dilution (\(a\)), storage scaling (\(K\sim N^\delta\)), and the graded-response parameter \(S\) (i.e.\ the structure of the neural state space \(\Omega\)).
A convenient diagnostic is the set of Mattis magnetizations \(\{m_\mu\}_{\mu=1}^K\):
\begin{itemize}
    \item \emph{serial recall}: a single \(O(1)\) magnetization is present (one pattern is retrieved at a time);
    \item \emph{parallel recall}: several magnetizations are simultaneously non-vanishing (multitasking retrieval).
\end{itemize}
}

\noindent

\resub{
In the present context it is convenient to organize the phenomenology into three broad regimes:}
\resub{
\begin{itemize}
    \item \textbf{Ergodic (no-recall) regime.}
    At high temperature (small \(\beta\)) thermal fluctuations dominate and the overlaps remain close to zero, \(m_\mu\simeq 0\) for all \(\mu\).
    \item \textbf{Serial retrieval (pure state).}
    In the high-storage setting (\(\delta=1\), \(\alpha>0\)) and for finite \(a\in(0,1)\), the interference generated by an extensive number of stored patterns acts as a quenched noise. This typically stabilizes a \emph{pure} retrieval configuration, where one condensed pattern carries the signal while the remaining overlaps stay negligible.
    \item \textbf{Parallel retrieval (hierarchical or fully parallel states).}
    Away from high storage (\(\delta<1\), hence \(\alpha=0\)), dilution becomes a computational resource: since each pattern is informative only on an \(a\)-fraction of sites, retrieving one pattern leaves a \((1-a)\)-fraction of effectively unconstrained neurons that can align with other patterns.
    This opens the door to multitasking states \cite{FedeParallelo,Coolen1,Coolen2,MultiTasking}, whose structure depends on the dilution level:
    \begin{itemize}
        \item at \emph{mild dilution} (larger \(a\)), one typically observes a \emph{hierarchical} multitasking state with ordered overlaps,
        \begin{equation}
            m_\ell = \N(1-a)^{\ell-1},
            \qquad \ell=1,\dots,\hat K,
            \label{eq:hierarc_ansatz}
        \end{equation}
        where \(\hat K\) is an effective cutoff set by resource exhaustion (and finite-\(N\) discreteness), yielding \(\hat K\lesssim \log N\) at fixed \(a\).
        Figure~\ref{fig:gerarc_Hopf} provides a direct visualization of this progressive overlap organization.
        \item at \emph{strong dilution} (small \(a\)), the hierarchy progressively flattens until several non-zero overlaps become comparable, leading to a genuinely \emph{fully parallel} mode with
        \begin{equation}
            \bm m = \bar m\,(1,1,1,\ldots)\,.
        \end{equation}
    \end{itemize}
\end{itemize}}

\begin{figure}[t!]
    \centering
    \includegraphics[width=15cm]{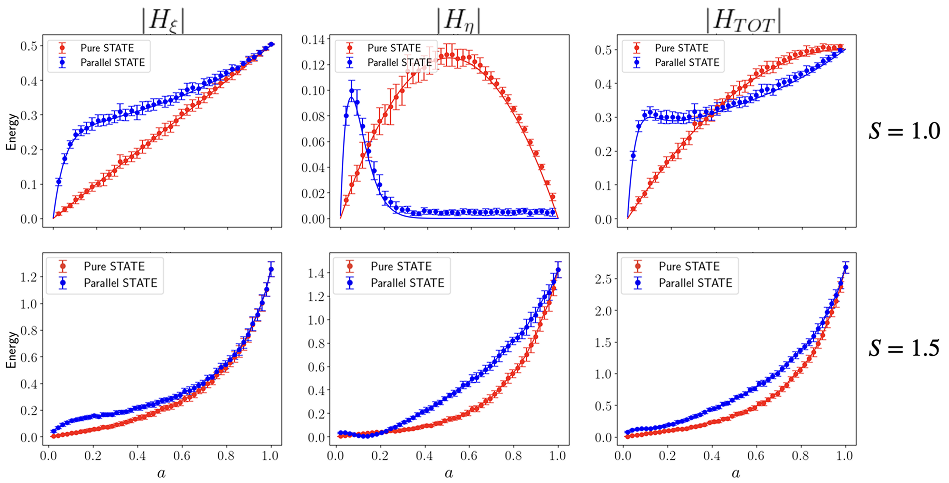}
    \caption{\resub{Energy comparison between the pure state \eqref{eq:pure_state} and the hierarchical/parallel state \eqref{eq:parallel_state}. Solid lines are the analytical predictions \eqref{eq:num_puro}--\eqref{eq:num_ger}, while dots are numerical simulations with \(N=2000\) and \(K=10\).
    Top row: \(S=1.0\), \(\bm{\sigma}\in\{-1,0,1\}^{N}\). Bottom row: \(S=1.5\), \(\bm{\sigma}\in\{-1,-1/3,1/3,1\}^{N}\).
    Columns show: (i) the \(\xi\)-dependent contribution \(\big|\mathcal H^{(\xi)}\big|\), (ii) the \(\eta\)-dependent contribution \(\big|\mathcal H^{(\eta)}\big|\), and (iii) the total energy.
    The \(\xi\)-term always favors the hierarchical/parallel state, while the \(\eta\)-term can favor the pure state at weak dilution when \(0\in\Omega\) (integer \(S\)), producing a crossover in the total energy for \(S=1\); by contrast, for \(S=1.5\) (half-integer \(S\)) the parallel state remains energetically preferred across essentially the whole range of \(a\).}}
    \label{fig:energy_comparison}
\end{figure}

\noindent
\resub{
A simple energetic argument explains why dilution can trigger a shift from serial to parallel processing.
Focusing on the retrieval-driving (quadratic) part of the cost function, both the BEG and GS models contain a term that, at fixed \(N\), can be written schematically as
\[
\tilde H(\{m_\mu\}) \propto -\frac{N}{2}\sum_{\mu=1}^{K} m_\mu^2,
\]
see \eqref{eq:Hamiltonian_mM} and the first term in \eqref{FormaQuadratica}.
Specialize to \(K=2\), so \(\tilde H\propto -\frac{N}{2}(m_1^2+m_2^2)\), and consider two complementary diluted patterns, informative on disjoint halves of the system,
\begin{equation}
\boldsymbol \xi^1=\big(\underbrace{\xi_1^1, \ldots, \xi_{N/2}^1}_{\in \{-1,-1+1/S,\ldots,+1\}^{N/2}},\underbrace{0,\ldots,0}_{\in \{0\}^{N/2}}\big),\qquad
\boldsymbol \xi^2=\big(\underbrace{0,\ldots,0}_{\in \{0\}^{N/2}},\underbrace{\xi_{N/2+1}^2,\ldots,\xi_N^2}_{\in \{-1,-1+1/S,\ldots,+1\}^{N/2}}\big).
\end{equation}
If the network retrieves only \(\boldsymbol\xi^1\), it aligns the first \(N/2\) neurons while the rest behave essentially at random, giving \(\mathbb{E}[m_1]\simeq \N/2\) and \(\mathbb{E}[m_2]\simeq 0\), hence
\[
\mathbb{E}[\tilde H]\propto -\frac{N\N^2}{8}.
\]
If it also exploits the remaining degrees of freedom to align with \(\boldsymbol\xi^2\), then \(\mathbb{E}[m_1]\simeq \mathbb{E}[m_2]\simeq \N/2\), so that
\[
\mathbb{E}[\tilde H]\propto -\frac{N\N^2}{4},
\]
which is lower. Thus, when patterns contain blank entries, spreading the alignment over multiple partially informative patterns can be energetically favorable, naturally promoting parallel retrieval.
\noindent
At the level of microscopic configurations, the hierarchical multitasking state can be expressed as \cite{MultiTasking}
\begin{equation}
\sigma_i^{(h)}=\xi_i^1+\sum_{\nu=2}^{\hat K}\xi_i^\nu \prod_{\rho=1}^{\nu-1}\delta_{\xi_i^\rho,0}\,,
\label{eq:parallel_state}
\end{equation}
i.e.\ each site aligns with the first pattern that is non-blank at that site.
This organization is stable down to a critical activity \(a_c\), where the leading overlap becomes comparable with the cumulative sub-leading signal, \(m_1\sim \sum_{k>1} m_k\) \cite{MultiTasking}, and the system crosses over toward the fully parallel configuration described above.
\noindent
Finally, in medium-load scenarios (\(K\sim N^\delta\) with \(\delta\in(0,1)\)) an extensive number of simultaneous recalls can persist only if dilution scales with \(N\), i.e.\ \(a=a(N)\sim N^{-\gamma}\), with \(\gamma\) tuned so that the effective interference remains finite (the same scaling that keeps the Central Limit Theorem (CLT) noise under control for the retrieved set).
In that case, a growing fraction of patterns can be recalled in parallel with approximately comparable amplitude per pattern.}
\begin{figure}[t!]
    \centering
    \includegraphics[width=15cm]{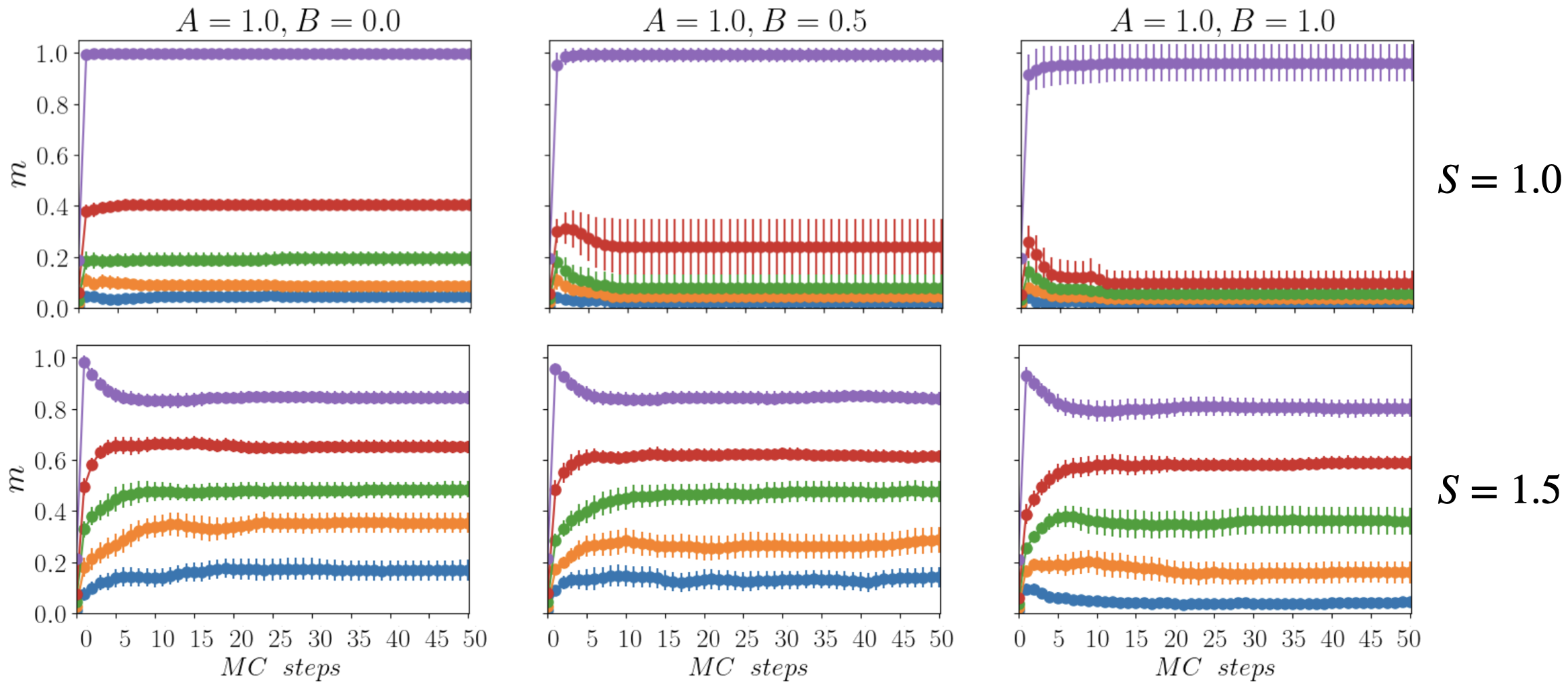}
    \caption{MCMC simulations of \eqref{eq:HamiltSspin} with \(K=5\), \(\beta=1000\), \(N=1000\), and \(a=0.6\).
    The pre-factor \(A\) multiplies the \(\xi\)-dependent term, while \(B\) multiplies the \(\eta\)-dependent term.
    \textit{Top row:} \(0\in\Omega\). Increasing \(B\) progressively suppresses multitasking until only one magnetization remains non-zero.
    \textit{Bottom row:} \(0\notin\Omega\). The same increase in \(B\) has a much weaker impact, and multiple overlaps can coexist even at larger \(B\).}
    \label{fig:comparison_terms}
\end{figure}

\subsection{The role of the \texorpdfstring{\(\eta\)}{eta}-dependent terms}
\label{subsec:eta_term}
\resub{
Dilution generically promotes multitasking through the \(\xi\)-dependent Hebbian sector. In the BEG/GS extensions, however, the additional \(\eta\)-dependent interaction introduces a competing energetic drive that can reduce (or, depending on parameters, preserve) parallel retrieval.
This competition is particularly transparent when we compare different GS models: when \(0\in\Omega\) (integer \(S\), e.g. the BEG case, \ \(S=1\)) parallel states typically require stronger dilution (smaller \(a\)), whereas when \(0\notin\Omega\) (half-integer \(S\), e.g.\ \(S=3/2\)) multitasking remains stable already at weaker dilution.
This qualitative difference is visible dynamically in Fig.~\ref{fig:comparison_terms} (last column on the right) and at the energetic level in Fig.~\ref{fig:energy_comparison}.
\medskip
\noindent
To isolate the mechanism, we compare two prototypical retrieval organizations:
\begin{itemize}
    \item the \emph{pure} retrieval state (serial recall),
    \begin{equation}
        \bm\sigma^{(s)}=\bm\xi^1\,,
        \label{eq:pure_state}
    \end{equation}
    \item the \emph{hierarchical/parallel} retrieval state \eqref{eq:parallel_state} (multitasking recall).
\end{itemize}
\noindent
It is then convenient to decompose the Hamiltonian into its \(\xi\)-dependent and \(\eta\)-dependent parts:
\begin{align}
\big|\mathcal H^{(\xi)}_{N,K,a}(\bm\sigma)\big|
=\frac{1}{2N}\sum_{i,j=1}^N\sum_{\mu=1}^K \xi_i^\mu\xi_j^\mu\,\sigma_i\sigma_j,
\qquad
\big|\mathcal H^{(\eta)}_{N,K,a}(\bm\sigma)\big|
=\frac{1}{2N}\sum_{i,j=1}^N\sum_{\mu=1}^K \eta_i^\mu\eta_j^\mu\,\sigma_i^2\sigma_j^2,
\label{eq:hamiltonian_due_pezzi}
\end{align}
so that the total energy is the sum of the two contributions.
\noindent
For the pure state one finds
\begin{align}
\big|\mathcal H^{(\xi)}_{N,K,a,S}(\bm\sigma^{(s)})\big|
=\frac{\N^2}{2},
\qquad
\big|\mathcal H^{(\eta)}_{N,K,a,S}(\bm\sigma^{(s)})\big|
=\frac{\N^2}{2},
\label{eq:num_puro}
\end{align}
whereas for the hierarchical/parallel state
\begin{align}
\big|\mathcal H^{(\xi)}_{N,K,a,S}(\bm\sigma^{(h)})\big|
=&\frac{\N^2}{2}\sum_{\nu=1}^{K}(1-a)^{2(\nu-1)},
\notag \\
\big|\mathcal H^{(\eta)}_{N,K,a,S}(\bm\sigma^{(h)})\big|
=&\frac{1}{2}\sum_{\nu=1}^{K}\Bigg(\N(1-a)^{\nu-1}-\N^2\sum_{\rho=\nu}^{K-1}(1-a)^\rho\Bigg)^2.
\label{eq:num_ger}
\end{align}
\noindent
Figure~\ref{fig:energy_comparison} clarifies the resulting competition.
The \(\xi\)-term systematically rewards the hierarchical/parallel organization, since distributing the alignment over multiple partially informative patterns increases \(\sum_\mu m_\mu^2\).
By contrast, the \(\eta\)-term depends sensitively on the available spin states: for \(S=1\) (\(0\in\Omega\)) it increasingly favors the pure state as \(a\) grows, thereby counteracting the \(\xi\)-driven tendency toward multitasking and producing a dilution threshold beyond which serial retrieval becomes energetically preferable. Instead, for \(S=1.5\) (\(0\notin\Omega\)) the same contribution remains largely compatible with multitasking, so that the hierarchical/parallel state stays energetically dominant at low temperature.\\
Building on this decomposition, we can probe the competition directly by introducing two tunable prefactors \(A\) and \(B\) multiplying, respectively, the \(\xi\)-dependent and \(\eta\)-dependent sectors in \eqref{eq:HamiltSspin}:
\begin{align}
\mathcal H^{(GS)}_{N,K,a,S}(\boldsymbol\sigma \mid \boldsymbol\xi)
&=-A\big|\mathcal H^{(\xi)}_{N,K,a,S}(\bm\sigma)\big|
-B \big|\mathcal H^{(\eta)}_{N,K,a,S}(\bm\sigma)\big|\,.
\label{eq:AB_hamiltonian}
\end{align}
This parametrization isolates how the \(\eta\)-sector reshapes retrieval: increasing \(B\) selectively enhances the term that can (for integer \(S\)) penalize multitasking.
The numerical outcomes are summarized in Fig.~\ref{fig:comparison_terms}: when \(0\in\Omega\), a sufficiently large \(B\) drives the system toward a pure state with a single condensed overlap, whereas when \(0\notin\Omega\) the same mechanism is much less effective and multiple overlaps can remain stable over a broader range of \(B\).
}

%%%%%%%%%%%%%%%%%%%%%%%%%%%%%%%%%%%%%%%%%%%%%%%%%%%%%%
%%%%%%%%%%%%%%%%%% SERIAL %%%%%%%%%%%%%%%%%%%%%%%%%%%%
%%%%%%%%%%%%%%%%%%%%%%%%%%%%%%%%%%%%%%%%%%%%%%%%%%%%%%
\section{Serial processing in the high load regime}
\label{sec:serial}
In this subsection we analyze these two neural networks, namely the BEG and GS models, confined to the high storage regime (namely when the number of patterns $K$ scales linearly with the system size $N$, i.e. $\delta=1$ thus $K=\alpha N$ with $\alpha>0$). With such a huge number of patterns $K$, if the dilution parameter $a$ does not scale with the volume (i.e. $a \neq a(N)$) there is solely serial retrieval and a study of these networks in such a regime has already exploited in the past via the replica trick \cite{BlumeCambelNN1,BlumeCambelNN2}. As a result, purpose of this introductory inspection is not to explore more complex tasks (w.r.t. serial pattern recognition) these networks may enjoy, but rather to confirm the heuristic picture achieved by re-obtaining the exact expression for the network's free energy by relying upon the (mathematically rigorous) Guerra's interpolation: this technique will be introduced in the next Sec. \ref{subsec:stat_mecc_app_BEG}  and, in Sec. \ref{sec:RS-BEGNN}, we use it to confirm sharply the existing replica symmetric picture.

\subsection{Blume-Emery-Griffiths model: statistical mechanics approach}
\label{subsec:stat_mecc_app_BEG}

As usual at work with neural networks whose patterns are random (hence they are approximately orthogonal for large $N$), we have to split one pattern --candidate to be retrieved-- from all the remaining ones, that --instead-- play as quenched noise against the retrieval of the chosen one. Without loss of generality (given the arbitrary of the pattern labels), we focus on the retrieval of the first pattern ($\mu=1$): as a result, we split the {\em signal term} ($\mu = 1$) from the {\em noise terms} ($\mu > 1$) in the Hamiltonian, that is, we write the partition function of the BEG network as 
%, while treating all other patterns ($\mu>1$) as noise, more precisely, as \emph{slow noise}, in contrast with the \emph{fast noise} induced by thermal fluctuations.  
%With this convention, the partition function reads
\begin{align}
\mathcal{Z}_{N,K,a}(\beta\mid\boldsymbol{\xi})
= \sum_{\{\boldsymbol\sigma\}} \exp\Bigg\{
\frac{\beta N}{2}\,a\big(m_1^2+(1-a)M_1^2\big)
+ \frac{\beta N}{2}\,a \sum_{\mu>1}^K \big(m_\mu^2+(1-a)M_\mu^2\big)
\Bigg\}.
\label{eq:part_funct_BEG}
\end{align}

To handle the \textit{noise} terms ($\mu>1$), we exploit the duality of representation between (recurrent) Hebbian networks and (two-layer) Restricted Boltzmann Machines (RBMs)  \cite{Dualita1,Dualita2,Dualita3,Dualita4}, which also holds for the BEG networks as we  show  hereafter (a snapshot of the equivalence is drawn in Figure \ref{fig:NNRBMscheme}).  
\newline
By applying two independent Hubbard-Stratonovich transformations, one for the $\xi$-dependent contributions and one for the $\eta$-dependent ones, \resub{namely for $B\in\mathbb R$,
\begin{align}
\label{eq:hubbard}
\exp\!\left(\frac{B^2}{2}\right)
= \int \frac{dz}{\sqrt{2\pi}}\,e^{-z^2/2 + Bz},
\end{align}}
the partition function \eqref{eq:part_funct_BEG} can be rewritten as
\begin{align}
\mathcal{Z}_{N,K,a}(\beta\mid \boldsymbol{\xi})
&= \sum_{\{\boldsymbol\sigma\}} \int \mathcal D(z\tilde z)\,
\exp\Bigg[
\frac{\beta N}{2}\,a\big(m_1^2+(1-a)M_1^2\big)\notag\\
&\qquad+ \sqrt{\beta Na}\sum_{\mu>1}^K m_\mu z_\mu
+ \sqrt{\beta N(1-a)}\sum_{\mu>1}^K M_\mu\tilde z_\mu
\Bigg],
\label{eq:partition_split_BEG}
\end{align}
where
\begin{align}
\mathcal{D}(z\tilde z) = \prod\limits_{\mu>1}^K\dfrac{d z_\mu d \tilde z_{\mu}}{2\pi}\exp\left[-\dfrac{1}{2}\SOMMA{\mu>\resub{1}}{K} \Big(z_\mu^2 + \tilde{z}_\mu^2\Big)\right].
\end{align}
\resub{Now, replacing the definition of the order parameters in Remark \ref{rem:order},}
\textcolor{black}{we get}
\begin{align}
\mathcal{Z}_{N,K,a}(\beta\mid \boldsymbol{\xi})
&= \sum_{\{\boldsymbol\sigma\}} \int \mathcal D(z\tilde z)\,
\exp\Bigg[
\frac{\beta N}{2}\,a\big(m_1^2+(1-a)M_1^2\big)\notag\\
&\qquad+ \sqrt{\frac{\beta}{N}} \sum_{\mu>1}^K\sum_{i=1}^N \frac{\xi_i^\mu}{\sqrt{a}}\,\sigma_i z_\mu
+ \sqrt{\frac{\beta}{N}} \sum_{\mu>1}^K\sum_{i=1}^N \frac{\eta_i^\mu}{\sqrt{a(1-a)}}\,\sigma_i^2 \tilde z_\mu
\Bigg],
\end{align}
For computational convenience, in the noisy terms, we define normalized pattern variables
\begin{equation}
\tilde\xi_i^\mu := \frac{\xi_i^\mu}{\sqrt{a}}, \qquad 
\tilde\eta_i^\mu := \frac{\eta_i^\mu}{\sqrt{a(1-a)}},
\label{eq:xi_eta_tilde}
\end{equation}
so that the partition function takes the compact form
\begin{align}
\mathcal{Z}_{N,K,a}(\beta\mid \boldsymbol{\xi})
&= \sum_{\{\boldsymbol\sigma\}} \int \mathcal D(z\tilde z)\,
\exp\Bigg[
\frac{\beta N}{2}\,a\Big(m_1^2(\boldsymbol\sigma)+(1-a)M_1^2(\boldsymbol\sigma)\Big)\notag\\
&\qquad+ \sqrt{\frac{\beta}{N}} \sum_{\mu>1}^K \sum_{i=1}^N \tilde\xi_i^\mu \sigma_i z_\mu
+ \sqrt{\frac{\beta}{N}} \sum_{\mu>1}^K \sum_{i=1}^N \tilde\eta_i^\mu \sigma_i^2 \tilde z_\mu
\Bigg].
\label{eq:parition_start_BEG}
\end{align}

\begin{figure}[t!]
    \centering
    \includegraphics[width=15cm]{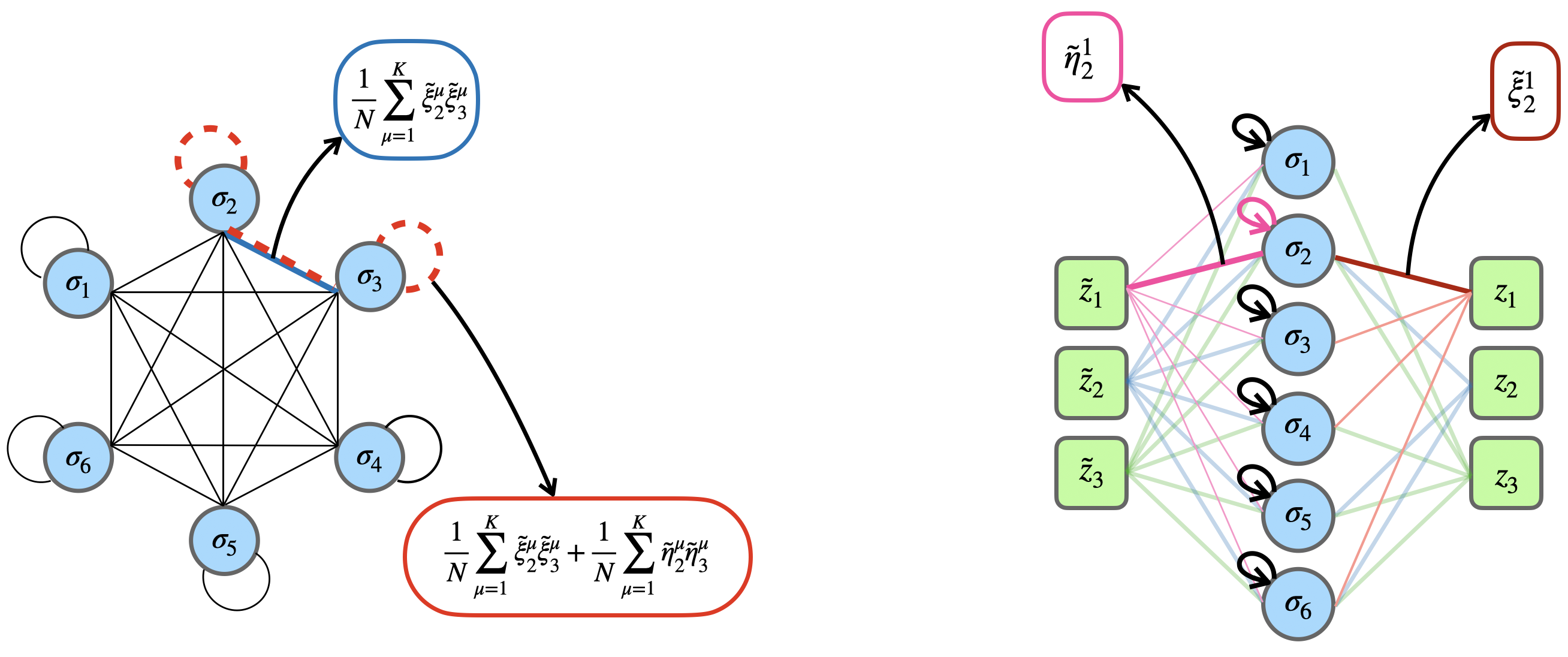}
    \caption{Schematic representation of the BEG neural network (left) and the corresponding generalized restricted Boltzmann machine (right). In the recurrent neural network on the left, for each pair of the neurons there are two contributions within their couplings: the former (highlighted in blue) accounts for the interactions with $\bm \xi$ while the latter accounts for those labeled by $\bm \eta$ (highlighted in red). Instead, in the dual RBM, each neuron $\sigma$ interacts with two different layers --of the same width $K$-- built off by Gaussian neurons $\bm z$ and $\bm \tilde z$: one layer conveys interactions mitigated by $\bm \xi$, the other accounts for those pertaining to the $\bm \eta$ as prescribed accordingly to eq. \eqref{eq:parition_start_BEG}.  Note, in both the networks, the presence of autapses on the $\bm\sigma$ layer, that is, self-couplings in statistical mechanical jargon \cite{Gallius,GalliusTwo}.}
    \label{fig:NNRBMscheme}
\end{figure}

The next step is to rely upon the universality of quenched disorder, as established in \cite{CarmonaWu} for one-party spin glasses and in \cite{Genovese} for bipartite spin glasses (such as the generalized RBM considered above) \cite{B-war2}: this universality property allows  to replace, in the noisy terms of the Hamiltonian,  the discrete-valued quenched variables $\tilde{\boldsymbol\xi}$ and $\tilde{\boldsymbol\eta}$ by standard Gaussians. In the asymptotic limit $N \to \infty$, due to the summations that are always involved over these variables when they appear (and the usage of  the CLT that automatically kicks in), the expression of the quenched statistical pressure is left unaffected by this change of variables. Yet, the reward for us is that Gaussian disorder permits the direct use of Guerra’s interpolation scheme, that heavily relies on Wick's Theorem (also known as Stein's Lemma).  

Furthermore, although $\tilde{\boldsymbol\xi}$ and $\tilde{\boldsymbol\eta}$ are constructed from the same underlying patterns, the Gaussian substitution can be performed independently for the two families. A sketch of the argument is as follows.

For the normalized disorder variables $\{\tilde\xi_i^\mu\}$ and $\{\tilde\eta_i^\mu\}$ we have
\begin{equation}
\mathbb E[\tilde\xi_i^\mu]=0,\quad \mathrm{Var}[\tilde\xi_i^\mu]=1,
\qquad
\mathbb E[\tilde\eta_i^\mu]=0,\quad \mathrm{Var}[\tilde\eta_i^\mu]=1.
\end{equation}
Moreover, by symmetry, $\mathrm{Cov}(\tilde\xi_i^\mu,\tilde\eta_i^\mu)=0$ (see Eq. \eqref{eq:cov}).
\\
Since we are not interested in single entries but only in their collective contribution through sums over $i$ and $\mu$, the Central Limit Theorem applies. In the limit $N\to\infty$, the joint disorder vector
\begin{align}
Z_i^\mu:=(\tilde\xi_i^\mu,\tilde\eta_i^\mu)\in\mathbb R^2
\end{align}
behaves as i.i.d.\ bivariate Gaussian with mean zero and covariance
\begin{align}
\mathrm{Cov}(Z_i^\mu)=\begin{pmatrix}
\mathrm{Var}[\tilde\xi_i^\mu] & \mathrm{Cov}[\tilde\xi_i^\mu,\tilde\eta_i^\mu]\\
\mathrm{Cov}[\tilde\xi_i^\mu,\tilde\eta_i^\mu] & \mathrm{Var}[\tilde\eta_i^\mu]
\end{pmatrix}
=
\begin{pmatrix}
1 & 0\\
0 & 1
\end{pmatrix}
=I_2,
\end{align}
%and the centered Gaussian with covariance $I_2$
which has independent components.  

Thus, in the thermodynamic limit, $\tilde{\boldsymbol\xi}$ and $\tilde{\boldsymbol\eta}$ can be replaced by two independent families of standard Gaussian variables. 

Thus, exploiting the universality of quenched disorder, the partition function \eqref{eq:parition_start_BEG} now depends on three distinct families of quenched random fields: the retrieved pattern $\boldsymbol{\xi}^1$ (kept discrete), and two noise contributions generated by the remaining patterns, namely $\tilde{\boldsymbol\xi}^\mu$ and $\tilde{\boldsymbol\eta}^\mu$ with $\mu>1$, which can be treated as independent Gaussian variables:
\begin{equation}
\begin{array}{lll}
     \mathcal{Z}_{N,K,a}(\beta\mid \boldsymbol{\xi}^1, \tilde{\boldsymbol\xi}, \tilde{\boldsymbol\eta})
     &= \SOMMA{\{\boldsymbol\sigma\}}{} \int \mathcal D(z,\tilde z)\,
     \exp\Bigg[\dfrac{\beta N}{2}\,a\Big(m_1^2(\boldsymbol\sigma)+(1-a)M_1^2(\boldsymbol\sigma)\Big)
     \\\\
     &\quad+ \sqrt{\dfrac{\beta}{N}} \SOMMA{\mu>1}{K} \SOMMA{i=1}{N} \tilde\xi_i^\mu \sigma_i z_\mu 
     + \sqrt{\dfrac{\beta}{N}} \SOMMA{\mu>1}{K} \SOMMA{i=1}{N} \tilde\eta_i^\mu \sigma_i^2 \tilde z_\mu \Bigg].
\end{array}
\label{eq:parition_start_BEG_1}
\end{equation}

This representation will serve as the starting point for the statistical-mechanics analysis.

\begin{remark}
 {
The integral representation allows us to connect the fully-connected BEG network to a dual representation in RBM form.  
To better understand the duality that emerges at the statistical-mechanics level between the BEG neural network and its RBM counterpart, we refer to Fig.~\ref{fig:NNRBMscheme}, which shows a toy example with six Ising spins (left) and the corresponding RBM structure (right).}  

\par\medskip
 {For clarity, let us recall the Hamiltonian of the BEG neural network:}
\begin{align}
\mathcal H^{(BEG)}_{N,K,a}(\boldsymbol{\sigma}\mid \boldsymbol{\xi})
&=-\frac{1}{2Na}\sum_{i,j=1}^N\sum_{\mu=1}^K \xi_i^\mu\xi_j^\mu\,\sigma_i\sigma_j
-\frac{1}{2Na(1-a)}\sum_{i,j=1}^N\sum_{\mu=1}^K \eta_i^\mu\eta_j^\mu\,\sigma_i^2\sigma_j^2.
\end{align}
 {Its partition function, \resub{in particular its integral representation exploiting \eqref{eq:hubbard}}, can be written as}
\begin{equation}
\label{eq:catena}
    \begin{array}{lll}
        &\mathcal{Z}_{N,K,a}(\beta\mid \boldsymbol\xi) 
        = \sum_{\{\bm \sigma\}} e^{-\beta \mathcal{H}^{\textrm{BEG}}_{N,K,a}(\boldsymbol{\sigma}|\boldsymbol{\xi})} 
        \\\\
        &= \sum_{\{\boldsymbol\sigma\}} \int D(z \tilde z)\,
     \exp\Bigg[\dfrac{\beta}{\sqrt{N}}\sum_{\mu=1}^K \sum_{i=1}^N \tilde\xi_i^\mu \sigma_i z_\mu 
     + \dfrac{\beta}{\sqrt{N}}\sum_{\mu=1}^K \sum_{i=1}^N \tilde\eta_i^\mu \sigma_i^2 \tilde z_\mu \Bigg]
     \\\\
     &= \sum_{\{\boldsymbol\sigma\}} \int D(z \tilde z)\,
     e^{-\beta \mathcal{H}^{\textrm{RBM}}_{N,K,a}(\boldsymbol{\sigma}, \bm z, \tilde{\bm z}\mid \tilde{\boldsymbol{\xi}},\tilde{\boldsymbol{\eta}})}
     =\mathcal{Z}^{\textrm{RBM}}_{N,K,a}(\beta; \tilde{\boldsymbol{\xi}}, \tilde{\boldsymbol{\eta}}),
    \end{array}
\end{equation}
\resub{where $z_\mu$ and $\tilde z_\mu$ are now supposed $\mathcal{N}(0, \beta^{-1})$, $\mu=1, \hdots, K$.}
 {The exponent in the last line of \eqref{eq:catena} defines the RBM Hamiltonian:
\begin{align}
\mathcal{H}^{\textrm{RBM}}_{N,K,a}(\boldsymbol{\sigma}, \bm z, \tilde{\bm z}\mid \tilde{\boldsymbol{\xi}},\tilde{\boldsymbol{\eta}})
= -\frac{1}{\sqrt{N}}\sum_{\mu=1}^K \sum_{i=1}^N \tilde\xi_i^\mu \sigma_i z_\mu 
     - \frac{1}{\sqrt{N}}\sum_{\mu=1}^K \sum_{i=1}^N \tilde\eta_i^\mu \sigma_i^2 \tilde z_\mu.
\end{align}
This corresponds to the cost function of a RBM with a visible layer of $N$ neurons $\{\sigma_i\}$ and two hidden layers: the first consisting of $K$ Gaussian units $\{z_\mu\}$, the second of $K$ Gaussian units $\{\tilde z_\mu\}$, both with Gaussian priors.  
In this mapping, the pattern entries $\xi_i^\mu$ in the BEG model act as the synaptic weights connecting $\sigma_i$ to $z_\mu$ in the RBM, while the $\eta_i^\mu$ provide the weights connecting $\sigma_i^2$ to $\tilde z_\mu$. }
\end{remark}

We have already introduced the order parameters that capture  the retrieval cabilities of the network, namely the Mattis magnetizations with respect to $\boldsymbol\xi$ and $\boldsymbol\eta$ (respectively, in  \eqref{eq:m_mattis_order} and \eqref{eq:M_order}): to complete the set of the required order parameters, we now need two more observables able to capture also the {\em glassy contribution} of the quenched noise to the system. These plays a role solely in the high storage regime, where the amount of patterns scales as the amount of neurons, hence, only for this section,  we introduce also the following replica overlaps.

\begin{definition}[Replica overlaps]
\label{def:ordparam}
Given two replicas $(m,l)$ of the network, the overlap $q_{m,l}$ quantifies the similarity among their neural configurations $\boldsymbol\sigma^{(m)}$ and $\boldsymbol\sigma^{(l)}$. In the \resub{same} way, the overlap $\tilde q^{m,l}_N$ measures the similarity between their {\em activities}, namely the average number of the firing neurons: 
\begin{align}
\label{eq:order_q}
q_{m,l}(\bm\sigma) := \frac{1}{N}\sum_{i=1}^N \sigma_i^{(m)}\sigma_i^{(l)}. 
\qquad 
\tilde q_{m,l}(\bm\sigma) := \frac{1}{N}\sum_{i=1}^N (\sigma_i^{(m)})^2 (\sigma_i^{(l)})^2.
\end{align}
Furthermore, $p_{m,l}$ and $\tilde p_{m,l}$ are the overlaps between replicas of the Gaussian neurons $z$ and $\tilde z$ that appear in the integral representation of the network (see Fig. \ref{fig:NNRBMscheme}) and they are defined as
\begin{align}
\label{eq:order_p}
p_{m,l}(\bm z) := \frac{1}{K}\sum_{\mu=1}^K z_\mu^{(m)} z_\mu^{(l)}, 
\qquad
\tilde p_{m,l}(\tilde{\bm{z}}) := \frac{1}{K}\sum_{\mu=1}^K \tilde z_\mu^{(m)} \tilde z_\mu^{(l)}.
\end{align}  
These are standard spin-glass order parameters, widely used in the typical analyses of neural network models \cite{Amit,Coolen}.
\newline
Note that, for a given order parameter $X_N$, we define its mean with a bar over its symbol and such a value is expected in the asymptotic limit, namely
\[
\lim_{N\to\infty}\langle X_N\rangle := \bar{X},
\]
where $\langle\cdot\rangle$ denotes the global Boltzmann average introduced in Def.~\ref{def:global}.
\end{definition}

Note that, for simplicity, we started to omit the explicit dependence of the mean values of the order parameters on their microscopic variables (i.e. $\boldsymbol\sigma$, $\boldsymbol z$, $\boldsymbol{\tilde z}$, $\boldsymbol\xi$, and $\boldsymbol\eta$), as this information does not add much to our comprehension, while we make explicit their dependence (through the global averages) on $\beta$, $N$, $K$ and $a$.

\begin{remark}[Self-overlap]
In the classical Ising case \cite{sherrington1975solvable}, where spins take only the binary values $\{\pm 1\}$, one has identically $\tilde q_{m,l}(\bm\sigma)\equiv 1$.   Here, where spins can also take the value $0$, $\tilde q^{m,l}$ carries non-trivial information because it quantifies how many neurons are behaving accordingly to the configuration represented by the pattern under retrieval. 
\newline
Moreover, note that $q_{1,1}(\bm\sigma)=\tilde q_{1,1}(\bm\sigma)$, since $(\sigma_i^{(1)})^4=(\sigma_i^{(1)})^2$ for $\sigma_i\in\{-1,0,+1\}$, yet, while this identity works here for the BEG neural network, it will no longer hold once we extend the support of the neurons in the GS generalization of the BEG neural network.
\end{remark}

\subsection{Blume-Emery-Griffiths model: replica symmetric picture}
\label{sec:RS-BEGNN}

Once the model and the associated statistical mechanical framework have been introduced, we are ready to apply Guerra’s interpolation in order to obtain an explicit expression for its quenched free energy. In these regards, in this section we make explicit usage of the next
\begin{assumption}[Replica Symmetry]
\label{RS-assumption}
In the \emph{Replica Symmetric} (RS) approximation, we assume that the variances of the order parameters vanish in the thermodynamic limit. That is, all order parameters are self-averaging around their unique mean values.   Formally, for a given order parameter $X_N$, we assume that
\begin{align}
\big\langle (X-\bar X)^2 \big\rangle_N \xrightarrow[N\to\infty]{} 0,
\end{align}
where $\bar X$ denotes the expectation of $X$.
\end{assumption}

Due to the RS Assumption \ref{RS-assumption}, it is trivial to prove that the following relations hold in the thermodynamic limit
\begin{proposition}[RS identities]
\label{prop:RS-identities}
Under Assumption~\ref{RS-assumption}, the following relations hold in the thermodynamic limit:
\begin{align}
\lim_{N\to\infty}\big\langle m_1^2 - 2\bar m\, m_1 \big\rangle_N &= -\bar m^2,
&\qquad
\lim_{N\to\infty}\big\langle M_1^2 - 2\bar M\, M_1 \big\rangle_N &= -\bar M^2,
\\[4pt]
\lim_{N\to\infty}\big\langle q_{12}p_{12} - \bar p\,q_{12} - \bar q\,p_{12} \big\rangle_N &= -\bar q\,\bar p,
&\qquad
\lim_{N\to\infty}\big\langle \tilde q_{12}\tilde p_{12} - \bar{\tilde p}\,\tilde q_{12} - \bar{\tilde q}\,\tilde p_{12} \big\rangle_N &= -\bar{\tilde q}\,\bar{\tilde p},
\\[4pt]
\lim_{N\to\infty}\big\langle p_{11} - \bar P \big\rangle_N &= -\bar P,
&\qquad
\lim_{N\to\infty}\big\langle \tilde q_{11}\tilde p_{11} - \bar{\tilde P}\,\tilde q_{11} - \bar{\tilde Q}\,\tilde p_{11} \big\rangle_N &= -\bar{\tilde Q}\,\bar{\tilde P}.
\end{align}
Here $\bar m$, $\bar M$, $\bar q$, $\bar p$, $\bar{\tilde q}$, $\bar{\tilde p}$, $\bar{\tilde Q}$, and $\bar{\tilde P}$ denote the expectation values of the corresponding order parameters $m_1$, $M_1$, $q_{12}$, $p_{12}$, $\tilde{q}_{11}$, $\tilde{p}_{11}$, $\tilde{q}_{12}$ and $\tilde{p}_{12}$.
\end{proposition}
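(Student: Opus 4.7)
The plan is to show that every identity in Proposition~\ref{prop:RS-identities} is, under Assumption~\ref{RS-assumption}, a one-line algebraic rearrangement, once we exploit two companion facts that the Replica Symmetry assumption delivers at no extra cost: (i) each order parameter $X_N$ has $\langle X_N\rangle_N\to \bar X$, since its variance vanishes; and (ii) joint self-averaging of any pair $(X_N,Y_N)$ of order parameters follows from the Cauchy--Schwarz inequality, because
\begin{equation}
\bigl|\langle (X_N-\bar X)(Y_N-\bar Y)\rangle_N\bigr|^{2}
\le \langle (X_N-\bar X)^{2}\rangle_N\,\langle (Y_N-\bar Y)^{2}\rangle_N \xrightarrow[N\to\infty]{}0,
\end{equation}
whence $\langle X_N Y_N\rangle_N\to \bar X\,\bar Y$. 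These two facts are exactly the inputs needed to collapse every expression in the statement.

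First I would dispatch the quadratic Mattis identities. For the condensed overlap on pattern $\mu=1$ I would use the trivial decomposition
\begin{equation}
m_1^{2}-2\bar m\, m_1 \;=\; (m_1-\bar m)^{2}-\bar m^{2},
\end{equation}
take the global Gibbs/quenched average, invoke (i) on the first summand (which then vanishes in the thermodynamic limit by the RS Assumption on $m_1$), and read off the value $-\bar m^{2}$. The identity for $M_1$ is obtained verbatim by substituting $(m_1,\bar m)\mapsto(M_1,\bar M)$ and using that the RS hypothesis also applies to the $\eta$-Mattis magnetization.

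Next I would treat all the cross-overlap identities by a single polarization trick. Namely, for any pair of order parameters $(X_N,Y_N)$ with means $\bar X,\bar Y$ one has
\begin{equation}
X_N Y_N - \bar Y\, X_N - \bar X\, Y_N \;=\; (X_N-\bar X)(Y_N-\bar Y)-\bar X\,\bar Y.
\end{equation}
Applying $\langle\cdot\rangle_N$, the first term on the right-hand side vanishes by the Cauchy--Schwarz control displayed above, leaving the constant $-\bar X\bar Y$. Specializing this to $(X_N,Y_N)=(q_{12},p_{12})$, $(\tilde q_{12},\tilde p_{12})$ and $(\tilde q_{11},\tilde p_{11})$ delivers the three mixed-overlap identities in the statement. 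The single-parameter assertion for $p_{11}$ is then just the direct rephrasing of (i) applied to $p_{11}$.

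There is no real technical obstacle at this stage: once Assumption~\ref{RS-assumption} is granted, each of the six identities is a one-step manipulation, and the use of Cauchy--Schwarz is routine. The genuine difficulty is not inside the proposition but upstream of it, namely in establishing that the variance of each order parameter really does vanish in the thermodynamic limit; this is precisely what replica-symmetry breaking would spoil, and it is where the concentration/overlap control arguments based on Guerra's interpolation in the subsequent sections do the actual work. Accordingly, I would keep the proof of Proposition~\ref{prop:RS-identities} at the level of the elementary bookkeeping above, and defer any deeper justification of the vanishing of fluctuations to Assumption~\ref{RS-assumption} itself.
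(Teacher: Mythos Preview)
Your approach is correct and is exactly what the paper has in mind: the paper simply asserts that, under the RS assumption, the identities are ``trivial to prove,'' and your completing-the-square/polarization manipulations together with the Cauchy--Schwarz bound make this explicit.

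One point to flag: your treatment of the fifth identity does not actually work as written. Fact~(i) gives $\langle p_{11}\rangle_N\to\bar P$, so $\langle p_{11}-\bar P\rangle_N\to 0$, not $-\bar P$; the assertion is therefore not ``just the direct rephrasing of (i).'' The identity as printed is a typo in the paper: comparing it with the $t$-derivative computation in Lemma~\ref{lemma1} (and with its mirror, the sixth identity), the intended statement is
\[
\lim_{N\to\infty}\bigl\langle q_{11}p_{11}-\bar P\,q_{11}-\bar Q\,p_{11}\bigr\rangle_N=-\bar Q\,\bar P,
\]
which then follows from your polarization argument applied to the pair $(q_{11},p_{11})$. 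You should note this discrepancy rather than let the printed version pass.
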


We are now in a position to introduce the interpolation scheme à la Guerra \cite{GuerraNN}:  note that, rather than dealing directly with the BEG network coded by eq. \eqref{eq:Hamiltonian}, it is more convenient to deal with its integral representation \eqref{eq:parition_start_BEG}, which is more suitable for the interpolation framework as it shines in the next 
\begin{definition}
Let $t \in [0,1]$ be  an interpolating parameter and $A, \  B, \ C, \  E, \ F, \ G,\ H_1,\ H_2,\ \varphi,\ \psi$ constants to be set a posteriori. The interpolating partition function is defined as 
    \begin{align}
    \mathcal{Z}_{N, K, a}&(\beta;t \vert \boldsymbol{\xi}^1, \tilde{\boldsymbol\xi}, \tilde{\boldsymbol\eta}, \bm K, \bm J) = \sum_{\bm \sigma} \int \mathcal{D}(z \tilde z) \exp \left\{ t\dfrac{\beta N a}{2} m_1^2(\bm\sigma) + t\dfrac{\beta N a (1-a)}{2}  M_1^2(\bm\sigma) \right. \notag \\
    &+ \sqrt{\dfrac{\beta t}{N}} \sum_{i=1}^N \sum_{\mu >1}^K \tilde{\xi}_i^\mu \sigma_i z_\mu  +  \sqrt{\dfrac{\beta t}{N}} \sum_{i=1}^N \sum_{\mu >1}^K \tilde{\eta}_i^\mu \sigma_i^2 \tilde z_\mu + (1-t) N\Big(\psi  m_1(\bm\sigma) + \varphi M_1(\bm\sigma) \Big)\notag \\
    &+ \dfrac{(1-t)}{2} \left[ F \sum_{\mu=2}^K z_\mu^2 + G \sum_{\mu=2}^K \tilde z_\mu^2 + (H_1+H_2) \sum_{i=1}^N \sigma_i^2 \right] \notag \\
    &\left.+ \sqrt{1-t} \left[ A \sum_{i=1}^N K_i \sigma_i + B \sum_{\mu=2}^K \bar{J}_\mu z_\mu + C \sum_{\mu=2}^K \tilde J_\mu \tilde z_\mu + E \sum_{i=1}^N \tilde K_i \sigma_i^2\right]\right\}, 
    \label{eq:interp}
\end{align}
where we have denoted $\bm J=(J_\mu, \tilde J_\mu)_{\mu=1, \hdots, K}$, $\bm K=(K_i, \tilde K_i)_{i=1, \hdots, N}$ and $K_i, \tilde K_i \sim \mathcal{N}(0,1)$, $i=1, \hdots, N$,  $\bar J_\mu, \tilde J_\mu \sim \mathcal{N}(0,1)$. 

Once we have $\mathcal{Z}_{N, K, a}(\beta;t \vert \boldsymbol{\xi}^1, \tilde{\boldsymbol\xi}, \tilde{\boldsymbol\eta}, \bm K, \bm J)$, the interpolating quenched statistical pressure simply reads as
\begin{equation}\label{Ainterpolante}
\mathcal{A}_{N, K, a}(\beta;t)= \frac{1}{N}\mathbb{E} \mathcal{Z}_{N, K, a}(\beta;t \vert \boldsymbol{\xi}^1, \tilde{\boldsymbol\xi}, \tilde{\boldsymbol\eta}, \bm K, \bm J),
\end{equation}
where now $\mathbb{E}= \mathbb{E}_{\bm\xi^1}\mathbb{E}_{\tilde{\bm\xi}}\mathbb{E}_{\tilde{\bm\eta}}\mathbb{E}_{\bm K}\mathbb{E}_{\bm J}$.
\\
Furthermore $\mathcal{A}^{RS}_{a}(\alpha,\beta;t) = \lim_{N \to \infty}\mathcal{A}_{N, K, a}(\beta;t)$.
\end{definition}

\begin{remark}
    We stress that the interpolating partition function, and consequently the interpolating quenched statistical pressure, trivially recovers two important limit models in the $t \to 0$ and $t \to 1$ limits, namely 
    \begin{itemize}
        \item For $t=1$, \eqref{eq:interp} returns the corresponding integral representation of the partition function of the original model \eqref{eq:Hamiltonian}; 
        \item For $t=0$, \eqref{eq:interp} reduces to a set of effective one-body models whose quenched statistical pressure is trivial to compute. 
    \end{itemize}
The particular values of these two limits let us understand the interpolation criterion: while we are able to compute everything at $t=0$ but we want everything at $t=1$, we can apply the Fundamental Theorem of Calculus (FTC) on the interpolating quenched statistical pressure to connect these two extrema, namely 
    \begin{align}
        \mathcal{A}^{RS}_{a}(\alpha,\beta)=\mathcal{A}^{RS}_{a}(\alpha,\beta;t=1)&=\lim\limits_{N\to\infty}\mathcal{A}_{N, K, a}(\beta;t=1)\notag \\
        &=  \lim\limits_{N\to\infty}\left(\mathcal{A}_{N, K, a}(\beta;t=0)+ \int_0^1 d_t \mathcal{A}_{N, K, a}(\beta;t)\Big\vert_{t=s} ds\right).
        \label{eq:FTC}
    \end{align}
\end{remark}
Note that, in the following, in order to carry all the computations involved in the above scheme, it will be mandatory the usage of Assumption \ref{RS-assumption} regarding the self-averaging behavior of the order parameters in the thermodynamic limit: the whole results in the main theorem of this Section that is
\begin{theorem}
\label{th:BEG_RS_final_ret_seriale}
The thermodynamic limit of the quenched statistical pressure of the Blume-Emery-Griffiths neural network coded by the Hamiltonian \eqref{eq:Hamiltonian}, under the replica symmetric assumption, reads explicitly as 
\begin{align}
    \mathcal{A}^{RS}_{\alpha, a}(\beta)=& \dfrac{\alpha }{2} \dfrac{\beta \qb }{\left(1-\beta (\Qb - \qb)\right)} - \dfrac{\alpha}{2} \log\left(1-\beta(\Qb - \qb)\right) + \dfrac{\alpha}{2}\dfrac{\beta \bar{\qt}}{\left(1-\beta (\Qb - \bar{\qt})\right)}\notag \\
    &- \dfrac{\alpha}{2} \log \left(1-\beta(\Qb - \bar{\qt})\right)+ \mathbb{E} \log \left\{ 1+2 \exp \left[ g_1(\xi^1, \bar{J}) \right] \cosh \left[ g_2(\xi^1, J) \right]\right\} \label{eq:ARS} \\
    &-\dfrac{\beta}{2} a \mb_1^2 - \dfrac{\beta }{2} a(1-a)\Mb_1^2-\dfrac{\beta \alpha }{2} \Qb\left( \Pb +\bar{\Pt}\right) +\dfrac{\beta \alpha }{2} \left(\pb\qb +\bar{\pt}\bar{\qt} \right),\notag
\end{align}
where $\mathbb{E}=\mathbb{E}_{\xi^1}\mathbb{E}_J \mathbb{E}_{\tilde J}$ and
\begin{align}
    g_1(\xi^1, \bar{J}):=& \beta \Mb [(\xi^1)^2-a] + \sqrt{\beta \alpha \bar{\pt}} \tilde J + \dfrac{\beta \alpha}{2}\left( \Pb +\bar{\Pt} \right) - \dfrac{\beta \alpha}{2}\left(\pb  + \bar{\pt}\right), \\
    g_2(\xi^1, J):=& \beta \mb \xi^1 + \sqrt{\beta \alpha \pb} J,
\end{align}
and whose order parameters fulfill the following self-consistency equations
\begin{equation}
    \mb_1= \mathbb{E} \left[ \dfrac{\tanh \left[ g_2(\xi^1, J) \right]}{1+\frac{1}{2} e^{-g_1(\xi^1, \bar{J})} \cosh^{-1} \left[ g_2(\xi^1, J) \right] } \left(\dfrac{\xi^1}{a} \right)\right],
\end{equation}
\begin{equation}
    \Mb_1=  \mathbb{E} \left[ \dfrac{1}{1+\frac{1}{2} e^{-g_1(\xi^1, \bar{J})} \cosh^{-1} \left[ g_2(\xi^1, J) \right]  } \left(\dfrac{(\xi^1)^2-a}{a(1-a)}\right)\right], 
\end{equation}
\begin{equation}
    \qb= \mathbb{E} \left[ \dfrac{\tanh^2 \left[ g_2(\xi^1, J) \right] }{(1+\frac{1}{2} e^{-g_1(\xi^1, \bar{J})} \cosh^{-1} \left[ g_2(\xi^1, J) \right] )^2 } \right],
\end{equation}
\begin{equation}
    \bar{\qt}= \mathbb{E} \left[ \dfrac{1}{(1+\frac{1}{2} e^{-g_1(\xi^1, \bar{J})} \cosh^{-1} \left[ g_2(\xi^1, J) \right])^2  } \right],
\end{equation}
\begin{equation}
    \Qb = \mathbb{E} \left[  \dfrac{1}{1+\frac{1}{2} e^{-g_1(\xi^1, \bar{J})} \cosh^{-1} \left[ g_2(\xi^1, J) \right] } \right],
\end{equation}
\begin{equation}
    \pb = \dfrac{\beta \qb }{\left( 1-\beta (\Qb-\qb)\right)^2},\qquad \qquad 
    \bar{\pt}=\dfrac{\beta \qt}{\left( 1-\beta (\Qb-\bar{\qt})\right)^2},
\end{equation}
\begin{equation}
    \Pb + \bar{\Pt} = \pb + \bar{\pt} + \dfrac{1}{1-\beta(\Qb-\qb)} + \dfrac{1}{\left( 1-\beta(\Qb-\bar{\qt})\right)}.
\end{equation}
\end{theorem}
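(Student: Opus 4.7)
The plan is to exploit the interpolation formula \eqref{eq:interp} combined with the fundamental theorem of calculus \eqref{eq:FTC}: I would first evaluate $\mathcal{A}_{N,K,a}(\beta;t=0)$ in closed form, then compute $d_t\mathcal{A}_{N,K,a}(\beta;t)$ and tune the auxiliary constants $A,B,C,E,F,G,H_1,H_2,\varphi,\psi$ so that the derivative, along the trajectory $t\in[0,1]$, reduces up to RS-vanishing remainders to a $t$-independent constant that can be integrated trivially.

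First I would compute $\mathcal{A}_{N,K,a}(\beta;t=0)$. At $t=0$ the system fully factorizes: each $\sigma_i\in\{-1,0,+1\}$ sees an effective one-body field built from the retrieval biases $\psi,\varphi$ weighted by $\xi_i^1$ and $\eta_i^1$, by the Gaussian noise contributions $AK_i, E\tilde K_i$, and by the activity shifts $H_1,H_2$. Summing over $\sigma_i$ produces precisely the $1+2\exp[g_1]\cosh[g_2]$ factor appearing in \eqref{eq:ARS}, with $g_1,g_2$ encoding respectively the $M$-channel (square-valued) and $m$-channel (linear-valued) effective fields. Independently, each pair $(z_\mu,\tilde z_\mu)$ is Gaussian with quadratic form tuned by $F,G$ and linear source $B\bar J_\mu, C\tilde J_\mu$; completing the square and integrating yields logarithms of the form $-\tfrac{\alpha}{2}\log(1-\beta(\Qb-\qb))$ and $-\tfrac{\alpha}{2}\log(1-\beta(\Qb-\bar{\qt}))$ together with the quadratic pieces $\tfrac{\alpha}{2}\beta\qb/(1-\beta(\Qb-\qb))$ and its tilded counterpart.

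Next I would compute $d_t\mathcal{A}_{N,K,a}(\beta;t)$. The explicit $t$-dependence generates the signal contribution $\tfrac{\beta a}{2}\langle m_1^2\rangle_t+\tfrac{\beta a(1-a)}{2}\langle M_1^2\rangle_t$, the disorder-coupled pieces $\tilde\xi_i^\mu\sigma_i z_\mu$ and $\tilde\eta_i^\mu\sigma_i^2\tilde z_\mu$, and the counter-terms from the Gaussian fields $K_i,\tilde K_i,\bar J_\mu,\tilde J_\mu$ and from the one-body quadratic pieces. Applying Stein's lemma (legitimate thanks to the universality argument of Sec.~\ref{subsec:stat_mecc_app_BEG}, which allows one to treat $\tilde{\bm\xi},\tilde{\bm\eta}$ as independent standard Gaussians) each disorder average becomes a two-replica overlap combination in $(q_{12},\tilde q_{12},p_{12},\tilde p_{12})$ or a self-overlap in $(q_{11},\tilde q_{11},p_{11},\tilde p_{11})$. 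I would then invoke Proposition~\ref{prop:RS-identities} to replace two-replica overlaps by their fixed means modulo sum-of-squares remainders, and fix the constants so that $\psi,\varphi$ cancel the linear parts of $\langle m_1^2-2\mb\, m_1\rangle$ and $\langle M_1^2-2\Mb\, M_1\rangle$; $A,E$ match the Gaussian noise contributions to $\pb,\bar{\pt}$; $B,C$ absorb the dual $\qb,\bar{\qt}$ pieces; and $F,G,H_1,H_2$ recombine into the self-overlap term $-\tfrac{\beta\alpha}{2}\Qb(\Pb+\bar{\Pt})+\tfrac{\beta\alpha}{2}(\pb\qb+\bar{\pt}\bar{\qt})$.

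With this matching, $d_t\mathcal{A}_{N,K,a}(\beta;t)$ equals the $t$-independent constants already listed in \eqref{eq:ARS} plus residuals of the form $\langle(m_1-\mb)^2\rangle_t$, $\langle(M_1-\Mb)^2\rangle_t$, $\langle(q_{12}-\qb)(p_{12}-\pb)\rangle_t$, $\langle(\tilde q_{12}-\bar{\qt})(\tilde p_{12}-\bar{\pt})\rangle_t$, all vanishing as $N\to\infty$ by Assumption~\ref{RS-assumption}. Integrating the constant over $t\in[0,1]$ and adding the $t=0$ contribution returns exactly \eqref{eq:ARS}. The self-consistency equations then follow by imposing stationarity of the RS pressure in each order parameter, observing that derivatives of $\log\{1+2\exp[g_1]\cosh[g_2]\}$ reproduce the Gibbs averages stated in the theorem, while differentiation of the logarithmic glassy terms yields the quoted expressions for $\pb,\bar{\pt}$ and $\Pb+\bar{\Pt}$. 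The main technical obstacle will be the subtle matching of the self-overlap sector in the $t$-derivative: because for ternary spins $\tilde q_{11}\neq 1$, the $\eta$-channel contributes non-trivially and the constants $F,G,H_1,H_2$ (subject also to the stability bound $F,G<\beta^{-1}$ needed for convergence of the Gaussian integrals at $t=0$) must be fine-tuned simultaneously so that the coefficient of $\Qb(\Pb+\bar{\Pt})$ coincides in both sectors; everything else reduces to routine Gaussian algebra once this matching is fixed.
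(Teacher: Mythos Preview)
Your proposal is correct and follows essentially the same route as the paper: compute the one-body contribution at $t=0$, evaluate $d_t\mathcal{A}$ via Gaussian integration by parts (Stein's lemma) to expose overlap combinations, fix the constants $\psi,\varphi,A,B,C,E,F,G,H_1,H_2$ exactly as in \eqref{eq:constRS} so that under the RS identities the derivative collapses to a $t$-independent constant, apply the FTC, and then extremize the resulting pressure to obtain the self-consistencies. One minor slip: the convergence requirement on the Gaussian $z,\tilde z$ integrals at $t=0$ is $F,G<1$ (i.e.\ $\beta(\Qb-\qb)<1$ and $\beta(\Qb-\bar{\qt})<1$), not $F,G<\beta^{-1}$.
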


To prove the theorem, we need to premise the following 
\begin{lemma}
    The derivative with respect to $t$ of the interpolating quenched statistical pressure in thermodynamic limit and under the replica symmetric assumption is 
    \begin{align}
    d_t \mathcal{A}^{RS}_{\alpha, a}(\beta;t) = -\dfrac{\beta}{2}a \mb^2 - \dfrac{\beta }{2}a(1-a) \Mb^2-\dfrac{\beta \alpha }{2} \left( \Pb \Qb-\pb \qb\right) -\dfrac{\beta \alpha }{2} \left(\bar{\Pt} \Qb-\bar{\pt}\bar{\qt} \right).
    \label{eq:dtA}
\end{align}
\label{lemma1}
\end{lemma}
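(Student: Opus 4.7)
The plan is to apply the Guerra interpolation machinery in its most standard form, decomposing the derivative into three distinct families of contributions and then applying Stein's lemma to all Gaussian sources. First I would bring $d/dt$ through both the quenched expectation $\mathbb{E}$ and the logarithm, obtaining
\[
d_t\mathcal{A}_{N,K,a}(\beta;t)=\frac{1}{N}\,\mathbb{E}\,\omega_t\!\left(d_t\log \mathcal{B}_t\right),
\]
where $\omega_t$ is the Gibbs state associated with the interpolating weight $\mathcal{B}_t$ built from \eqref{eq:interp}. Each term in the exponent contributes through its explicit $t$-dependence, which is at most linear in $t$, $1-t$, $\sqrt{t}$ or $\sqrt{1-t}$, so the differentiation produces a finite list of Gibbs averages.

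Next I would organize the resulting contributions into three families and process them in parallel. The \emph{signal and one-body family}, namely $\tfrac{\beta N a}{2}m_1^2$, $\tfrac{\beta N a(1-a)}{2}M_1^2$ and the linear pieces in $m_1,M_1,\sum_\mu z_\mu^2,\sum_\mu\tilde z_\mu^2,\sum_i\sigma_i^2$, yields direct Gibbs averages of $m_1,m_1^2,M_1,M_1^2,p_{11},\tilde p_{11}$ and $\tilde q_{11}$. The \emph{external-field Gaussian family}, linear in the i.i.d.\ standard Gaussians $K_i,\tilde K_i,\bar J_\mu,\tilde J_\mu$ with the prefactor $\sqrt{1-t}$, is handled by Stein's lemma: each expectation $\mathbb{E}[J\,\omega_t(\cdot)]$ becomes the difference between a one-replica self-overlap piece and a two-replica piece, and the derivative prefactor $1/(2\sqrt{1-t})$ exactly cancels the $\sqrt{1-t}$ regenerated by the Wick step. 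The \emph{interaction Gaussian family}, namely $\sqrt{\beta t/N}\sum\tilde\xi_i^\mu\sigma_i z_\mu$ and $\sqrt{\beta t/N}\sum\tilde\eta_i^\mu\sigma_i^2\tilde z_\mu$, is treated analogously, after invoking the universality argument already employed in the excerpt that allows $\tilde\xi$ and $\tilde\eta$ to be replaced, in the thermodynamic limit, by independent standard Gaussians. Summing over $i$ and $\mu$ with $K\sim\alpha N$ reconstructs the combinations $\omega_t(q_{11}p_{11})-\Omega_t(q_{12}p_{12})$ and $\omega_t(\tilde q_{11}\tilde p_{11})-\Omega_t(\tilde q_{12}\tilde p_{12})$, each carrying the clean prefactor $\beta\alpha/2$.

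After collecting all contributions I would send $N\to\infty$ and invoke the Replica Symmetric identities of Proposition \ref{prop:RS-identities}, replacing every Gibbs average of a product of overlaps by the product of their RS means, so that $\langle m_1^2\rangle\to\mb^2$, $\langle M_1^2\rangle\to\Mb^2$, $\langle q_{12}p_{12}\rangle\to\qb\pb$, $\langle\tilde q_{12}\tilde p_{12}\rangle\to\bar{\qt}\bar{\pt}$, $\langle p_{11}\rangle\to\Pb$, and $\langle\tilde q_{11}\tilde p_{11}\rangle\to\bar{\Pt}\Qb$ (the last using $q_{11}=\tilde q_{11}$ for BEG spins, so both self-overlaps converge to the same $\Qb$). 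The remaining freedom in $\psi,\varphi,A,B,C,E,F,G,H_1,H_2$ is then used to force the one-body family and the external-field family to cancel each other term by term in the RS limit; the natural choice, read off directly from Proposition \ref{prop:RS-identities}, is $\psi=\beta a\mb$, $\varphi=\beta a(1-a)\Mb$, $A^2=\alpha\pb$, $E^2=\alpha\bar{\pt}$, $B^2=\qb$, $C^2=\bar{\qt}$, with $F,G,H_1,H_2$ fixed so that the self-overlap sector collapses onto $\Qb(\Pb+\bar{\Pt})$.

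The main obstacle, as for every Guerra-type interpolation, will be bookkeeping rather than conceptual difficulty: one must pair correctly the two-replica pieces produced by the $\tilde\xi/\tilde\eta$ sector with those produced by the $\bar J,\tilde J,K,\tilde K$ sector, while keeping the self-overlap contributions aligned so that the surviving combination reads $\Qb(\Pb+\bar{\Pt})-(\pb\qb+\bar{\pt}\bar{\qt})$. Once this matching is executed, the cancellations leave exactly the compact, $t$-independent expression \eqref{eq:dtA}, which is precisely the form required so that the FTC step \eqref{eq:FTC} can reconstruct $\mathcal{A}^{RS}_{\alpha,a}(\beta)$ from the one-body representation accessible at $t=0$.
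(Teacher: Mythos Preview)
Your proposal is correct and follows essentially the same route as the paper: direct $t$-differentiation of \eqref{Ainterpolante}, Gaussian integration by parts (Stein's lemma) on the $\tilde\xi,\tilde\eta,K,\tilde K,\bar J,\tilde J$ sectors to produce the one- and two-replica overlap combinations, then RS factorization via Proposition~\ref{prop:RS-identities} and a choice of the free constants to force the cancellations. The only slip is in the specific constants you quote --- the paper's choice \eqref{eq:constRS} carries an extra factor of $\beta$ throughout (e.g.\ $A^2=\beta\alpha\pb$, $B^2=\beta\qb$, $C^2=\beta\bar{\qt}$, $E^2=\beta\alpha\bar{\pt}$), which you would recover upon actually executing the bookkeeping you correctly flag as the main obstacle.
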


\begin{proof}
By direct derivation of the interpolating quenched statistical pressure w.r.t. the interpolating parameter $t$, we get 
\begin{align}
    &d_t \mathcal{A}_{N, K, a}(\beta;t) = \dfrac{\beta a}{2} \l (m_1)^2  -\dfrac{2\psi}{\beta a} m_1\r + \dfrac{\beta a(1-a)}{2} \l (M_1)^2 -\dfrac{2\varphi}{\beta a(1-a)}M_1\r \label{eq:dtAN}
    \\
    &- \dfrac{\beta K}{2 N}\Big(\l q_{12}p_{12}\r -\dfrac{A^2 N}{K\beta} \l q_{12}\r -\dfrac{B^2}{\beta}\l p_{12}\r\Big) + \dfrac{\beta K}{2 N}\Big(\l q_{11}p_{11}\r -\dfrac{(A^2+H_1) N}{K\beta} \l q_{11}\r -\dfrac{B^2+F}{\beta}\l p_{11}\r\Big)   \notag 
    \\
    & -  \dfrac{\beta K}{2 N}\Big(\l \tilde{q}_{12}\tilde{p}_{12}\r-\dfrac{E^2 N}{K\beta} \l q_{12}\r -\dfrac{C^2}{\beta}\l p_{12}\r\Big)\notag
    \\
    &+   \dfrac{\beta K}{2 N}\Big(\l q_{11}\tilde{p}_{11}\r -\dfrac{(E^2+H_2) N}{K\beta} \l q_{11}\r -\dfrac{C^2+G}{\beta}\l \tilde p_{11}\r\Big) \notag .
\end{align}
Since the computations are standard but cumbersome, for the sake of clearness we report them in Appendix \ref{app:conti}.
\\
Thanks to the RS assumption and the relations presented in Preposition~\ref{prop:RS-identities}, with some algebra on \eqref{eq:dtAN}, putting
\begin{align}
    &\psi = \beta a\bar{m}, \qquad \varphi=\beta a(1-a)\bar{M}, \qquad A^2= \beta \alpha \pb, \qquad B^2= \beta \qb \qquad C^2=\beta \bar{\qt}, \qquad E^2 = \beta \alpha \bar{\pt} ,\notag\\
    & F=\beta(\Qb - \qb), \qquad G= \beta(\Qb-  \bar{\qt}), \qquad H_1=\beta \alpha(\Pb-\pb),\qquad H_2=\beta \alpha(\bar{\Pt}-\bar{\pt}),
    \label{eq:constRS}
\end{align}
we reach the thesis.
\end{proof}

\begin{lemma}
    The expression of the interpolating quenched statistical pressure, at $t=0$ and in the thermodynamic limit, reads as 
    \begin{align}
    \mathcal{A}^{RS}_{\alpha, a}(\beta;t=0)=& \dfrac{\alpha }{2} \dfrac{B^2 }{1-F} - \dfrac{\alpha}{2} \log(1-F) + \dfrac{\alpha}{2}\dfrac{C^2}{1-G} - \dfrac{\alpha}{2} \log (1-G) \notag \\
    &+ \mathbb{E} \log \left[ 1+2 \exp \left(\dfrac{\varphi}{a(1-a)} \eta^1 + E \tilde J + \dfrac{H_1+H_2}{2}\right)\cosh \left( \dfrac{\psi}{a} \xi^1 + A J\right)\right]
    \label{eq:ob}
\end{align}
where $A, \ B,\ C, \ E, \ F, \ G, \ H_1,\ H_2,\ \psi, \varphi$ have the values in \eqref{eq:constRS}.
\label{lemma2}
\end{lemma}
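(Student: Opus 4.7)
The plan is to exploit the fact that at $t=0$ the interpolating partition function in \eqref{eq:interp} completely factorizes. Indeed, the $\sqrt{\beta t/N}$-terms coupling the visible neurons $\sigma_i$ to the hidden neurons $z_\mu,\tilde z_\mu$ vanish, the signal contributions involving $m_1$ and $M_1$ enter only linearly (so they split site-by-site after expanding the magnetizations), and the remaining one-body and Gaussian pieces separate neatly between the $\sigma$-sector and the $(z,\tilde z)$-sector. This decoupling reduces $\mathcal{Z}_{N,K,a}(\beta;0\mid\cdot)$ to a product of $N$ independent three-state sums times a product of $2(K-1)$ independent one-dimensional Gaussian integrals, and the computation of $\mathcal{A}_{N,K,a}(\beta;t=0)$ becomes the sum of the (properly normalised) logarithms of those two blocks.

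First I would handle the hidden sector. For each $\mu>1$, the integral over $z_\mu$ reads
\[
\int \frac{dz_\mu}{\sqrt{2\pi}}\,\exp\Big(-\tfrac{1-F}{2}\,z_\mu^2 + B\bar J_\mu z_\mu\Big)
= \frac{1}{\sqrt{1-F}}\,\exp\Big(\tfrac{B^2 \bar J_\mu^{\,2}}{2(1-F)}\Big),
\]
and analogously for $\tilde z_\mu$ with the pair $(C,G)$ in place of $(B,F)$; convergence is guaranteed by the sign conditions $F,G<1$ that the self-consistency values in \eqref{eq:constRS} automatically respect in the retrieval region. Taking the logarithm, averaging over $\bar J_\mu,\tilde J_\mu\sim\mathcal N(0,1)$ (so that $\mathbb{E}[\bar J_\mu^{\,2}]=\mathbb{E}[\tilde J_\mu^{\,2}]=1$), summing over the $K-1$ modes and multiplying by $1/N$ turns the hidden-layer contribution into $(\alpha/2)\,B^2/(1-F)-(\alpha/2)\log(1-F)+(\alpha/2)\,C^2/(1-G)-(\alpha/2)\log(1-G)$ as $N\to\infty$, reproducing the first four terms in \eqref{eq:ob}.

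Next I would treat the visible sector. After substituting $N\psi m_1 = (\psi/a)\sum_i \xi_i^1\sigma_i$ and $N\varphi M_1 = (\varphi/(a(1-a)))\sum_i \eta_i^1\sigma_i^2$, the sum over configurations factorises into a product over $i$, so that at each site one is left with an elementary three-state sum over $\sigma_i\in\{-1,0,+1\}$: the value $\sigma_i=0$ contributes $1$, while $\sigma_i=\pm1$ contribute symmetric exponentials that combine into a hyperbolic cosine of the odd-in-$\sigma$ field $(\psi/a)\xi_i^1 + A K_i$, multiplied by the exponential of the even-in-$\sigma$ field $(\varphi/(a(1-a)))\eta_i^1 + E\tilde K_i + (H_1+H_2)/2$. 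Taking the logarithm, using that the sites are i.i.d.\ (so the sum over $i$ yields $N$ copies of the same quenched expectation) and relabelling $K_i\to J$, $\tilde K_i\to\tilde J$ inside the single-site expectation, the $1/N$ prefactor of the pressure removes the volume sum and gives exactly the last term of \eqref{eq:ob}.

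The computation is essentially mechanical once the factorisation is identified; the only point requiring care is the bookkeeping of the five independent quenched families $(\boldsymbol{\xi}^1,\tilde{\boldsymbol\xi},\tilde{\boldsymbol\eta},\bm K,\bm J)$ and the observation that, at $t=0$, the noise families $\tilde{\boldsymbol\xi}$ and $\tilde{\boldsymbol\eta}$ actually drop out (they only multiplied the now-absent coupling $\sqrt{\beta t/N}$), so that the quenched average collapses onto $\mathbb{E}=\mathbb{E}_{\xi^1}\mathbb{E}_J\mathbb{E}_{\tilde J}$ on a single effective site. No self-averaging input beyond Assumption~\ref{RS-assumption} is needed here: $t=0$ is the trivial endpoint of the interpolation, and concentration will only become nontrivial when one compares it with the $t=1$ side via the fundamental theorem of calculus in \eqref{eq:FTC}.
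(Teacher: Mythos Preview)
Your proposal is correct and follows essentially the same approach as the paper's proof in Appendix~\ref{app:ob}: exploit the complete factorisation at $t=0$ into a product of single-site three-state sums (yielding the $1+2e^{(\cdot)}\cosh(\cdot)$ term) and a product of one-dimensional Gaussian integrals over the hidden units (yielding the $(1-F)^{-1/2}\exp(B^2\bar J_\mu^2/(2(1-F)))$ factors and their $\tilde z$-analogues), then take the log, average over the i.i.d.\ quenched fields, and pass to the thermodynamic limit via $K/N\to\alpha$. The only cosmetic difference is the order in which you treat the two sectors and your explicit remark on the convergence condition $F,G<1$.
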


Again, for the sake of clearness, the proof of Lemma \ref{lemma2} is reported in Appendix \ref{app:ob}.

\begin{proof}
    (Of Theorem \ref{th:BEG_RS_final_ret_seriale}) The proof of the theorem is the application of the FTC on the interpolating quenched statistical pressure \eqref{eq:FTC}. Therefore, putting together \eqref{eq:ob} and \eqref{eq:dtA} we reach the thesis. 
    \newline
    Finally, to recover the expression of the self-consistency equations for the order parameters, we need to extremize the quenched statistical pressure with respect to them that is a trivial, fairly standard, exercise. 
\end{proof}

Note that the above picture, particularly the expression \eqref{eq:ARS} for the quenched statistical pressure (respectively in the high ($\alpha>0$) and low ($\alpha=0$) load regimes) coincide sharply with those previously achieved via the replica trick \cite{BlumeCambelNN2}.

\subsection{Ghatak-Sherrington model: statistical mechanics approach}
We now turn to the investigation of the Ghatak-Sherrington (GS) neural network model, whose Hamiltonian is \eqref{eq:HamiltSspin}. As usual, the first steps lie in splitting the signal term (i.e. $\mu=1$ with no loss of generality) from the quenched noise term (i.e. all the contributions for $\mu \neq 1$) and to rely upon the integral representation of the partition function, namely to obtain the generalized restricted Boltzmann machine, dual to the GS model. In these regards, we state directly the following

\begin{proposition}
The partition function of the (GS) neural network model, defined by the Hamiltonian  \eqref{eq:HamiltSspin}, can be written as
    \begin{align}
        \mathcal{Z}_{N, K, a, S}(\beta\mid \boldsymbol{\xi}^1, \tilde{\boldsymbol\xi}, \tilde{\boldsymbol\eta})&= \sum_{\{\boldsymbol\sigma\}} \int \mathcal D(z\tilde z)\,\exp\Bigg[\frac{\beta N}{2}\,\Big(\N m_1^2(\boldsymbol\sigma)+\Nn M_1^2(\boldsymbol\sigma)\Big)\notag
        \\
        &\qquad+ \sqrt{\frac{\beta}{N}} \sum_{\mu>1}^K \sum_{i=1}^N \tilde\xi_i^\mu \sigma_i z_\mu+ \sqrt{\frac{\beta}{N}} \sum_{\mu>1}^K \sum_{i=1}^N \tilde\eta_i^\mu \sigma_i^2 \tilde z_\mu\Bigg].
\label{eq:parition_start_GS}
\end{align}
where $m_1$ and $M_1$ deal with the signal and the remaining terms define a generalized restricted Boltzmann machine equipped with a visible layer $\sigma$ --built off by the original neurons-- and two hidden layers $z$ and $\tilde{z}$ real-valued neurons  equipped with a Gaussian prior, i.e. the regularizator that lies in $\mathcal D(z\tilde z)$. Moreover, as usual, for $\mu>1$ we have introduced  the normalized weights
\begin{equation}
    \tilde\xi_i^\mu = \dfrac{\xi_i^\mu}{\sqrt{\N}}, \qquad \tilde\eta_i^\mu = \dfrac{\eta_i^\mu}{\sqrt{\Nn}},
\end{equation}
because, with the same universality argument exploited in the previous Section,  these can be considered as i.i.d. Normal distributed variables in the thermodynamic limit.
\end{proposition}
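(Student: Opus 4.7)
The plan is to mirror, step by step, the derivation already carried out for the BEG network in Section~\ref{subsec:stat_mecc_app_BEG}, observing that the GS Hamiltonian has exactly the same quadratic-form structure as the BEG one (cf.\ \eqref{FormaQuadratica} versus \eqref{eq:Hamiltonian_mM}), modulo the replacement of the scalar normalizations $a$ and $a(1-a)$ with the model-dependent variances $\N(a,S)$ and $\Nn(a,S)$. Because the proof is essentially an algebraic manipulation, no additional structural ingredient is required beyond what was already used in the BEG derivation.

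First, I would start from the order-parameter form \eqref{FormaQuadratica} of the Hamiltonian and split the $\mu=1$ (signal) contribution out of both quadratic sums, writing
\begin{equation*}
-\beta \mathcal H^{(GS)}_{N,K,a,S}(\boldsymbol\sigma\mid\boldsymbol\xi)
= \frac{\beta N}{2}\bigl(\N\, m_1^2(\boldsymbol\sigma)+\Nn\, M_1^2(\boldsymbol\sigma)\bigr)
+ \frac{\beta N}{2}\sum_{\mu>1}^{K}\bigl(\N\, m_\mu^2(\boldsymbol\sigma)+\Nn\, M_\mu^2(\boldsymbol\sigma)\bigr).
\end{equation*}
Next, I would linearize each of the $2(K-1)$ quadratic forms in the noise sector via two independent Hubbard--Stratonovich transformations of the type \eqref{eq:hubbard}: one family of Gaussian auxiliary variables $\{z_\mu\}_{\mu>1}$ handles the $m_\mu^2$ terms (with prefactor $\sqrt{\beta N\N}$), and a second independent family $\{\tilde z_\mu\}_{\mu>1}$ handles the $M_\mu^2$ terms (with prefactor $\sqrt{\beta N\Nn}$). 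The joint Gaussian measure is precisely $\mathcal D(z\tilde z)=\prod_{\mu>1}\tfrac{dz_\mu d\tilde z_\mu}{2\pi}e^{-(z_\mu^2+\tilde z_\mu^2)/2}$, exactly as in the BEG derivation.

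Then I would substitute back the microscopic expressions of $m_\mu$ and $M_\mu$ from Remark~\ref{rem:order}. After this substitution the H--S couplings become $\sqrt{\beta/N}\sum_i (\xi_i^\mu/\sqrt{\N})\sigma_i z_\mu$ and $\sqrt{\beta/N}\sum_i (\eta_i^\mu/\sqrt{\Nn})\sigma_i^2\tilde z_\mu$, and introducing the normalized weights $\tilde\xi_i^\mu:=\xi_i^\mu/\sqrt{\N}$, $\tilde\eta_i^\mu:=\eta_i^\mu/\sqrt{\Nn}$ yields exactly \eqref{eq:parition_start_GS}. To complete the statement, I would then verify the moments of $\tilde\xi_i^\mu$ and $\tilde\eta_i^\mu$: by construction they are centered with unit variance, and the cubic-moment argument used in the BEG case (see \eqref{eq:cov}) carries over verbatim under the symmetric pattern distribution \eqref{eq:probdistrS}, so $\mathrm{Cov}(\tilde\xi_i^\mu,\tilde\eta_i^\mu)=0$. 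Universality of quenched disorder (as already invoked for the BEG case via the CLT together with the results of \cite{CarmonaWu,Genovese}) then allows one to replace these families by independent standard Gaussians in the thermodynamic limit without affecting the quenched statistical pressure.

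Concerning difficulties, I do not expect any genuine conceptual obstacle since the derivation is essentially a line-by-line transposition of the BEG calculation; the only point that requires some care is the bookkeeping of the $\N$ and $\Nn$ prefactors, which now depend on both $a$ and $S$ and whose explicit values are deferred to Appendix~\ref{app:momenta}. A secondary subtlety, worth a short verification, is the validity of the universality step when the support of $\xi_i^\mu$ is enlarged to $2S+1$ values: this still works because the symmetry of \eqref{eq:probdistrS} guarantees that all odd moments vanish and hence $\mathbb E[\tilde\xi_i^\mu\tilde\eta_i^\mu]=\mathbb E[(\tilde\xi_i^\mu)^3]/\sqrt{\N\Nn(1-\N)}\cdot(\cdots)=0$, so the limiting covariance matrix of the normalized disorder is the identity regardless of $S$.
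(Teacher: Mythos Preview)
Your proposal is correct and follows exactly the approach the paper intends: the paper does not even write out a separate proof for this proposition, stating it ``directly'' and remarking that all the steps ``can be trivially imported from the BEG case'' in Section~\ref{subsec:stat_mecc_app_BEG}, which is precisely the signal/noise split, Hubbard--Stratonovich linearization, microscopic substitution, and universality argument you outline. The only minor slip is the stray $(1-\N)$ factor in your final covariance formula; the correct computation is simply $\mathbb E[\tilde\xi_i^\mu\tilde\eta_i^\mu]=(\mathbb E[(\xi_i^\mu)^3]-\N\,\mathbb E[\xi_i^\mu])/\sqrt{\N\Nn}=0$ by symmetry, but your conclusion is unaffected.
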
 
All the basic definitions required for a statistical mechanical treatment of the model (e.g. partition function, quenched statistical pressure, Boltzmann and global averages, etc.) can be trivially imported from the BEG case (see Sec.~\ref{subsec:stat_mecc_app_BEG}).

\begin{remark}[Difference between GS and BEG models]
At difference w.r.t. the BEG neural network, in the GS model the identity $\boldsymbol\sigma^2=\boldsymbol\sigma^4$ does no longer hold.  
Therefore, while we preserve the order parameters  defined  in Def.~\ref{def:ordparam}, now $q_{11}\neq\tilde q_{11}$.    
\end{remark}

\subsection{Ghatak-Sherrington model: replica symmetric picture}
\label{ssec:serialSSpin}
Now, mirroring the investigation performed for the BEG model, in the thermodynamic limit  and under the replica symmetric assumption, as stated in Assumption~\ref{RS-assumption}, we recover the expression of the quenched statistical pressure of the GS model. 

Also from the methodological viewpoint, we will apply the same methodology (namely Guerra's interpolation exploiting FTC) as in the Blume-Emery-Griffiths case. 
\newline
Let us start from the definition of the interpolating quenched statistical pressure: 

\begin{definition}
    Let $t \in [0,1]$ be an interpolating parameter. The interpolating quenched statistical pressure $\mathcal{A}_{N, K, a, S}(\beta;t)$ is defined as 
    \begin{equation}\label{AinterpolanteSspin}
\mathcal{A}_{N, K, a, S}(\beta;t)= \frac{1}{N}\mathbb{E} \mathcal{Z}_{N, K, a, S}( \beta;t\vert \boldsymbol{\xi}^1, \tilde{\boldsymbol\xi}, \tilde{\boldsymbol\eta}, \bm J, \bm K),
\end{equation}
where $\mathcal{Z}_{N, K, a, S}( \beta;t\vert \boldsymbol{\xi}^1, \tilde{\boldsymbol\xi}, \tilde{\boldsymbol\eta}, \bm J, \bm K)$ is the interpolating partition function defined as 
\begin{equation}
    \begin{array}{lll}
    \mathcal{Z}_{N, K, a, S}( \beta;t\vert \boldsymbol{\xi}^1, \tilde{\boldsymbol\xi}, \tilde{\boldsymbol\eta}, \bm J, \bm K) =& \sum_{\bm \sigma} \int \mathcal{D}(z \tilde z) \exp \left\{ \dfrac{\beta t}{2} N \N m_1^2 + \dfrac{\beta t}{2} N \Nn M_1^2 + \sqrt{\dfrac{\beta t}{N}} \SOMMA{i=1}{N} \SOMMA{\mu>1}{K} \tilde \xi_i^\mu \sigma_i z_\mu \right. 
    \\\\
    & +  \sqrt{\dfrac{\beta t}{N}} \SOMMA{i=1}{N}\SOMMA{\mu>1}{K} \tilde \eta_i^\mu \sigma_i^2 \tilde z_\mu + (1-t) \left(\psi N m_1 + \varphi N M_1 \right) 
    \\\\
    &+ \dfrac{(1-t)}{2} \left[ F \SOMMA{\mu>1}{K} z_\mu^2 + G \SOMMA{\mu>1}{K} \tilde z_\mu^2 + H \SOMMA{i=1}{N} \sigma_i^2 + L \SOMMA{i=1}{N} \sigma_i^4\right] 
    \\\\
    &\left.+ \sqrt{1-t} \left[ A \SOMMA{i=1}{N} K_i \sigma_i + B \SOMMA{\mu>1}{K} \bar{J}_\mu z_\mu + C \SOMMA{\mu>1}{K} \tilde J_\mu \tilde z_\mu + E \SOMMA{i=1}{N} \tilde K_i \sigma_i^2\right]\right\}. 
\end{array}
\label{eq:interpSSpin}
\end{equation}

Note that $A, B, C, E, F, G, H, L, \psi, \varphi$ are constants to be set a posteriori (in order to force the model to respect the RS criterion), while the quenched variables $\bm K=(K_i, \tilde K_i)_{i=1, \hdots, N}$, $\bm J=(J_i, \tilde J_i)_{i=1, \hdots, N}$ are all standard Gaussians. 
\newline
Furthermore, the quenched statistical pressure in the thermodynamic limit is obtained as 
$$\mathcal{A}_{a,S}(\alpha,\beta)= \lim_{N \to \infty}\mathcal{A}_{N, K, a, S}(\beta;t=1).$$
\end{definition}

Thanks to the previous definition, it is possible to state the following
\begin{theorem}
The thermodynamic limit of the quenched statistical pressure of the Ghatak-Sherrington neural network model, defined by the Hamiltonian \eqref{eq:HamiltSspin}, under the replica symmetric assumption, reads explicitly as 
\begin{align}
    &\mathcal{A}^{RS}_{a, S}(\alpha,\beta)=\dfrac{\alpha }{2} \dfrac{\beta \qb }{\left(1-\beta(\Qb - \qb)\right)} - \dfrac{\alpha}{2} \log\left(1-\beta(\Qb - \qb)\right) + \dfrac{\alpha}{2}\dfrac{\beta \bar{\qt}}{\left(1-\beta(\bar{\tilde Q} - \bar{\qt})\right)}\notag \\
    &+ \mathbb{E}_{\xi^1, J, \tilde J} \Bigg\{\log \Bigg[\SOMMA{k=0}{2S} \exp \left( g_2(\xi^1,J) s(k) + g_1(\xi^1, \tilde J) s^2(k)+ \dfrac{\beta \alpha}{2}(\bar{\Pt} - \bar{\pt}) s^4(k)\right)\Bigg]\Bigg\}\notag \\
    &- \dfrac{\alpha}{2} \log \left(1-\beta(\bar{\tilde Q} - \bar{\qt})\right)-\dfrac{\beta}{2}\N \mb_1^2 - \dfrac{\beta }{2}\Nn \Mb_1^2-\dfrac{\beta \alpha }{2} \left( \Pb \Qb-\pb \qb\right) -\dfrac{\beta \alpha }{2} \left(\bar{\Pt} \bar{\tilde Q}-\bar{\pt}\bar{\qt} \right), 
\end{align}
where $\mathbb{E}= \mathbb{E}_{\xi^1} \mathbb{E}_J \mathbb{E}_{\tilde J}$, $s(k)=-1+\frac{k}{S}$ and 
\begin{align}
    g_1(\xi^1, \tilde J):=& \beta \Mb [(\xi^1)^2-\N] + \sqrt{\beta \alpha \pt} \tilde J + \dfrac{\beta \alpha}{2}\left( \Pb - \pb\right), \\
    g_2(\xi^1, J):=& \beta \mb \xi^1 + \sqrt{\beta \alpha \pb} J,
\end{align}
and where the order parameters must fulfill the following self-consistency equations
\begin{align}
    &\bar m =  \mathbb{E} \left[ \dfrac{\SOMMA{k=0}{2S} \mathcal{G}(k, \xi^1, J, \tilde{J}) s(k)}{\SOMMA{k=0}{2S} \mathcal{G}(k, \xi^1, J, \tilde{J})}   \left(\dfrac{\xi^1}{\N}\right)\right], 
    \\
    &\bar M = \mathbb{E} \left[ \dfrac{\SOMMA{k=0}{2S} \mathcal{G}(k, \xi^1, J, \tilde{J})s^2(k)}{\SOMMA{k=0}{2S}\mathcal{G}(k, \xi^1, J, \tilde{J})} \left(\dfrac{(\xi^1)^2}{\Nn}\right)\right] -\dfrac{\N}{\Nn}\mathbb{E} \left[ \dfrac{\SOMMA{k=0}{2S} \mathcal{G}(k, \xi^1, J, \tilde{J})s^2(k)}{\SOMMA{k=0}{2S}\mathcal{G}(k, \xi^1, J, \tilde{J})} \right], 
    \\
    &\bar q =\mathbb{E} \left[ \dfrac{\SOMMA{k=0}{2S} \mathcal{G}(k, \xi^1, J, \tilde{J})s(k)}{\SOMMA{k=0}{2S} \mathcal{G}(k, \xi^1, J, \tilde{J})} \right]^2,\quad\quad\bar{\qt} =\mathbb{E} \left[ \dfrac{\SOMMA{k=0}{2S} \mathcal{G}(k, \xi^1, J, \tilde{J})s^2(k)}{\SOMMA{k=0}{2S} \mathcal{G}(k, \xi^1, J, \tilde{J})}\right]^2, 
    \\
    &\Qb = \mathbb{E} \left[ \dfrac{\SOMMA{k=0}{2S} \mathcal{G}(k, \xi^1, J, \tilde{J}) s^2(k)}{\SOMMA{k=0}{2S}\mathcal{G}(k, \xi^1, J, \tilde{J})} \right], \quad\quad\bar{\tilde Q}=  \mathbb{E} \left[ \dfrac{\SOMMA{k=0}{2S} \mathcal{G}(k, \xi^1, J, \tilde{J})s^4(k)}{\SOMMA{k=0}{2S} \mathcal{G}(k, \xi^1, J, \tilde{J})} \right], 
    \\
    &\bar p = \dfrac{\beta \qb}{(1-\beta(\bar Q - \qb))^2} \qquad \bar{\tilde p} = \dfrac{\beta \bar{\qt} }{(1-\beta(\bar{\tilde Q} - \bar{\qt}))^2}, 
    \\
    &\bar P = \pb + \dfrac{1}{1-\beta(\Qb-\qb)} \qquad 
    \tilde P = \tilde p + \dfrac{1}{1-\beta(\bar{\tilde Q} - \bar{\qt})}, 
\end{align}
where we set
\begin{equation}
    \mathcal{G}(k, \xi^1, J, \tilde{J})=  \exp \left(g_2(\xi^1,J) s(k) + g_1(\xi^1, \tilde J) s^2(k)+ \dfrac{\beta \alpha}{2}(\bar{\Pt} - \bar{\pt}) s^4(k)\right).
\end{equation}
\end{theorem}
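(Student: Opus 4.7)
The plan is to follow the exact same Guerra interpolation workflow used for the BEG model in Theorem \ref{th:BEG_RS_final_ret_seriale}, applying the Fundamental Theorem of Calculus to the interpolating quenched statistical pressure \eqref{AinterpolanteSspin} and controlling the two boundary values at $t=1$ (the original GS model) and $t=0$ (a one-body problem). The modifications with respect to the BEG derivation are essentially bookkeeping: the neural state space is enlarged to $s(k)=-1+k/S$, $k=0,\dots,2S$, and the $\sigma^4$ sector is no longer degenerate with the $\sigma^2$ sector, which is precisely why the interpolation \eqref{eq:interpSSpin} carries the extra one-body term $L\sum_i \sigma_i^4$ absent in \eqref{eq:interp}.

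First, I would compute $d_t\mathcal A_{N,K,a,S}(\beta;t)$ by direct differentiation. The signal contribution yields $(\beta\N/2)\langle m_1^2\rangle - \psi\langle m_1\rangle$ and the analogous term for $M_1$, while the $\tilde\xi_i^\mu \sigma_i z_\mu$ and $\tilde\eta_i^\mu\sigma_i^2 \tilde z_\mu$ couplings produce, via Wick/Stein on the Gaussian disorder and on the interpolating Gaussian fields, quadratic combinations of the two-replica overlaps $q_{12},p_{12},\tilde q_{12},\tilde p_{12}$ together with self-overlap remainders proportional to $q_{11},p_{11},\tilde q_{11},\tilde p_{11}$ — exactly as in \eqref{eq:dtAN}. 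Invoking the RS Assumption \ref{RS-assumption} (with the Proposition~\ref{prop:RS-identities} identities extended to the GS case) and fixing the constants as the natural GS counterpart of \eqref{eq:constRS},
\begin{align*}
\psi&=\beta\N\mb,\quad \varphi=\beta\Nn\Mb,\quad A^2=\beta\alpha\pb,\quad E^2=\beta\alpha\pt,\quad B^2=\beta\qb,\quad C^2=\beta\qt,\\
F&=\beta(\Qb-\qb),\quad G=\beta(\bar{\Qb}_{\tilde{}}-\qt),\quad H=\beta\alpha(\Pb-\pb),\quad L=\beta\alpha(\Pt-\pt),
\end{align*}
the overlap-dependent terms cancel and one is left with the clean expression $d_t\mathcal A^{RS}_{a,S}(\alpha,\beta;t)=-\tfrac{\beta}{2}\N\mb^2-\tfrac{\beta}{2}\Nn\Mb^2-\tfrac{\beta\alpha}{2}(\Pb\Qb-\pb\qb)-\tfrac{\beta\alpha}{2}(\Pt\bar{\tilde Q}-\pt\qt)$, independent of $t$.

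Next I would evaluate $\mathcal A_{N,K,a,S}(\beta;t=0)$. At $t=0$ the model factorizes: the Gaussian $z_\mu,\tilde z_\mu$ integrals produce the two $-\tfrac{\alpha}{2}\log(1-F)$, $-\tfrac{\alpha}{2}\log(1-G)$ logarithmic terms together with the quadratic contributions $\tfrac{\alpha}{2}\tfrac{B^2}{1-F}$ and $\tfrac{\alpha}{2}\tfrac{C^2}{1-G}$, while the neural trace factorizes site by site into
\[
\SOMMA{k=0}{2S}\exp\!\Big(g_2(\xi^1,J)s(k)+g_1(\xi^1,\tilde J)s^2(k)+\tfrac{\beta\alpha}{2}(\Pt-\pt)s^4(k)\Big),
\]
which is exactly the weight $\mathcal G(k,\xi^1,J,\tilde J)$ appearing in the theorem (the $s^4$ term is the direct trace of the auxiliary $L$-field that the BEG case did not require). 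Summing the two boundary contributions via the FTC analogue of \eqref{eq:FTC} gives $\mathcal A^{RS}_{a,S}(\alpha,\beta)$ in the stated form. The self-consistency equations for $\mb,\Mb,\qb,\qt,\Qb,\bar{\tilde Q},\pb,\pt,\Pb,\Pt$ are then obtained as the standard stationarity conditions $\partial_X \mathcal A^{RS}_{a,S}=0$, each producing a ratio $\sum_k\mathcal G(k,\cdot)\,\phi(s(k))/\sum_k\mathcal G(k,\cdot)$ with $\phi\in\{s,s^2,s^4\}$, exactly as written.

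The main obstacle is the careful handling of the $\sigma^4$ channel. In the BEG case the identity $\sigma^4=\sigma^2$ on $\{-1,0,+1\}$ collapsed $q_{11}$ and $\tilde q_{11}$ and made two self-overlap bookkeeping lines redundant, as noted in the Remark following Definition~\ref{def:ordparam}; here this collapse fails for every $S>1$, so one must keep $\Qb$ and $\bar{\tilde Q}$ as genuinely distinct quantities, introduce the extra field $L$ in the interpolation \eqref{eq:interpSSpin}, and make sure that its tuning $L=\beta\alpha(\Pt-\pt)$ is exactly what is required to produce the $s^4(k)$ term inside $\mathcal G$ while simultaneously cancelling the residual $\sigma_i^4$ piece in $d_t\mathcal A$. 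Once this matching is consistently enforced, the remainder of the argument is a routine adaptation of the BEG computations detailed in Appendices \ref{app:conti} and \ref{app:ob}.
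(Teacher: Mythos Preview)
Your proposal is correct and matches the paper's approach exactly: the paper itself omits the proof, stating only that it is a straightforward generalization of the BEG argument in Theorem~\ref{th:BEG_RS_final_ret_seriale}. You have in fact spelled out more detail than the paper provides, and your identification of the $\sigma^4$ channel (handled through the extra $L$-field in \eqref{eq:interpSSpin}, with the tuning $L=\beta\alpha(\bar{\tilde P}-\bar{\tilde p})$ and the consequent separation of $\bar Q$ from $\bar{\tilde Q}$) is precisely the one non-trivial bookkeeping point that distinguishes the GS computation from the BEG one.
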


\begin{proof}
The proof is a straightforward generalization of the corresponding one provided for the previous BEG model hence we omit it for the sake of simplicity and we \resub{refer} the reader to the proof given for Th.~\ref{th:BEG_RS_final_ret_seriale}.
\end{proof}

We stress that, for $S=1$, the free energy of the GS neural network model naturally recovers that of the BEG neural network analyzed in Sec. \ref{sec:RS-BEGNN} while for $S=1/2$ it reduces to the standard  Amit-Gutfreund-Sompolinsky (AGS) expression of the free energy related to the standard Hopfield neural network \cite{Amit,Hopfield}.

So far, we have recovered --and minimally generalized\footnote{\resub{Strictly speaking, although the GS neural network model is a rather straightforward generalization of the BEG model, augmented with graded-response neurons, the explicit form of its free energy, together with the corresponding self-consistency equations for its order parameters, has, to the best of our knowledge, not yet appeared in the Literature.}}-- results on neural networks models (driven by {\em inverse freezing} research in spin glasses), that were already known in the Literature but this is not the end of the present analysis.
\newline
Yet, research is far from being over and, indeed, in the next section  we shall push the analysis further and investigate a completely unexplored working  regime for these models: the \emph{parallel retrieval} regime. 
\newline
As anticipated in the Introduction, for these multi-tasking capabilities to emerge,  the network has to operate respecting the following tradeoff between storage and dilution: if confined to the low storage (that is -at most- for $K\propto \ln N$), in the $a \to 0$ limit, there is enough dilution to let the network switch from serial to parallel processing, while -at work with the medium storage  (that is for $K \propto N^{\delta}$ with $\delta<1$) we can still have multitasking capabilities solely if we force the network to be extremely diluted (that is $a=a(N)$ and, in particular, $a \propto N^{-\delta}$).
\newline
For the sake of completeness, we stress that it should be possible to force these neural network to work under the massively parallel retrieval in the high storage regime --namely in an operational model where all the patterns are recalled at once and their amount is  an extensive fraction of $N$ (that is, $\delta=1$ and $\alpha >0$)-- but this requires the underlying network to be equipped with  finite connectivity: unfortunately, this request makes  the techniques for its statistical mechanical treatment to be completely different w.r.t. the interpolation so far exploited. This is due to the fact that, in this topological regime, the network can not admit a giant component any longer, rather it is fragmented in an extensive amount of small cliques dense of short loops: see \cite{Coolen1,Coolen2,EPL-NOI-12}. We remind the inspection of this regime to future explorations of these neural networks.

%%%%%%%%%%%%%%%%%%%%%%%%%%%%%%%%%%%%%%%%%%%%%%%%%%%%%%
%%%%%%%%%%%%%%%%%% PARALLEL %%%%%%%%%%%%%%%%%%%%%%%%%%
%%%%%%%%%%%%%%%%%%%%%%%%%%%%%%%%%%%%%%%%%%%%%%%%%%%%%%
\section{Parallel processing in the low and medium load regimes}
\label{sec:lowloadS}
Once decided that we will not explore the finite connectivity regime, hereafter we will explore the parallel retrieval of these neural networks confined to the low and medium storage only. \resub{F}rom this Section until the end of the paper we shall work under the assumption that the number of stored patterns $K$ always grows sub-linearly with the network size $N$, namely
\begin{align}
K \propto N^{\delta}, \qquad \delta<1 \to \alpha=0,
\end{align}
\resub{where we stress that $\alpha$ is the load of the network defined in \eqref{eq:load}. Moreover, we recall that $a$ is the parameter linked to the probability distribution of the patterns (introduced for BEG in \eqref{eq:dist_xi} and for GS model in \eqref{eq:probdistrS}) and $\beta$ is the inverse temperature of the model, as reported schematically in Def. \eqref{Def:ControlParameters}. As introduced in Sec.~\ref{subsec:serial_vs_parallel_overview}, Hebbian-like networks operating away from saturation can sustain \emph{parallel retrieval}, provided that the stored patterns are sufficiently diluted. In this situation the network can raise several Mattis magnetizations at once: it leverages the reduced information content of each pattern to extract as much signal as possible by combining partial retrievals.}

% \begin{figure}[t!]
%     \centering
%     \includegraphics[width=15cm]{Reply/Plots/diff_cop.png}
%     \caption{MCMC evolution of the Mattis magnetizations of the GS network at fixed inverse temperature \resub{$\beta=1000$}, with system size \resub{$N=1000$} and $K=5$ stored patterns. Starting from a random configuration, the system evolves under the updating rules provided in Algorithm~\ref{algo:MCMC} (see the Appendix), for the number of steps reported on the $x$-axis. Top row: $S=1$ (i.e.\ $\boldsymbol\sigma\in\Omega=\{-1,0,+1\}^N$).  Bottom row: $S=1.5$ (i.e.\ $\boldsymbol\sigma\in\Omega=\{-1,-\tfrac{1}{3},+\tfrac{1}{3},+1\}^N$).  Left column: $a=0.6$.  Right column: $a=0.2$.  
%     \\
%     The dynamics spontaneously converge to final states in which more than one magnetization is simultaneously different from zero, i.e.\ parallel retrieval emerges.  
%     A further interesting observation is that the onset of parallel retrieval depends not only on the dilution level $a$, but also on whether the spin state space includes $0$: when $0\in\Omega$ (e.g.\ $S\in\mathbb{N}$), high dilution (low $a$) is required, whereas when $0\notin\Omega$ (e.g.\ $(2S+1)/2\in\mathbb{N}$), parallel retrieval can occur already at lower dilution (higher $a$).  
%     This behavior is in perfect agreement with the self-consistency equations and the phase diagrams (see Fig.~\ref{fig:MC_self}).
%     }
%     \label{fig:parall_proc}
% \end{figure}
\begin{remark}[Multi-tasking scenarios]
Away from the high-storage regime, we shall distinguish two cases:
\begin{enumerate}
\item \textbf{Low storage and finite dilution}  
The number of stored patterns grows at most logarithmically with the volume of the network and the dilution level remain finite as $N\to\infty$, i.e. $K \propto \ln N$ while $a \neq a(N)$.
%In this case the network is able to sustain parallel retrieval provided that $K>2$ and $a\in(0,1]$.  

\item \textbf{Medium storage and extreme dilution:}  
The number of patterns diverges sub-linearly with $N$, that is
\[
\alpha := \lim_{N\to\infty}\frac{K}{N^{\delta}}, \qquad \delta<1,
\]
but  the dilution in the patterns now must  also depend on $N$, so to vanish as
\[
a \propto N^{-\gamma}, \qquad \gamma\in(0,1).
\]
From a graph theory perspective, this regime forces the underlying topology to be  \emph{extremely diluted} and, anticipating what we will find soon, in order for this network to perform parallel retrieval, we must have $\delta=\gamma$, namely the way we store patterns in the network, not surprisingly, must be highly correlated to the way we diluted these patterns before supplying them to the network.
\end{enumerate}
\end{remark}

\begin{remark}
We stress that, confined away from the high storage regime (namely for the rest of the paper, where $\alpha=0$), spin-glass effects are negligible: the spin glass order parameters $q$ and $p$ are no longer needed  and the order parameters $m_\mu$ and $M_\mu$ become truly self-averaging. That is, in the thermodynamic limit there are no fluctuations around their mean values $\bar m_\mu$ and $\bar M_\mu$, hence
\begin{align}
\lim_{N\to\infty}\mathbb{P}_N(m_\mu)=\delta(m_\mu-\bar m_\mu), 
\qquad
\lim_{N\to\infty}\mathbb{P}_N(M_\mu)=\delta(M_\mu-\bar M_\mu),
\end{align}
thus the replica symmetric picture holds exactly now. 
\end{remark}

\subsection{Parallel processing in the low load regime}
The first scenario we analyze is the simplest available, that is,  when both the number of stored patterns $K$ and the dilution level $a$ stay finite in the thermodynamic limit, namely 
\begin{align}
K \in \mathbb{R}^+,\quad K>2, \qquad a\in(0,1].
\end{align}
Mirroring the  exposition of the high storage regime in Sec. \ref{sec:serial}, we will analyze BEG neural network first and, afterwards, its generalization provided by the GS model, the whole under the umbrella of Guerra's interpolation techniques.

\subsubsection{Blume-Emery-Griffiths model}
\label{sec:BEGlowload}

In order to apply Guerra's interpolation, following the path exploited in Sec. \ref{sec:serial}, we need first to introduce an interpolating quenched statistical pressure whose extrema are meaningful for the present treatment (that is one side of the interpolation recovers the true statistical pressure of the model and the other side reduces to a solvable model) and then to apply  the Fundamental Theorem of Calculus (FTC) to connect these extrema: this way we recover the expression of the quenched statistical pressure of the original model by computing, rather than directly everything at $t=1$, everything at $t=0$  adding also the contribution  of the statistical pressure derivative with respect to $t$ which is, in general, simpler to afford. 

\begin{definition}
    Let the interpolating parameter  be $t \in [0,1]$ and let $\psi_\mu, \ \varphi_\mu$, $\mu=1, \hdots, K$  be constants whose specific expression will be set a posteriori to ensure self-averaging of the order parameters. The definition of the interpolating quenched statistical pressure at finite volume is
\begin{align}\label{InterpoParBEG}
    \mathcal{A}_{N, K, a}(\beta; t)=&\dfrac{1}{N} \mathbb{E} \log \mathcal{Z}_{N, K, a}(\beta;t \vert \bm \xi)  \notag \\
    =&\dfrac{1}{N} \mathbb{E}\log \sum_{\{\bm \sigma\}} \exp \left[\dfrac{\beta N t}{2}a \sum_{\mu=1}^K (m_\mu)^2 + \dfrac{\beta N t}{2}a(1-a) \sum_{\mu=1}^K (M_\mu)^2 \right. \notag \\ 
    &\left.+ (1-t) N\sum_{\mu=1}^K \left( \psi_\mu m_\mu + \varphi_\mu M_\mu\right)\right].
\end{align}
\end{definition}
By relying upon the FDT on \eqref{InterpoParBEG}, as stated, immediately we reach the main theorem of this Section that reads 
\begin{theorem}
\label{thm:BEGnna0}
Under the parallel retrieval ansatz, in the thermodynamic limit $N\to \infty$, in the low-storage of patterns and for finite pattern's dilution values, the expression of the quenched statistical pressure of the Blume-Emery-Griffiths neural network model, whose Hamiltonian is \eqref{eq:HamiltSspin}, reads in terms of its order and control parameters as 
    \begin{align}
    \label{eq:AlowBEG}
    \mathcal{A}_{K,a}(\beta) =&  \mathbb{E}_{\bm\xi} \log \left\{ 1+2 e^{\beta G_1(\bm\xi,\bm{M})} \cosh \left[\beta G_2(\bm\xi,\bm{m}) \right]\right\}-\dfrac{\beta}{2} a\sum_\mu \Bigg(\mb_\mu^2 +(1-a)\bar{M}_\mu^2\Bigg),
\end{align}
where
\begin{equation}
    G_1(\bm\xi,\bm{M})=  \sum_{\mu=1}^K  [(\xi^\mu)^2-a] \bar M_\mu, \qquad G_2(\bm\xi,\bm{m})= \sum_{\mu=1}^K\xi^\mu \bar m_\mu,
\end{equation}
and  where the order parameters must fulfill the following self-consistency equations, for $\mu=1, \hdots, K$
\begin{equation}
    \begin{array}{lll}
        \mb_\mu=& \mathbb{E}_{\bm\xi} \left[ \dfrac{ \tanh \left[\beta G_2(\bm\xi,\mb) \right]}{ 1+\frac{1}{2} e^{-\beta G_1(\bm\xi,\bm{M})} \cosh^{-1} \left[\beta G_2(\bm\xi,\bm{m})\right]}\left(\dfrac{\xi^\mu}{a}\right) \right], 
        \\\\
        \bar{M}_\mu=& \mathbb{E}_{\bm\xi}  \left[ \dfrac{1}{ 1+\frac{1}{2} e^{-\beta G_1(\bm\xi,\bm{M})} \cosh^{-1} \left[\beta G_2(\bm\xi,\bm{m})\right]}\left(\dfrac{(\xi^\mu)^2-a}{a(1-a)}\right)\right].
    \end{array}
    \label{eq:selfMT}
\end{equation}
\end{theorem}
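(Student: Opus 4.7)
The plan is to follow the same Guerra interpolation scheme used for Theorem \ref{th:BEG_RS_final_ret_seriale}, but greatly simplified by the fact that $\alpha=0$ kills all spin-glass contributions: no $z,\tilde z$ bosons need to be introduced, and no replicated order parameters $q,\tilde q,p,\tilde p$ enter the sum rule. Concretely, I would combine the $t=0$ evaluation of \eqref{InterpoParBEG} with its $t$-derivative through the Fundamental Theorem of Calculus, and then close the computation by imposing self-averaging on the Mattis magnetizations and choosing $\psi_\mu,\varphi_\mu$ accordingly.

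\textbf{Step 1 (one-body boundary at $t=0$).} Plugging the definitions \eqref{eq:m_mattis_order}--\eqref{eq:M_order} into \eqref{InterpoParBEG} at $t=0$ makes the exponent a sum of single-site terms, so the partition function factorizes over $i$. Summing the inner sum over $\sigma\in\{-1,0,+1\}$ gives $1+2\,e^{(\cdot)}\cosh(\cdot)$, and using the i.i.d. structure of the patterns across sites yields
\begin{equation*}
\lim_{N\to\infty}\mathcal{A}_{N,K,a}(\beta;0)=\mathbb{E}_{\bm\xi}\log\!\left\{1+2\exp\!\Bigl[\tfrac{1}{a(1-a)}\sum_\mu\varphi_\mu\bigl((\xi^\mu)^2-a\bigr)\Bigr]\cosh\!\Bigl[\tfrac{1}{a}\sum_\mu\psi_\mu\xi^\mu\Bigr]\right\},
\end{equation*}
where $\bm\xi=(\xi^1,\dots,\xi^K)$ is now a single-site pattern vector.

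\textbf{Step 2 (sum rule).} Differentiating \eqref{InterpoParBEG} in $t$ directly gives
\begin{equation*}
\partial_t\mathcal{A}_{N,K,a}(\beta;t)=\tfrac{\beta a}{2}\sum_\mu\langle m_\mu^2\rangle_t+\tfrac{\beta a(1-a)}{2}\sum_\mu\langle M_\mu^2\rangle_t-\sum_\mu\psi_\mu\langle m_\mu\rangle_t-\sum_\mu\varphi_\mu\langle M_\mu\rangle_t.
\end{equation*}
Now I make the standard Guerra choice $\psi_\mu=\beta a\,\bar m_\mu$ and $\varphi_\mu=\beta a(1-a)\,\bar M_\mu$. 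Since $K$ is finite and $\alpha=0$, $m_\mu$ and $M_\mu$ are averages of $N$ i.i.d. bounded terms of size $O(1/N)$, hence self-averaging uniformly in $t\in[0,1]$: $\lim_N\langle(m_\mu-\bar m_\mu)^2\rangle_t=\lim_N\langle(M_\mu-\bar M_\mu)^2\rangle_t=0$. Completing the squares as in Proposition \ref{prop:RS-identities} then gives
\begin{equation*}
\lim_{N\to\infty}\partial_t\mathcal{A}_{N,K,a}(\beta;t)=-\tfrac{\beta a}{2}\sum_\mu\bar m_\mu^2-\tfrac{\beta a(1-a)}{2}\sum_\mu\bar M_\mu^2,
\end{equation*}
which is independent of $t$, so the $t$-integral in the FTC is immediate.

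\textbf{Step 3 (FTC and self-consistency).} Applying \eqref{eq:FTC} and substituting the chosen $\psi_\mu,\varphi_\mu$ back into the one-body term makes $\tfrac{1}{a}\sum_\mu\psi_\mu\xi^\mu=\beta G_2(\bm\xi,\bm m)$ and $\tfrac{1}{a(1-a)}\sum_\mu\varphi_\mu((\xi^\mu)^2-a)=\beta G_1(\bm\xi,\bm M)$, yielding exactly \eqref{eq:AlowBEG}. The self-consistency equations \eqref{eq:selfMT} then follow from $\partial\mathcal{A}_{K,a}/\partial\bar m_\mu=0$ and $\partial\mathcal{A}_{K,a}/\partial\bar M_\mu=0$: the derivative of the log-cosh numerator produces a $\tanh[\beta G_2]$ (times $\xi^\mu$) for $\bar m_\mu$ and a $1$ (times $(\xi^\mu)^2-a$) for $\bar M_\mu$, with the denominator $1+2e^{\beta G_1}\cosh[\beta G_2]$ rewritten in the inverse form $1+\tfrac{1}{2}e^{-\beta G_1}\cosh^{-1}[\beta G_2]$ after dividing numerator and denominator by $2e^{\beta G_1}\cosh[\beta G_2]$.

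\textbf{Main obstacle.} The only non-mechanical point is the self-averaging in Step 2. It is expected because $K$ is finite and there are no glassy reaction fields (no $O(\sqrt N)$ noise contributions from $\mu>1$ as in Sec.~\ref{sec:RS-BEGNN}), but to make the exchange of $\lim_N$ and $\int_0^1 dt$ rigorous one needs a uniform-in-$t$ concentration estimate for $\langle(m_\mu-\bar m_\mu)^2\rangle_t$ and $\langle(M_\mu-\bar M_\mu)^2\rangle_t$; this is a standard second-moment/Gaussian-concentration argument on the one-body interpolating measure and is the analogue of Proposition \ref{prop:RS-identities} in the present $\alpha=0$ setting.
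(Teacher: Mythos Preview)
Your proposal is correct and follows essentially the same route as the paper: the same interpolating pressure \eqref{InterpoParBEG}, the same choices $\psi_\mu=\beta a\,\bar m_\mu$, $\varphi_\mu=\beta a(1-a)\,\bar M_\mu$, and the same FTC sum rule, with the self-consistency equations obtained by extremization. If anything, you spell out more explicitly than the paper does how the $\tanh$/inverse-$\cosh$ structure in \eqref{eq:selfMT} arises and you flag the uniform-in-$t$ concentration needed to justify the $\lim_N$/$\int dt$ exchange, which the paper simply absorbs into the self-averaging assumption.
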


\begin{remark}
    We stress that the case under consideration in this Section can not be obtained by directly setting $\alpha=0$ in Sec. \ref{sec:serial} (as standard in neural networks): indeed, this discrepancy raises because the starting assumptions are  different (i.e. serial vs the parallel retrieval ansatz). By requiring $\alpha=0$ in Eq. \eqref{eq:ARS} we get \eqref{eq:AlowBEG} with $K=1$.
\end{remark}

Again the proof of Theorem \ref{thm:2} is a trivial application of the FTC \eqref{eq:FTC} that we achieve, mirroring the previous Section, with the same premises  through these following Lemmas
\begin{lemma}
    The derivative with respect to $t$ of the quenched statistical pressure of the BEG model --in the low storage regime and under the parallel retrieval ansatz--   reads, at finite size $N$, as   
    \begin{align}
        d_t \mathcal{A}_{N, K, a}(\beta;t) = \dfrac{\beta}{2} a \sum_{\mu=1}^K \l (m_\mu)^2 \r + \dfrac{\beta }{2 }a(1-a) \sum_{\mu=1}^K \l (M_\mu)^2 \r - \sum_{\mu=1}^K \left( \psi_\mu \l m_\mu \r + \varphi_\mu \l M_\mu \r \right), 
    \end{align}
    which, in the thermodynamic limit,  becomes 
    \begin{align}
        d_t \mathcal{A}_{K, a}(\beta;t) = -\dfrac{\beta}{2} a\sum_\mu \mb_\mu^2 - \dfrac{\beta}{2a^2(1-a)^2} \sum_\mu \bar{M}_\mu^2,
    \end{align}
where we made the sharp choice $\psi^\mu= \beta a\bar m_\mu$ and $\varphi^\mu= \beta a(1-a) \bar M_\mu$.
\end{lemma}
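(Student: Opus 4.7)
The plan is essentially mechanical: differentiate the interpolating statistical pressure \eqref{InterpoParBEG} term by term, identify the resulting Gibbs expectations, and then close the computation with the self-averaging property of the order parameters available in this regime.

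First, I would observe that only the exponent inside $\mathcal Z_{N,K,a}(\beta;t\mid\bm\xi)$ depends on $t$, and its $t$-dependence is explicit and linear. Hence the general identity $\partial_t \log\mathcal Z = \omega(\partial_t (\text{exponent}))$ applies term by term. Carrying out the three derivatives (of the $m_\mu^2$ sector, the $M_\mu^2$ sector and the perturbing one-body term) and taking the expectation $\mathbb E_{\bm\xi}$, one finds
\begin{align*}
\frac{1}{N}\,\partial_t \mathbb E \log \mathcal Z_{N,K,a}(\beta;t\mid\bm\xi)
&= \frac{\beta a}{2}\sum_{\mu=1}^K \l m_\mu^2\r + \frac{\beta a(1-a)}{2}\sum_{\mu=1}^K \l M_\mu^2\r \\
&\quad - \sum_{\mu=1}^K\bigl(\psi_\mu \l m_\mu\r + \varphi_\mu \l M_\mu\r\bigr),
\end{align*}
which is exactly the announced finite-$N$ expression. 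No subtlety is involved: the expectation over the patterns commutes with the Gibbs average because $\psi_\mu,\varphi_\mu$ are deterministic and the Mattis quantities are measurable functions of $\bm\xi$.

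Second, for the thermodynamic limit I would invoke the self-averaging of $m_\mu$ and $M_\mu$ in the low-storage regime, a fact already emphasized in the remark preceding this subsection: with $\alpha=0$ the glassy sector is absent and the laws $\mathbb P_N(m_\mu)$, $\mathbb P_N(M_\mu)$ collapse onto Dirac deltas centered at $\bar m_\mu,\bar M_\mu$. This yields $\lim_{N\to\infty}\l m_\mu^2\r=\bar m_\mu^2$ and $\lim_{N\to\infty}\l M_\mu^2\r=\bar M_\mu^2$ (the parallel-retrieval analogues of Proposition~\ref{prop:RS-identities}). Substituting the sharp choices $\psi_\mu=\beta a\bar m_\mu$ and $\varphi_\mu=\beta a(1-a)\bar M_\mu$ produces the standard cancellations $\tfrac{\beta a}{2}\bar m_\mu^2-\beta a\bar m_\mu^2=-\tfrac{\beta a}{2}\bar m_\mu^2$ and the analogous one in the $M$ sector, reproducing the displayed thermodynamic-limit formula (with the quadratic structure exactly mirroring \eqref{eq:Hamiltonian_mM}).

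There is no genuine obstacle in this lemma: it is essentially a bookkeeping of coefficients, and the calibration of $\psi_\mu,\varphi_\mu$ has been engineered precisely to absorb the linear one-body pieces generated by the $t$-derivative. The only point deserving a separate line of justification, if one insists on full rigor, is the self-averaging step: a standard variance estimate $\l(m_\mu-\bar m_\mu)^2\r=O(1/N)$ (and its analogue for $M_\mu$), already implicit in the finiteness of $K$, closes the gap without further machinery.
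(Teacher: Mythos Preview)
Your proposal is correct and follows essentially the same approach the paper implies (the paper actually skips this proof, pointing to the analogous Lemma~\ref{lemma1}, of which this is the simpler $\alpha=0$ version with no Gaussian disorder to integrate by parts). The only minor remark is that the displayed thermodynamic-limit coefficient $-\tfrac{\beta}{2a^2(1-a)^2}$ in the statement is evidently a typo for $-\tfrac{\beta}{2}a(1-a)$, consistent with your computation and with \eqref{eq:AlowBEG}.
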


\begin{lemma}
The expression of the interpolating quenched statistical pressure, at $t=0$ and finite-size $N$, reads as 
    \begin{align}
        \mathcal{A}_{N, K, a}(\beta;t=0) =  \mathbb{E} \log \left[ 1+2 \exp \left( \sum_{\mu=1}^K  \beta  \eta^\mu \bar M_\mu \right) \cosh \left(\sum_{\mu=1}^K\beta\xi^\mu \bar m_\mu \right)\right],
    \end{align}
    furthermore, the thermodynamic limit leaves the expression unblemished as  $\lim_{N\to\infty}\mathcal{A}_{N, K, a}(\beta;t=0) =  \mathcal{A}_{K, a}(\beta;t=0)$.
\end{lemma}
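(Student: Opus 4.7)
The plan is to exploit the fact that at $t=0$ the interpolating cost function in \eqref{InterpoParBEG} contains only the one-body pieces proportional to $\psi_\mu$ and $\varphi_\mu$; once the microscopic definitions of $m_\mu$ and $M_\mu$ are inserted, the exponent becomes additive over the sites $i$ and the partition function factorizes into a product of $N$ identical single-spin problems over $\sigma_i\in\{-1,0,+1\}$.

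Concretely, I would first set $t=0$ in \eqref{InterpoParBEG}, plug in $m_\mu(\bm\sigma)=(Na)^{-1}\sum_i\xi_i^\mu\sigma_i$ and $M_\mu(\bm\sigma)=(Na(1-a))^{-1}\sum_i\eta_i^\mu\sigma_i^2$, and fix the auxiliary constants to the values already dictated by the previous Lemma, namely $\psi_\mu=\beta a\bar m_\mu$ and $\varphi_\mu=\beta a(1-a)\bar M_\mu$. The prefactors $a$ and $a(1-a)$ then cancel exactly against the normalizations hidden in $m_\mu$ and $M_\mu$, and the overall $N$ in front of the brackets absorbs the $1/N$ coming from the magnetizations, so that the exponent reduces to
\begin{equation*}
\sum_{i=1}^N\sum_{\mu=1}^K \beta\bigl(\bar m_\mu\,\xi_i^\mu\,\sigma_i+\bar M_\mu\,\eta_i^\mu\,\sigma_i^2\bigr),
\end{equation*}
which is additive in $i$.

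The next step is the explicit single-site sum over $\sigma_i\in\{-1,0,+1\}$. The $\sigma_i=0$ configuration contributes $1$; for $\sigma_i=\pm1$ one has $\sigma_i^2=1$, so the two remaining contributions combine via $e^{x}+e^{-x}=2\cosh x$ into $2\exp\bigl(\sum_\mu\beta\bar M_\mu\eta_i^\mu\bigr)\cosh\bigl(\sum_\mu\beta\bar m_\mu\xi_i^\mu\bigr)$. Taking the logarithm of the resulting product turns it into a sum of $N$ terms, each depending only on the local pattern entries $\{\xi_i^\mu\}_{\mu=1}^K$ (and hence on $\{\eta_i^\mu\}$ via \eqref{eq:def_eta}). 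Since these are i.i.d.\ across sites, the quenched average $\mathbb{E}_{\bm\xi}$ collapses onto a single representative site, the factor $N$ cancels against the $1/N$ in front of $\mathbb{E}\log\mathcal{Z}$, and one recovers precisely the formula in the statement, with $(\xi^\mu,\eta^\mu)$ now denoting a generic one-site pair.

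Finally, the thermodynamic limit is tautological: the right-hand side contains no residual dependence on $N$ (in this regime $K$ is finite, so no CLT-type noise needs to be controlled), hence $\lim_{N\to\infty}\mathcal{A}_{N,K,a}(\beta;t=0)=\mathcal{A}_{K,a}(\beta;t=0)$ with no further work. The only delicate point throughout the argument is the careful matching of the $a$ and $a(1-a)$ normalizations between the Mattis magnetizations and the auxiliary constants $\psi_\mu,\varphi_\mu$, but this is exactly the bookkeeping already validated in the preceding derivative lemma and therefore does not constitute a genuine obstacle.
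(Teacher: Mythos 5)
Your proof is correct and follows exactly the route the paper intends: the paper omits this proof, referring to the analogous (and more involved) $t=0$ computation of Appendix~\ref{app:ob}, which proceeds precisely as you do — insert the choices $\psi_\mu=\beta a\bar m_\mu$, $\varphi_\mu=\beta a(1-a)\bar M_\mu$, let the normalizations cancel so the exponent becomes additive over sites, factorize the configuration sum into single-spin sums over $\{-1,0,+1\}$ yielding $1+2\exp(\sum_\mu\beta\bar M_\mu\eta^\mu)\cosh(\sum_\mu\beta\bar m_\mu\xi^\mu)$, and use the i.i.d.\ structure of the pattern entries to collapse the $\frac1N\sum_i$ onto one representative site. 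Your observation that the finite-$N$ expression is already $N$-independent (since $K$ and $a$ are fixed here), making the thermodynamic limit immediate, is also the correct justification of the second claim.
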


Since the proofs of these two Lemmas are rather similar to those of the previous case discussed in Sec. \ref{sec:RS-BEGNN} (actually simpler as there are no overlaps here), we  simply skip them for the sake of conciseness.

\subsubsection{Ghatak-Sherrington model}
As in the previous Section, our goal here is to obtain an explicit expression of the quenched statistical pressure in the low load regime, for finite dilution and under the parallel retrieval assumption.
\newline
Also in this case we apply Guerra interpolation,  therefore --assuming the path to follow is \resub{understood} at this point-- we need to state the following 
\begin{definition}
    Let us consider the interpolating parameter $t \in [0,1]$ and $\psi_\mu, \ \varphi_\mu$, $\mu=1, \hdots, K$ some constants to be set a posteriori to ensure self-averaging of the order parameters. 
    \newline
    The interpolating quenched statistical pressure, at finite volume $N$, reads as 
\begin{align}
    &\mathcal{A}_{N,K,a,S}(\beta;t)=\dfrac{1}{N} \mathbb{E} \log \mathcal{Z}_{N, K, a, S}(\beta;t \vert \bm \xi)  \notag \\
    &=\dfrac{1}{N} \mathbb{E}\log \sum_{\{\bm \sigma\}} \exp \left[\dfrac{\beta N t}{2} \N\sum_{\mu=1}^K (m_\mu)^2 + \dfrac{\beta N t}{2}\Nn \sum_{\mu=1}^K (M_\mu)^2 + (1-t) N\sum_{\mu=1}^K \left( \psi_\mu m_\mu + \varphi_\mu M_\mu\right)\right].
\end{align}
\end{definition}

The underlying {\em modus operandi} is kept the same of the previous section here: also in this case at $t=1$ we recover the original model, instead for $t=0$ we are left with a set of one-body terms whose explicit computation is trivial such that, as a result, we construct a sum rule by the FTC (where the self-averaging of the order parameters makes the evaluation of the derivative of the statistical pressure trivial too).
\newline
Mirroring Assumption \ref{RS-assumption} of the previous section, we impose that the Mattis magnetizations do not fluctuate in the thermodynamic limit or, in other words, we assume that their distributions become delta-peaked on their mean values as $N\to \infty$, therefore easily we can state the main Theorem of this Section, that reads 
\begin{figure}[t!]
    \centering
    \includegraphics[width=15cm]{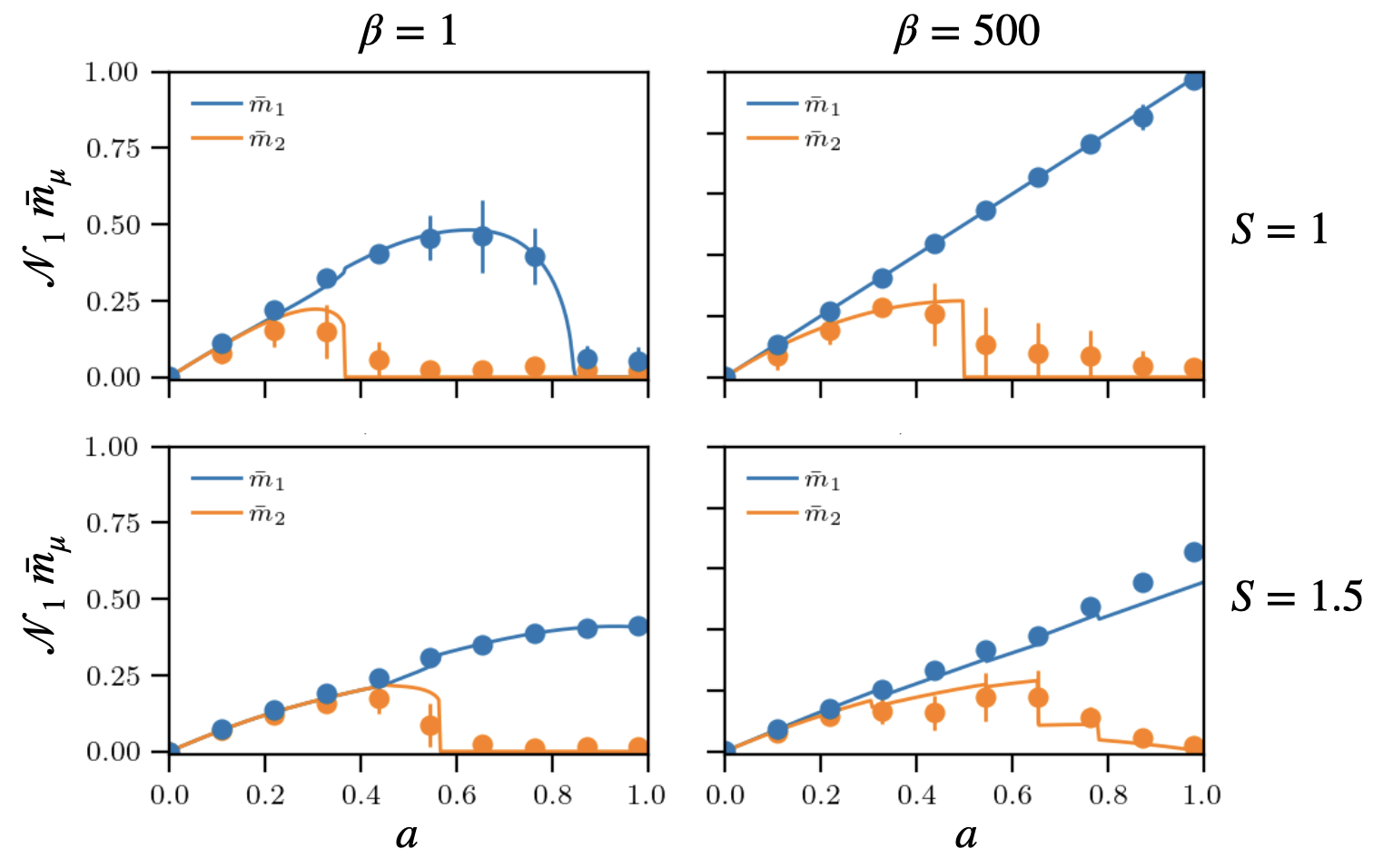}
    \caption{MCMC simulations (dots) versus solutions of the self-consistency equations \eqref{eq:selfMT1} (solid lines), for $K=2$ stored patterns and $N=3000$ spins.  
Top row: $S=1$ (i.e.\ $\boldsymbol\sigma\in\{-1,0,+1\}^N$).  
Bottom row: $S=1.5$ (i.e.\ $\boldsymbol\sigma\in\{-1,-\tfrac{1}{3},+\tfrac{1}{3},+1\}^N$).  
\\
The agreement between simulations and theory is excellent across all regimes.  
The self-consistency equations correctly predict the emergence of a \emph{parallel retrieval} ansatz in certain dilution regions, i.e.\ both Mattis magnetizations remain simultaneously non-zero.  
For small $\beta$ (high temperature), parallel retrieval is observed only at strong dilution (small $a$), regardless of whether $0\in\Omega$.  
For large $\beta$ (low temperature), parallel retrieval persists for all values of $a$ when $0\notin\Omega$, whereas if $0\in\Omega$ it survives only up to a critical dilution level.  
In particular, for $K=2$, $S=1$ and $\beta=500$, the critical threshold is $a_c\simeq 0.5$.
}
    \label{fig:MC_self}
\end{figure}
\begin{theorem}
\label{thm:2}
Under the parallel retrieval ansatz, in the thermodynamic limit $N\to \infty$, in the low-storage of patterns and for finite pattern's dilution values, the expression of the quenched statistical pressure of the Ghatak-Sherrington neural network model, whose Hamiltonian is \eqref{eq:HamiltSspin}, reads in terms of its order and control parameters as 
    \begin{align}
    \mathcal{A}_{K,a,S}(\beta) =& \mathbb{E}_{\bm\xi} \log  \left\{\SOMMA{k=0}{2S} \mathcal{K}(\beta, k, \bm\xi, \bar{\bm M}, \bar{\bm m})\right\}-\dfrac{\beta}{2} \N\SOMMA{\mu=1}{K} \mb_\mu^2 - \dfrac{\beta}{2}\Nn \SOMMA{\mu=1}{K} \bar{M}_\mu^2,
\end{align}
where 
\begin{align}
         s(k) &= -1 + \dfrac{k}{S},
         \\
         \mathcal{K}(\beta, k, \bm\xi, \bar{\bm M}, \bar{\bm m})&=\exp\left\{\beta\SOMMA{\mu=1}{K}  \left[\Big((\xi^\mu)^2-\N\Big)\bar M_\mu s^2(k) + \xi^\mu \bar m_\mu s(k)\right]\right\}
    \end{align}
and the order parameters must fulfill the following self-consistency equations, for $\mu=1, \hdots, K$
\begin{align}
    \mb_\mu=& \mathbb{E}_{\bm\xi} \left[ \dfrac{\SOMMA{k=0}{2S} \mathcal{K}(\beta, k, \bm\xi, \bar{\bm M}, \bar{\bm m}) s(k)}{\SOMMA{k=0}{2S}  \mathcal{K}(\beta, k, \bm\xi, \bar{\bm M}, \bar{\bm m})}\left(\dfrac{\xi^\mu}{\N} \right) \right]\notag \\ 
    \label{eq:selfMT1}
    \\\bar{M}_\mu=& \mathbb{E}_{\bm\xi} \left[ \dfrac{\SOMMA{k=0}{2S}  \mathcal{K}(\beta, k, \bm\xi, \bar{\bm M}, \bar{\bm m})s^2(k)}{\SOMMA{k=0}{2S}  \mathcal{K}(\beta, k, \bm\xi, \bar{\bm M}, \bar{\bm m})}\left(\dfrac{(\xi^\mu)^2-\N}{\Nn}  \right) \right].
    \notag
\end{align}
\end{theorem}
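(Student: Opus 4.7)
My plan is to mirror, step by step, the interpolation scheme already deployed for the BEG case in Theorem~\ref{thm:BEGnna0} and for the serial/high-storage GS case, leveraging the fact that in the low-load regime (\(\alpha=0\)) the spin-glass sector is absent and only Mattis magnetizations survive. The starting point is the interpolating pressure \(\mathcal{A}_{N,K,a,S}(\beta;t)\) just defined: at \(t=1\) it equals the target pressure \(\mathcal A_{K,a,S}(\beta)\), while at \(t=0\) the Hamiltonian is a purely linear (one-body) function of the neural activities, so the \(\sigma_i\)'s decouple and the sum over \(\bm\sigma\in\Omega^N\) factorizes. The fundamental theorem of calculus then gives \(\mathcal A_{K,a,S}(\beta)=\lim_{N\to\infty}\mathcal A_{N,K,a,S}(\beta;0)+\int_0^1 d_s\mathcal A_{N,K,a,S}(\beta;s)\,ds\), which is the skeleton of the argument.

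For the \(t\)-derivative I would differentiate \eqref{AinterpolanteSspin} (adapted to the low-load form) and collect, through the Gibbs average \(\langle\cdot\rangle_t\), the two combinations \(\bigl\langle \tfrac{\beta}{2}\N\,m_\mu^2-\psi_\mu m_\mu\bigr\rangle_t\) and \(\bigl\langle \tfrac{\beta}{2}\Nn\,M_\mu^2-\varphi_\mu M_\mu\bigr\rangle_t\), completing the square with the identities \(\langle m_\mu^2-2\bar m_\mu m_\mu\rangle=-\bar m_\mu^2\) and \(\langle M_\mu^2-2\bar M_\mu M_\mu\rangle=-\bar M_\mu^2\) that, away from the high-storage regime, are exact in the thermodynamic limit (self-averaging of the Mattis magnetizations, which here replaces the RS assumption used in Sec.~\ref{sec:RS-BEGNN}). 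The tuning \(\psi_\mu=\beta\,\N\,\bar m_\mu\) and \(\varphi_\mu=\beta\,\Nn\,\bar M_\mu\) then makes every cross term vanish and yields the compact expression \(d_t\mathcal A_{K,a,S}(\beta;t)=-\tfrac{\beta}{2}\N\sum_\mu\bar m_\mu^2-\tfrac{\beta}{2}\Nn\sum_\mu\bar M_\mu^2\), which is \(t\)-independent and integrates trivially.

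The one-body term at \(t=0\) is where the graded-response structure really enters. After factorization over sites \(i\), each site contributes \(\sum_{k=0}^{2S}\exp\bigl\{s(k)\sum_\mu\psi_\mu\xi_i^\mu/\N+s^2(k)\sum_\mu\varphi_\mu\eta_i^\mu/\Nn\bigr\}\) with \(s(k)=-1+k/S\); relabeling one generic site, taking \(\mathbb E_{\bm\xi}\), and plugging back the choice of \(\psi_\mu,\varphi_\mu\) above reproduces the kernel \(\mathcal K(\beta,k,\bm\xi,\bar{\bm M},\bar{\bm m})\) in the statement. Here the only genuinely new step compared to BEG is to keep the sum over the full \((2S+1)\)-state alphabet explicit (rather than rewriting it as \(1+2e^{(\cdot)}\cosh(\cdot)\) as one can do for \(S=1\)); this is a bookkeeping, not a conceptual, difficulty.

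Assembling the two pieces via the sum rule gives the announced formula for \(\mathcal A_{K,a,S}(\beta)\). The self-consistency equations \eqref{eq:selfMT1} follow by extremizing the pressure with respect to \(\bar m_\mu\) and \(\bar M_\mu\), using \(\partial_{\bar m_\mu}\mathcal K\) and \(\partial_{\bar M_\mu}\mathcal K\) and recognizing the result as a Gibbs-like expectation with weights \(\mathcal K/\sum_k\mathcal K\); the factor \(\xi^\mu/\N\) (resp.\ \((\xi^\mu)^2-\N)/\Nn\)) is the Jacobian of the corresponding linear term. The only delicate point in the plan is rigorously justifying the self-averaging step used to collapse the quadratic Mattis overlaps onto their means: since we are at \(\alpha=0\) and \(K\) stays finite, this reduces to a standard large-deviation/concentration argument for a finite-dimensional vector of bounded random variables, so I expect no real obstacle there, only the need to state it cleanly. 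Everything else reduces, as noted, to the same FTC manipulation already carried out for the BEG model, and for this reason I would, as the authors do, refer the reader to those lemmas rather than reprove them in full.
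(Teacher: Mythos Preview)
Your proposal is correct and matches the paper's approach essentially line by line: the authors likewise invoke the fundamental theorem of calculus on the interpolating pressure, compute the \(t\)-derivative using self-averaging of the Mattis magnetizations with the same tuning \(\psi_\mu=\beta\N\bar m_\mu\), \(\varphi_\mu=\beta\Nn\bar M_\mu\), evaluate the factorized one-body term at \(t=0\) as a sum over the \(2S+1\) spin states, and extremize to obtain the self-consistencies. The paper in fact omits the details entirely, referring the reader back to the BEG lemmas, so your write-up is if anything more explicit than theirs.
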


Again the proof of Theorem \ref{thm:2} is a trivial application of the FTC \eqref{eq:FTC} that we achieve, mirroring the previous Section so we do not report it.

In Fig.~\ref{fig:MC_self} we compare MCMC simulations with the self-consistency equations, finding excellent agreement across all regimes.  
The analysis highlights how the presence (or absence) of $0$ in the spin state space $\Omega$ crucially affects the onset of parallel retrieval.  
When $0\in\Omega$ (e.g.\ $S=1$), parallel retrieval is possible only under strong dilution (small $a$), and disappears beyond a critical dilution threshold that depends on temperature (for instance, for $K=2$, $S=1$ and $\beta=500$ we find $a_c\simeq 0.5$).  
By contrast, when $0\notin\Omega$ (e.g.\ $S=1.5$), the system can sustain parallel retrieval across the whole range of $a$ at low temperatures, while at high temperatures the effect remains confined to small $a$.  
This behavior confirms the theoretical predictions of the self-consistency equations and illustrates how dilution and the structure of $\Omega$ jointly control the stability of parallel states; notably, to the best of our knowledge, this is the first work addressing the interplay between graded neuronal response (i.e.\ the structure of $\Omega$) and the network's ability to retrieve multiple patterns in parallel.  

\begin{figure}[t!]
    \centering
    \includegraphics[width=15.5cm]{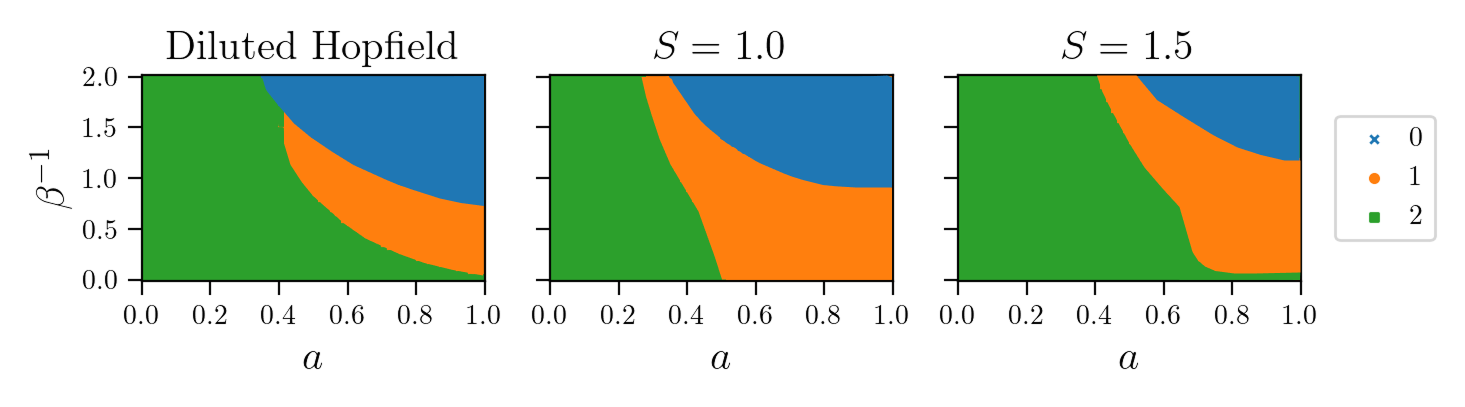}
    \caption{\resub{
Phase diagrams in the $(a,\beta^{-1})$ plane for $K=2$. For each pair $(a,\beta^{-1})$, we numerically solve the self-consistency equations and count the number of overlaps $m_\mu$ such that $|m_\mu|\gtrsim \N(1-a)^{K-1}$, which corresponds to the expected value of the smallest magnetization in the $\beta\to\infty$ limit. Each point is shown with the color and marker reported in the legend (to the right of each row), encoding the number of non-zero self-consistent solutions for that parameter choice. Columns correspond to different models: \textit{(left)} (possibly diluted) Hopfield model with spins $\pm 1$, \textit{(center)}  BEG model with spins $\{-1,0,+1\}$, and \textit{(right)} Ghatak--Sherrington (GS) model with $S=3/2$ and spins $\{-1,-\tfrac{1}{3},+\tfrac{1}{3},+1\}$. At low temperature ($\beta^{-1}\to 0$), the BEG model displays the parallel ansatz (i.e., more than one non-zero solution) only under strong dilution ($a\to 0$), whereas both the Hopfield (left) and GS (right) models exhibit more than one non-zero solution for any $a$, provided the temperature is sufficiently low.
}}
\label{fig:phase_diagram_different_regimes}
\end{figure}

\resub{A complementary, global view is provided by the phase diagrams in Fig.~\ref{fig:phase_diagram_different_regimes}, where we map, in the $(a,\beta^{-1})$ plane, the regions in which one observes one or more non-vanishing self-consistent overlaps, thereby clarifying how the ergodic, serial and hierarchical/parallel regimes are selected by the combined effect of dilution and temperature across the different spin-state spaces.}

% \begin{figure}[t]
%     \centering
%     \includegraphics[width=0.47\linewidth]{Plots/finite_size_scaling/f_scal_1.png}\qquad\includegraphics[width=0.47\linewidth]{Plots/finite_size_scaling/f_size_scaling-semintero.png}
%     \caption{\textbf{Mild dilution regime}: (left) Blume-Every-Griffits neural network  (whose neural response is restricted to values $\{-1,0,+1\}$), (right) Ghatak-Sherrington neural network with more and more graded neural responses. Here we report the behavior of the  Mattis magnetizations for $K=3$ versus the dilution in the network, that is tuned by $a$ (and does not scale with the network size $N$). At $a=1$ there is no dilution, thus solely retrieval of one pattern per time is allowed and, indeed, only one magnetization -the green curve- survives. Decreasing $a$ beyond the threshold $a_c = 1/2$ allows the network to switch from serial to parallel retrieval and all the magnetizations are (hierarchically) raised up to $a \to 0$, where there is solely dilution and thus the network no longer exists. Note that the simultaneous growth of these signals also in the interval $a \in (0.5-1)$ is a finite-size effect as it shines by looking at the legend where the different simulated sizes are reported. Solid lines are from the theory, i.e. $m_\mu = (1-a)^{\mu-1} \dfrac{a}{2S}\sum\limits_{k=0}^{2S} \big(-1+\dfrac{k}{S}\big)^2= a(1-a)^{\mu-1}\dfrac{(S+1) (2 S+1)}{6 S^2}=(1-a)^{\mu-1}\mathbb{E}[\bm\xi^2]$, while the dots stem from MCMC simulations and their standard deviation is also shown as a vertical bar.}
%     \label{fig:MC}
% \end{figure}
 
\subsection{Parallel processing in the medium load regime}
\label{sec:mediumload}
We are now ready to address the case of a sub-linear diverging storage, namely when the number of stored patterns scales with the system size as
\begin{align}
\tilde\alpha = \lim_{N \to \infty} \frac{K}{N^\delta}, \qquad \delta \in (0,1),
\end{align}
and the network is \emph{extremely diluted} but still equipped with a giant component.  
In this regime the dilution parameter $a$ vanishes with $N$ as
\begin{equation}
a \;\to\; a_\gamma(N) = \frac{1}{N^\gamma}, \qquad \gamma\in(0,1).
\end{equation}

Accordingly, the distribution of the pattern entries becomes
\begin{align}
\mathbb{P}(\xi_i^\mu) =
\begin{cases}
\big(1-a_\gamma(N)\big)\,\delta_{\xi_i^\mu,0}
+ \dfrac{a_\gamma(N)}{2S}\SOMMA{\substack{k=0 \\ k\neq S}}{2S}\delta_{\xi_i^\mu,-1+\tfrac{k}{S}} 
& \text{if } S\in \mathbb{N},\\\\
\big(1-a_\gamma(N)\big)\,\delta_{\xi_i^\mu,0}
+ \dfrac{a_\gamma(N)}{2S+1}\SOMMA{k=0}{2S}\delta_{\xi_i^\mu,-1+\tfrac{k}{S}} 
& \text{if } \dfrac{2S+1}{2}\in \mathbb{N}.
\end{cases}
\label{eq:high_dil_prob}
\end{align}
Hence, unlike the finite-dilution setting, here the probability of null entries in the patterns increases with the system size.  
We recall that for $S=1$ this construction reduces to the \resub{BEG} model.

The relevant order parameters in this regime remain the $K$ Mattis magnetizations $m_\mu$ and $M_\mu$, $\mu=1,\dots,K$, with the sole modification that the dilution parameter is now replaced by $a\to a_\gamma(N)$.

%As shown in \cite{Coolen1, parallel_mio}, there exists a region for sufficiently small values of $\gamma$ (specifically, $\gamma<\delta/2$) in which the models remain effectively fully connected despite the dilution.  
In this regime, we can still apply the standard statistical-mechanics machinery, in particular Guerra’s interpolation scheme: since the technical steps of the method are unchanged with respect to the previous sections, we report below only the main results.

\subsubsection{Blume-Emery-Griffiths model}
For the BEG model, Eq. \eqref{eq:high_dil_prob} reduces to
\begin{align}
\mathbb{P}(\xi_i^\mu) =\big(1-a_\gamma(N)\big)\,\delta_{\xi_i^\mu,0}
+ \dfrac{a_\gamma(N)}{2}\delta_{\xi_i^\mu,-1} + \dfrac{a_\gamma(N)}{2}\delta_{\xi_i^\mu,+1}
\end{align}
and the Hamiltonian reads as
\begin{align}
    \mathcal{H}_{N, \delta, \gamma}(\bm \sigma \vert \bm \xi)= - \dfrac{N}{2}a_\gamma(N) \SOMMA{\mu=1}{\tilde\alpha N^{\delta}} m_\mu^2 - \dfrac{N}{2}a_\gamma(N)(1-a_\gamma(N))\SOMMA{\mu=1}{\tilde\alpha N^{\delta}} M_\mu^2.
\end{align}

Following the usual route under the self-averaging of the order parameters, the self-consistency equations that describe the computational behaviour of the BEG network in this setting are
\begin{equation}
    \begin{array}{lll}
         \mb_\mu=& \mathbb{E}_{\bm\xi} \left[ \dfrac{ \tanh \left[\beta G_2(\gamma, \delta, \tilde\alpha,\bm\xi,\bar{\bm m}) \right]}{ 1+\frac{1}{2} e^{-\beta G_1(\gamma, \delta, \tilde\alpha,\bm\xi,\bar{\bm M})} \cosh^{-1}\left[\beta G_2(\gamma, \delta, \tilde\alpha,\bm\xi,\bar{\bm m}) \right]}\left(\dfrac{\xi^\mu}{a_\gamma(N)}\right) \right] ,
    \\\\
    \bar{M}_\mu=& \mathbb{E}_{\bm\xi}  \left[\dfrac{1}{ 1+\frac{1}{2} e^{-\beta G_1(\gamma, \delta, \tilde\alpha,\bm\xi,\bar{\bm M})} \cosh^{-1}\left[\beta G_2(\gamma, \delta, \tilde\alpha,\bm\xi,\bar{\bm m}) \right]}\left(\dfrac{(\xi^\mu)^2-a_\gamma(N)}{a_\gamma(N)(1-a_\gamma(N))}\right)\right],
    \end{array}
    \label{eq:selfMT_high_K}
\end{equation}
where we recall that
\begin{equation}
    G_1(\gamma, \delta, \tilde\alpha,\bm\xi,\bar{\bm M}) = \SOMMA{\mu=1}{\tilde\alpha N^\delta}[(\xi^\mu)^2-a_\gamma(N)] \bar M_\mu \qquad G_2(\gamma, \delta, \tilde\alpha,\bm\xi,\mb) = \SOMMA{\mu=1}{\tilde\alpha N^\delta}\xi^\mu \bar m_\mu.
\end{equation}
Now, performing explicitly the average with respect $\xi^\mu$, with $\mu$ fixed, we get
\footnotesize
\begin{equation}
    \begin{array}{lll}
         \mb_\mu&=& \mathbb{E}_{\bm\xi} \left[ \dfrac{ \tanh \left[\beta\left(\mb_\mu+\SOMMA{\underset{\nu\neq \mu}{\nu=1}}{K}\mb_\nu\xi^\nu\right) \right]}{ 1+\frac{1}{2} \exp \left[- \beta\left(\Mb_\mu(1-a_\gamma(N))+\SOMMA{\underset{\nu\neq \mu}{\nu=1}}{K}\Mb_\nu\Big((\xi^\nu)^2-a_\gamma(N)\Big)\right)\right] \cosh^{-1} \left[\beta\left(\mb_\mu+\SOMMA{\underset{\nu\neq \mu}{\nu=1}}{K}\mb_\nu\xi^\nu\right) \right]} \right] ,
    \\\\
    \bar{M}_\mu&=& \mathbb{E}_{\bm\xi}\left\{ 1+\frac{1}{2} \exp \left[- \beta\left(\Mb_\mu(1-a_\gamma(N))+\SOMMA{\underset{\nu\neq \mu}{\nu=1}}{K}\Mb_\nu\Big((\xi^\nu)^2-a_\gamma(N)\Big)\right)\right] \cosh^{-1} \left[\beta\left(\mb_\mu+\SOMMA{\underset{\nu\neq \mu}{\nu=1}}{K}\mb_\nu\xi^\nu\right) \right]\right\}^{-1}
    \\\\
    &&-\mathbb{E}_{\bm\xi}  \left\{ 1+\frac{1}{2} \exp \left[- \beta\left(-\Mb_\mu a_\gamma(N)+\SOMMA{\underset{\nu\neq \mu}{\nu=1}}{K}\Mb_\nu\Big((\xi^\nu)^2-a_\gamma(N)\Big)\right)\right] \cosh^{-1} \left[\beta\left(\SOMMA{\underset{\nu\neq \mu}{\nu=1}}{K}\mb_\nu\xi^\nu\right) \right]\right\}^{-1}.
    \end{array}
\end{equation}
\normalsize
As $K\sim N^\delta$ with $\delta>0$, we can replace the sum over $\nu\neq \mu$ with the sum over all $\nu$
\footnotesize
\begin{equation}
    \begin{array}{lll}
         \mb_\mu&=& \mathbb{E}_{\bm\xi} \left[ \dfrac{ \tanh \left[\beta\left(\mb_\mu+\SOMMA{\nu=1}{K}\mb_\nu\xi^\nu\right) \right]}{ 1+\frac{1}{2} \exp \left[- \beta\left(\Mb_\mu(1-a_\gamma(N))+\SOMMA{\nu=1}{K}\Mb_\nu\Big((\xi^\nu)^2-a_\gamma(N)\Big)\right)\right] \cosh^{-1} \left[\beta\left(\mb_\mu+\SOMMA{\nu=1}{K}\mb_\nu\xi^\nu\right) \right]} \right] ,
    \\\\
    \bar{M}_\mu&=& \mathbb{E}_{\bm\xi}  \left\{ 1+\frac{1}{2} \exp \left[- \beta\left(\Mb_\mu(1-a_\gamma(N))+\SOMMA{\nu=1}{K}\Mb_\nu\Big((\xi^\nu)^2-a_\gamma(N)\Big)\right)\right] \cosh^{-1} \left[\beta\left(\mb_\mu+\SOMMA{\nu=1}{K}\mb_\nu\xi^\nu\right) \right]\right\}^{-1}
    \\\\
    &&-\mathbb{E}_{\bm\xi}  \left\{ 1+\frac{1}{2} \exp \left[- \beta\left(-\Mb_\mu a_\gamma(N)+\SOMMA{\nu=1}{K}\Mb_\nu\Big((\xi^\nu)^2-a_\gamma(N)\Big)\right)\right] \cosh^{-1} \left[\beta\left(\SOMMA{\nu=1}{K}\mb_\nu\xi^\nu\right) \right]\right\}^{-1}.
    \end{array}
\end{equation}
\normalsize
\begin{figure}[t!]
    \centering
    \includegraphics[width=10cm]{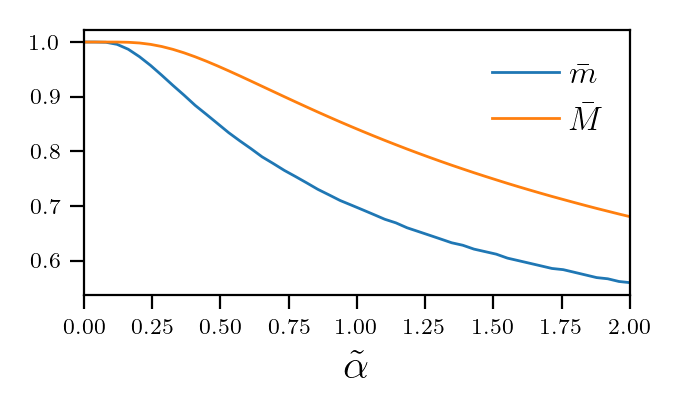}
    \caption{Numerical resolution of the self-consistencies \eqref{eq:high_dil_self} in the  $\beta\to \infty$ limit (plotting, {\em de facto}, eq. \eqref{quattrotrenta}) for both $\bar{m}$ and $\bar{M}$ vs $\tilde\alpha$ as reported in the legend. Note that we are confined to the sub-linear storage --see the discussion before \eqref{spiegaDeltaTilde}-- hence the network has stored  $K =  \alpha N^{\delta}$ (with $0<\delta<1$) but it is retrieving an amount of patterns $\tilde{K} = \tilde{\alpha} N^{\tilde{\delta}}$ with $\tilde{\delta}=\gamma$, that is, the dilution in the pattern determines how many of them can be recalled at once. Note that, apart the extreme-dilution constraint,  these networks share the presence of autapses (i.e. self-synapses, see Figure \ref{fig:NNRBMscheme}), that contribute to motivate such a huge values of $\tilde{\alpha}$ as explained in \cite{Gallius}.}
    \label{fig:high_dil_beta_infinity}
\end{figure}

In this setting, as deepened in \cite{FedeParallelo,Coolen1}, the high dilution forces the pattern to have the same -huge- amount of empty entries and, consequently, the average value of each magnetization is the same for all the retrieved patterns, namely we can write $\mb_\mu=\mb$ for $\mu=1, \cdots, \tilde{K}$ \resub{where $\tilde K=\tilde\alpha N^{\tilde\delta}$ with $0<\tilde\delta \leq\delta$} is the number of no null magnetization. 
This sensibly simplifies the previous equations into
\begin{equation}\label{spiegaDeltaTilde}
    \begin{array}{lll}
         \mb&=& \mathbb{E}_{\bm\xi} \left[ \dfrac{ \tanh \left[\beta\mb\left(1+\SOMMA{\nu=1}{\tilde{K}}\xi^\nu\right) \right]}{ 1+\frac{1}{2} \exp \left[- \beta\Mb\left(1-a_\gamma(N)+\SOMMA{\nu=1}{\tilde{K}}\Big((\xi^\nu)^2-a_\gamma(N)\Big)\right)\right] \cosh^{-1} \left[\beta\mb\left(1+\SOMMA{\nu=1}{\tilde{K}}\xi^\nu\right) \right]} \right] ,
    \\\\
    \bar{M}&=& \mathbb{E}_{\bm\xi}  \left\{  1+\frac{1}{2} \exp \left[- \beta\Mb\left(1-a_\gamma(N)+\SOMMA{\nu=1}{\tilde{K}}\Big((\xi^\nu)^2-a_\gamma(N)\Big)\right)\right] \cosh^{-1} \left[\beta\mb\left(1+\SOMMA{\nu=1}{\tilde{K}}\xi^\nu\right) \right]\right\}^{-1}
    \\\\
    &&-\mathbb{E}_{\bm\xi}  \left\{  1+\frac{1}{2} \exp \left[- \beta\Mb\left(-a_\gamma(N)+\SOMMA{\nu=1}{\tilde{K}}\Big((\xi^\nu)^2-a_\gamma(N)\Big)\right)\right] \cosh^{-1} \left[\beta\mb\left(\SOMMA{\nu=1}{\tilde{K}}\xi^\nu\right) \right]\right\}^{-1}.
    \end{array}
\end{equation}
Now, we can exploit the \resub{CLT} on $\sum_{\nu=1}^{\bar K} \xi^\nu$, we get
\begin{align}
\SOMMA{\nu=1}{\tilde{K}}\xi^\nu&\sim x\sqrt{\tilde\alpha N^{\tilde\delta-\gamma}}\qquad x\sim\mathcal{N}(0,1)
\\
\SOMMA{\nu=1}{\tilde{K}}[(\xi^\nu)^2-a_\gamma(N)]&\sim y\sqrt{\tilde\alpha(1-N^{-\gamma})N^{\tilde\delta-\gamma}}\qquad y\sim\mathcal{N}(0,1),
\end{align}
which represent finite noise contributions if and only if $\tilde\delta=\gamma$
\begin{equation}
    \begin{array}{lll}
         \mb&=& \mathbb{E}_{x, y} \left[ \dfrac{ \tanh \left[\beta\mb(1+x\sqrt{\tilde\alpha}) \right]}{ 1+\frac{1}{2} \exp \left[- \beta\Mb(1+  y\sqrt{\tilde\alpha})\right] \cosh^{-1} \left[\beta\mb(1+x\sqrt{\tilde\alpha}) \right]} \right] ,
    \\\\
    \bar{M}&=& \mathbb{E}_{x,y}  \left\{  1+\frac{1}{2} \exp \left[- \beta\Mb(1+  y\sqrt{\tilde\alpha})\right] \cosh^{-1} \left[\beta\mb(1+x\sqrt{\tilde\alpha})\right]\right\}^{-1}
    \\\\
    &&-\mathbb{E}_{x,y}  \left\{ 1+\frac{1}{2} \exp \left[- \beta\Mb y\sqrt{\tilde\alpha}\right] \cosh^{-1} \left[\beta\mb x\sqrt{\tilde\alpha}\right]\right\}^{-1}.
    \end{array}
    \label{eq:high_dil_self}
\end{equation}

It is possible now to simplify this expression assuming $\beta\to \infty$. Indeed, we get
\begin{equation}\label{quattrotrenta}
    \begin{array}{lll}
         \mb&=& \mathbb{E}_{x, y} \left[ \mathrm{sign} \left[\mb(1+x\sqrt{\tilde\alpha}) \right]\Theta\Big(\mb+\Mb+x\mb\sqrt{\tilde\alpha}+y\Mb\sqrt{\tilde\alpha})\Big) \right] ,
    \\\\
    \bar{M}&=& \text{erf}\left(\dfrac{\mb+\Mb}{ \sqrt{2\tilde\alpha(\mb^2+\Mb^2)}}\right),
    \end{array}
\end{equation}
whose results are reported in Fig. \ref{fig:high_dil_beta_infinity}: as the storage increases the maximal value of the magnetizations gradually smooths down, without any blackout catastrophe: We speculate that the lacking  of the blackout scenario is driven by two contributions, the former being the presence of autapses --as explained in  \cite{Gallius}-- the latter the presence of pattern's dilution, as explained in \cite{Coolen1,Coolen2}.

\subsubsection{Ghatak-Sherrington model}

\begin{figure}[t!]
    \centering
    \includegraphics[width=15cm]{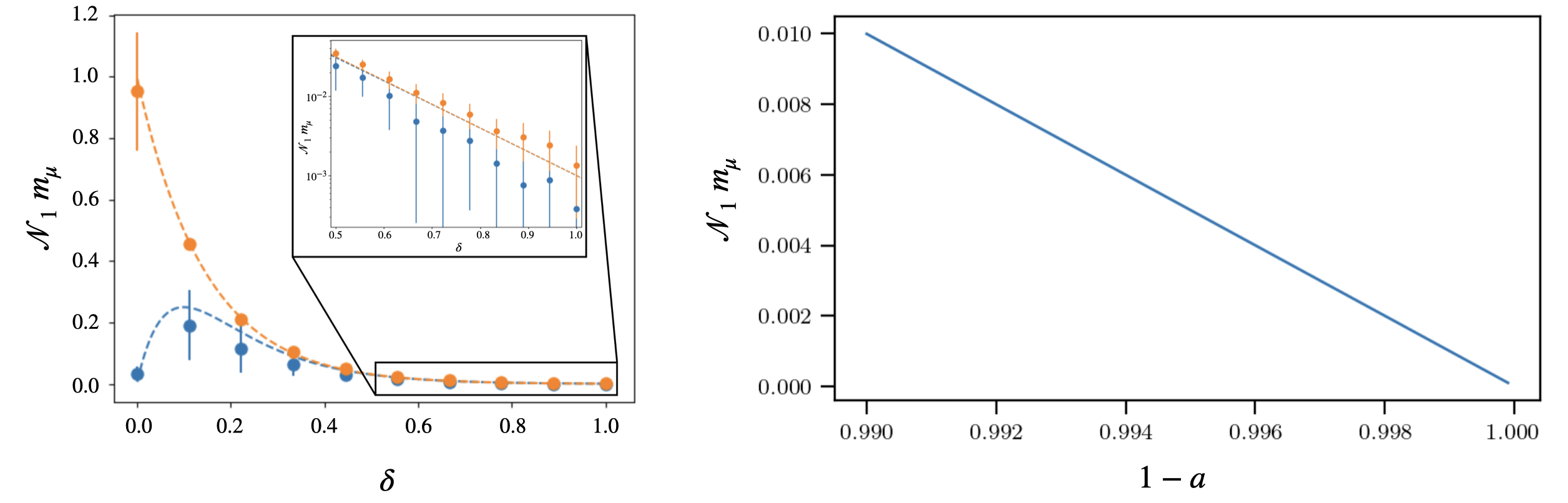}
    \caption{Left panel: first two Mattis magnetizations of the BEG neural network in the extremely diluted case where dilution depends on the network size specifically as $a=1/N^{\delta}$ (so that $\mathbb{P}(\xi_i^\mu=0)=1-1/N^{\delta}$).   As $\delta\to 1$, retrieval is still perfectly preserved: the apparent vanishing of the magnetizations is simply due to the fact that the information content per pattern becomes negligible.   Dotted curves correspond to the theoretical predictions, while dots (with error bars) are MCMC data obtained with $N=1000$, $S=1.0$, $\beta=1000$ and $\tilde\alpha=0.1$.  Left inset: zoom of the same scenario in the extreme high-dilution regime ($\delta>0.5$).  
    \\
    Right panel: overlap between a generic pattern and the neural configuration ($\mathcal{N}_1 m_\mu$) as a function of the average fraction of null pattern entries ($1-a$) in the high-dilution region (small $a$).  Even though the overlap decreases rapidly, this does not signal a loss of retrieval, since the information carried by each pattern is itself reduced.}
    \label{fig:high_dilution_regime}
\end{figure}
 
For the GS we have to follow the exactly same steps done for the BEG model, the only things that change at this level is the value of the normalization function of the $2\times K$ Mattis magnetizations that read as
\begin{align}
    m_\mu(\bm \sigma) = \dfrac{1}{N \N(\gamma, N, S)} \sum_{i=1}^N \xi_i^\mu \sigma_i, \\
    M_\mu (\bm \sigma) = \dfrac{1}{N \Nn(\gamma, N, S)} \sum_{i=1}^N \eta_i^\mu \sigma_i^2,
\end{align}
where, focusing only on the first order in $N^{-\gamma}$, as $N\gg1$ and $\gamma\in(0,1)$, the explicit expressions of $\N(\gamma, N, S)$ and $\Nn(\gamma, N, S)$ are presented in \eqref{eq:norm_GS_high_dil}.

Thus, the Hamiltonian of the models is
\begin{align}
    \mathcal{H}_{N, \alpha, \delta, \gamma, S}(\bm \sigma \vert \bm \xi)= - \dfrac{N}{2} \N(\gamma, N, S)\SOMMA{\mu=1}{\tilde\alpha N^\delta} m_\mu^2 - \dfrac{N}{2}\Nn(\gamma,N,S)\SOMMA{\mu=1}{\tilde\alpha N^\delta} M_\mu^2.
\end{align}

\begin{figure}[t!]
    \centering
    \includegraphics[width=15.5cm]{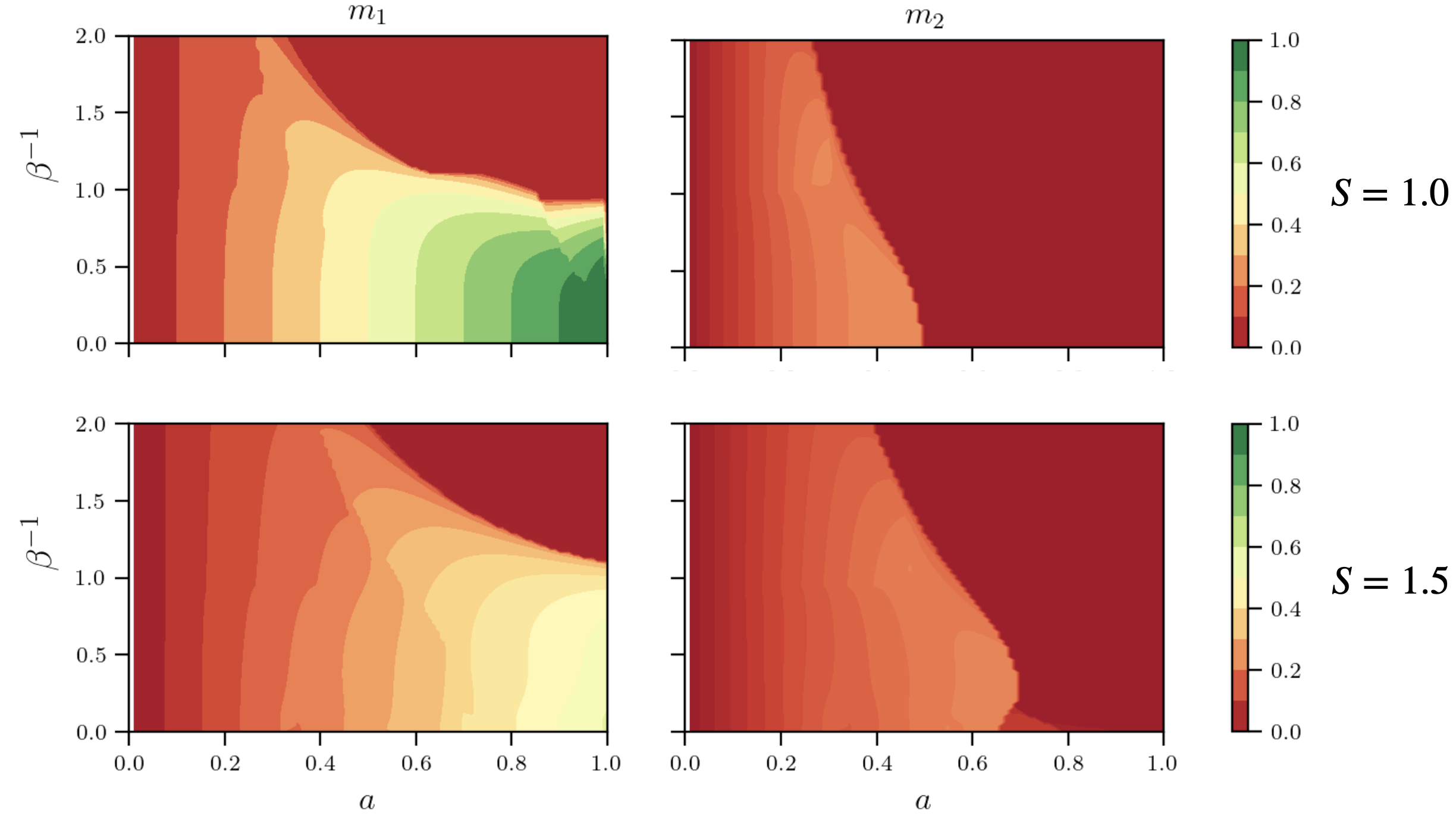}
    \caption{\resub{Numerical self-consistent resolution in the $(a,\beta^{-1})$ plane for $K=2$ of \eqref{eq:selfMT1}.
Top row: $S=1.0$; bottom row: $S=1.5$.
For each parameter pair $(a,\beta^{-1})$ we solve the self-consistency equations and plot the resulting overlap values: the left column shows $m_1$, while the right column shows $m_2$.
The horizontal axis is the dilution parameter $a$ and the vertical axis is the temperature $\beta^{-1}$.
Color intensity encodes the magnitude of the corresponding overlap, as indicated by the color bars on the right.}}
    \label{fig:phase_diagram_ergodicity}
\end{figure}

At this point, the path is \resub{the following}: also in this scenario we want to apply Guerra's interpolation to the quenched statistical pressure in order to express the latter in terms of the order and control parameters of the theory: this allows --by extremization of the statistical pressure with respect to the order parameters-- to obtain their self-consistencies, whose analysis traces their evolution in the space of the control parameters thus highlighting the possible presence of different global information processing regimes. 
Following the usual route under the RS assumption the self consistency equations are
\begin{equation}
    \begin{array}{lll}
         \bar m_\mu=& \mathbb{E}\left\{\dfrac{ \SOMMA{k=0}{2S} \tilde{\mathcal{K}}_{\tilde\alpha,\gamma,\delta,N}(\beta, k, \bm\xi, \bar{\bm M}, \bar{\bm m}) s(k) }{ \SOMMA{k=0}{2S}\tilde{\mathcal{K}}_{\tilde\alpha,\gamma,\delta,N}(\beta, k, \bm\xi, \bar{\bm M}, \bar{\bm m})}\left(\dfrac{\xi^\mu}{\N(\gamma,N,S)}\right) \right\}
        \\\\
        \bar M_\mu=& \mathbb{E}\left\{\dfrac{ \SOMMA{k=0}{2S} \tilde{\mathcal{K}}_{\tilde\alpha,\gamma,\delta,N}(\beta, k, \bm\xi, \bar{\bm M}, \bar{\bm m}) s^2(k) }{ \SOMMA{k=0}{2S}\tilde{\mathcal{K}}_{\tilde\alpha,\gamma,\delta,N}(\beta, k, \bm\xi, \bar{\bm M}, \bar{\bm m})}\left(\dfrac{(\xi^\mu)^2-\N(\gamma,N,S)}{\Nn(\gamma,N,S)}\right) \right\}. 
    \end{array}
\end{equation}
where
\begin{equation}
    \begin{array}{lll}
         \tilde{\mathcal{K}}_{\tilde\alpha,\gamma,\delta,N}(\beta, k, \bm\xi, \bar{\bm M}, \bar{\bm m})&=&\exp\left\{\beta\SOMMA{\mu=1}{\tilde\alpha N^{\delta}}  \left[\Big((\xi^\mu)^2-\N(\gamma, N, S)\Big)\bar M_\mu s^2(k) + \xi^\mu \bar m_\mu s(k)\right]\right\}.
    \end{array}
\end{equation}

Outcomes from these self-consistency relations are presented in Fig. \ref{fig:high_dilution_regime}: note that, while it seems that the retrieval goes to zero as $\delta \to 1$, it is actually the information content per pattern that is becoming negligible (and this is the core reason that drives the network to switch from serial to parallel processing),  while the magnetizations are actually perfectly retrieving and an extensive number of these is raised --with the same amplitude-- up to saturation of the  alignment of the neural configuration. \resub{What we mean is that dilution is essential for allowing genuine parallel processing. To see this, consider a simple example with two patterns, each carrying information only on a distinct half of the system and being blank elsewhere. Unlike in the standard Hopfield model, where retrieving one pattern occupies all available degrees of freedom and precludes the retrieval of any other, here neither pattern can be fully retrieved. This leaves unused sites that the network can allocate to the other pattern. Consequently, the optimal configuration is one in which both patterns are retrieved as far as their non-blank entries allow, so that each reaches its maximal attainable magnetization rather than suppressing the other\footnote{For more details one can see \cite{FedeParallelo}, where multitasking capabilities in diluted associative neural networks are discussed extensively.}.}

As a final remark, we comment on ergodicity breaking (please  refer to Fig.~\ref{fig:phase_diagram_ergodicity}). Starting from the high fast-noise regime (small $\beta$), we monitor when the Mattis magnetizations detach from zero; the corresponding critical value of $\beta$ marks the onset of ergodicity breaking. Below this threshold the dynamics can effectively select and remain trapped in the relevant valleys of the energy landscape, whose minima correspond to retrieval states. Importantly, owing to the presence of the quartic interaction, the ergodic-to-nonergodic transition is generically discontinuous (first order in the Ehrenfest classification), with notable exceptions only in special limits, most prominently $a\to 1$, where dilution becomes irrelevant.

\section{Conclusions and outlooks}
In this work we have provided a rigorous statistical-mechanical analysis of neural networks whose Cost function is inspired by {\em inverse freezing} spin glass models, namely the Blume-Emery-Griffiths (BEG) and the Ghatak-Sherrington (GS) neural networks, extending and consolidating previous heuristic results obtained within the replica framework. 
\newline
By employing Guerra’s interpolation technique, we established exact expressions for the quenched free energy in different storage regimes, thereby placing the retrieval properties of these networks on a firm mathematical footing. Our investigation highlights three central outcomes. 
\newline
The first is that, in the high storage regime, all the previous results on serial retrieval (that is the standard pattern recognition phenomenon) have been punctually confirmed by the Guerra's approach under the request that the order parameters self-average around their unique means (that is, at the replica symmetric level of description). 
\newline
Then, we showed how the introduction of sparsity and dilution in the patterns may drastically modify the retrieval phenomenology. In particular, we proved that these networks spontaneously switch between serial and parallel recall, with multitasking capabilities becoming stronger as dilution increases: this feature not only enriches the theoretical phase diagram of associative networks but also resonates with biological evidence of sparse coding and distributed processing. 
\newline
Finally, by enlarging the neuronal state space to multi-level variables, so moving from the BEG to the GS model,  we demonstrated that these neural networks provide a natural setting for modeling graded responses also coping with parallel retrieval, thereby bridging simplified spin-glass models with more realistic descriptions of neural computation.

\medskip
From a conceptual standpoint, the present analysis supports the view that sparsity and multi-state neurons are not merely technical generalizations, but structural ingredients that enhance the expressive power and computational versatility of associative memories.
\newline
Future developments may proceed along several directions. \resub{It would be natural to investigate Replica Symmetry Breaking corrections and to inspect their consequences for the retrieval properties, with the aim of determining whether the onset of symmetry breaking affects the stability of the retrieval states, modifies the boundaries of the retrieval region, or gives rise to new metastable configurations, and to establish whether such an onset is signalled by the emergence of an analogue of the de Almeida-Thouless line.}
%it would be natural to investigate replica-symmetry breaking corrections and inspect their consequences on the retrieval properties; 
%Furthermore we should extend the present results to finite connectivity architectures so to enable also parallel retrieval in the high storage setting . 
\resub{Furthermore, we should extend the present results to finite connectivity architectures, so as to enable parallel retrieval in the high storage setting. In addition, it would be worthwhile to explore the effects of progressively densification of the network, in order to assess how increasing connectivity influences the emergence, stability, and interplay of serial and parallel retrieval regimes.}
Last but not least,  we should relax the assumption of already defined patterns to store and let the network learn them by presenting solely corrupted examples: we plan to report soon on these extensions.

\newpage
%\bibliographystyle{abbrv}
%\bibliography{bibdense2.bib}
\appendix
\section{Computations in Lemma \ref{lemma1}}
\label{app:conti}
From the application of the definition by $t$ of the interpolating quenched statistical pressure we get: 
\begin{align}
    &d_t \mathcal{A}_{N, K, a}(\beta;t)
    =  \mathbb{E}  \left\{ \dfrac{\beta  a}{2}\omega_t( m_1^2) + \dfrac{\beta  a(1-a)}{2}  \omega(M_1^2) \notag \right. \notag \\
    &+ \dfrac{1}{2 N}\sqrt{\dfrac{\beta }{tN}} \sum_{i=1}^N \sum_{\mu >1}^K \tilde \xi_i^\mu \omega_t(\sigma_i z_\mu) \notag \\
    & +  \dfrac{1}{2N}\sqrt{\dfrac{\beta }{N t}} \sum_{i=1}^N \sum_{\mu >1}^K \tilde \eta_i^\mu \omega_t(\sigma_i^2 \tilde z_\mu) - \left(\psi  \omega_t(m_1) + \varphi  \omega_t(M_1) \right)\notag \\
    &- \dfrac{1}{2N} \left[ F \sum_{\mu=2}^K \omega_t(z_\mu^2) + G \sum_{\mu=2}^K \omega_t(\tilde z_\mu^2) + H_1 \sum_{i=1}^N \omega_t(\sigma_i^2)+ H_2 \sum_{i=1}^N \omega_t(\sigma_i^2) \right] \notag \\
    &\left.+ \dfrac{1}{2N\sqrt{1-t}} \left[ A \sum_{i=1}^N K_i \omega_t(\sigma_i) + B \sum_{\mu=2}^K \bar{J}_\mu \omega_t(z_\mu) + C \sum_{\mu=2}^K \tilde J_\mu \omega_t(\tilde z_\mu) + E \sum_i \tilde K_i \omega_t(\sigma_i^2)\right]\right\}. \notag 
\end{align}
Now applying both the definition of the quenched general average ( Def.~\ref{def:global}) and the definition of the order parameters ( Def.~\ref{def:ordparam}), we find
\begin{align}
    &d_t \mathcal{A}_{N, K, a}(\beta;t) = \dfrac{\beta a}{2} \l (m_1)^2  -\dfrac{2\psi}{\beta a} m_1\r + \dfrac{\beta a(1-a)}{2} \l (M_1)^2 -\dfrac{2\varphi}{\beta a(1-a)}M_1\r \notag
    \\
    &- \dfrac{K}{2N} \left[ F  \l p_{11}\r + G \sum_{\mu=2}^K \l\tilde p_{11}\r + (H_1+H_2) \sum_{i=1}^N \l q_{11}\r \right]\notag 
    \\
    &+ \dfrac{1}{2 N}\sqrt{\dfrac{\beta }{tN}} \sum_{i=1}^N \sum_{\mu >1}^K \mathbb{E}[\tilde \xi_i^\mu \omega_t(\sigma_i z_\mu)]  +  \dfrac{1}{2N}\sqrt{\dfrac{\beta }{N t}} \sum_{i=1}^N \sum_{\mu >1}^K \mathbb{E}[\tilde \eta_i^\mu \omega_t(\sigma_i^2 \tilde z_\mu)] \notag 
    \\
    &\left.+ \dfrac{1}{2N\sqrt{1-t}} \left[ A \sum_{i=1}^N \mathbb{E}[K_i \omega_t(\sigma_i)] + B \sum_{\mu=2}^K \mathbb{E}[\bar{J}_\mu \omega_t(z_\mu)] + C \sum_{\mu=2}^K \mathbb{E}[\tilde J_\mu \omega_t(\tilde z_\mu)] + E \sum_i \mathbb{E}[\tilde K_i \omega_t(\sigma_i^2)]\right]\right\} \notag ,
\end{align}
For the remaining terms, we need to apply the a simple Gaussian integration by parts formula
\begin{equation}
    \mathbb{E}_{\bm J}( J \omega_{\bm J}(f(\bm J)) ) = \mathbb{E}_{\bm J} \Bigg( \omega_{\bm J}\left(\frac{\partial}{\partial \bm J} f(\bm J)\right) \Bigg),
\label{eq:wick_theorem}
\end{equation}
valid for any smooth function $f(J)$ of a centered unit Gaussian variable $J$. As the accounts to be made are similar, we only show those for the first term:
\begin{align}
    &\dfrac{1}{2 N}\sqrt{\dfrac{\beta }{tN}} \sum_{i=1}^N \sum_{\mu >1}^K \mathbb{E}[\tilde \xi_i^\mu \omega_t(\sigma_i z_\mu)] =\dfrac{1}{2 N}\sqrt{\dfrac{\beta }{tN}} \sum_{i=1}^N \sum_{\mu >1}^K \mathbb{E}[ \partial_{\tilde\xi_i^\mu}\omega_t(\sigma_i z_\mu)]
    \\
    &=\dfrac{\beta }{2N^2} \sum_{i=1}^N \sum_{\mu >1}^K \Bigg(\mathbb{E}[ \omega_t(\sigma_i^2 z_\mu^2)]- \mathbb{E}[ \omega_t^2(\sigma_i z_\mu)]\Bigg)= \dfrac{\beta K}{2 N}\Big(\l q_{11}p_{11}\r-\l q_{12}p_{12}\r\Big),
    \label{eq:app2}
\end{align}
for the definition of the order parameters. By Applying the same steps to all the remaining terms, one get
\begin{align}
    &d_t \mathcal{A}_{N, K, a}(\beta;t) = \dfrac{\beta a}{2} \l (m_1)^2  -\dfrac{2\psi}{\beta a} m_1\r + \dfrac{\beta a(1-a)}{2} \l (M_1)^2 -\dfrac{2\varphi}{\beta a (1-a)}M_1\r \notag
    \\
    &- \dfrac{K}{2N} \left[ F  \l p_{11}\r + G \sum_{\mu=2}^K \l\tilde p_{11}\r + (H_1+H_2) \sum_{i=1}^N \l q_{11}\r \right]\notag 
    \\
    &+ \dfrac{\beta K}{2 N}\Big(\l q_{11}p_{11}\r-\l q_{12}p_{12}\r\Big)  +  \dfrac{\beta K}{2 N}\Big(\l q_{11}\tilde{p}_{11}\r-\l \tilde{q}_{12}\tilde{p}_{12}\r\Big) \notag 
    \\
    &\left.+ \dfrac{A^2}{2} \Big(\l q_{11}\r-\l q_{12}\r\Big) + \dfrac{B^2 K}{2N}  \Big(\l p_{11}\r-\l p_{12}\r\Big) + \dfrac{C^2 K}{2N} \Big(\l \tilde p_{11}\r-\l \tilde p_{12}\r\Big) + \dfrac{E^2}{2}  \Big(\l q_{11}\r-\l \tilde q_{12}\r\Big)\right\} \notag ,
\end{align}

\section{Proof of Lemma \ref{lemma2}}
\label{app:ob}

\begin{align}
    \mathcal{A}_{N,K,a}(\beta;t=0) =& \dfrac{1}{N} \mathbb{E} \log \left\{ \int \mathcal{D}(z\tilde z) \exp \left[ B \sum_{\mu=2}^K \bar{J}_\mu z_\mu + \dfrac{F}{2} \sum_{\mu=2}^K z_\mu^2 \right] \exp \left[ C \sum_{\mu=2}^K \tilde J_\mu \tilde z_\mu + \dfrac{G}{2} \sum_{\mu=2}^K \tilde z_\mu^2 \right] \right. \notag \\
    &\left.\times \sum_{\{\bm \sigma\}} \exp \left[ \psi N m_1 + \varphi N M_1 + A \sum_{i=1}^N J_i \sigma_i + E \sum_{i=1}^N \tilde J_i \sigma_i^2 + \dfrac{(H_1+H_2)}{2}\sum_{i=1}^N \sigma_i^2 \right]\right\}.
\end{align}
Let us compute the terms inside the logarithm. 
\\
The first one dependently only on $\bm\sigma$ is
\begin{align}
    &\sum_{\bm \sigma} \exp \left[ \psi N  m_1 + \varphi N  M_1 + A \sum_{i=1}^N J_i \sigma_i + E \sum_{i=1}^N \tilde J_i \sigma_i^2 + \dfrac{H_1+H_2}{2}\sum_{i=1}^N \sigma_i^2 \right] \notag \\
    &= \sum_{\bm \sigma} \prod_{i=1}^N\exp \left[  \left(\dfrac{\psi}{a} \xi_i^1+ A  J_i \right) \sigma_i + \left(E \tilde J_i + \dfrac{H}{2} +  \dfrac{\varphi}{a(1-a)} \eta_i^1\right) \sigma_i^2 \right] \notag \\
    &=\prod_{i=1}^N\sum_{\sigma_i \in \{ 0, \pm 1\}} \exp \left[  \left(\dfrac{\psi}{a} \xi_i^1+ A  J_i \right) \sigma_i + \left(E \tilde J_i + \dfrac{H_1+H_2}{2} +  \dfrac{\varphi}{a(1-a)} \eta_i^1\right) \sigma_i^2 \right] \notag \\
    &= \prod_{i=1}^N \left\{1+ 2\exp \left( E \tilde J_i + \dfrac{H_1+H_2}{2} +  \dfrac{\varphi}{a(1-a)} \eta_i^1\right)\cosh \left( \dfrac{\psi}{a} \xi_i^1+ A  J_i \right) \right\}.
\end{align}
The second term dependently only on $\bm z$ and $\tilde{\bm z}$ is
\begin{align}
    \int \mathcal D (z \tilde z) \exp \left[ B \sum_{\mu=2}^K \bar{J}_\mu z_\mu + \dfrac{F}{2} \sum_{\mu=2}^K z_\mu^2 +C \sum_{\mu=2}^K \tilde J_\mu \tilde z_\mu + \dfrac{G}{2} \sum_{\mu=2}^K \tilde z_\mu^2\right] =
    \\
    = \prod_{\mu >1}^K \left[(1-F)(1-G)\right]^{-1/2}\exp \left( \dfrac{B^2 \bar J_\mu^2}{2(1-F)} \right)\exp \left( \dfrac{C^2 \tilde J_\mu^2}{2(1-G)}\right).
\end{align}
For the properties of the logarithm,  we can write that 
\begin{align}
    \mathcal{A}_N(t=0 \vert \beta) =& \mathbb{E} \left\{\log \left[1+ 2\exp \left( E \tilde J + \dfrac{H_1+H_2}{2} +  \dfrac{\varphi}{a(1-a)} \eta^1\right)\cosh \left( \dfrac{\psi}{a} \xi^1+ A  J\right) \right]\right\}\notag \\
    &+ \dfrac{1}{N} \sum_{\mu>1}^K \mathbb{E}\left[ \log \exp \left( \dfrac{B^2 \bar J_\mu^2}{2(1-F)} \right)\right] + \dfrac{1}{N} \sum_{\mu>1}^K \mathbb{E}\left[ \log \exp \left( \dfrac{C^2 \tilde J_\mu^2}{2(1-G)}\right) \right]\notag \\
    &-\dfrac{1}{2N}\SOMMA{\mu>1}{K}\log(1-F)-\dfrac{1}{2N}\SOMMA{\mu>1}{K}\log(1-G)\notag
    \\
    =& \mathbb{E} \left\{\log \left[1+ 2\exp \left( E \tilde J + \dfrac{H_1+H_2}{2} +  \dfrac{\varphi}{a(1-a)} \eta^1\right)\cosh \left( \dfrac{\psi}{a} \xi^1+ A  J \right) \right]\right\}\notag \\
    & -\dfrac{K}{2N}\log(1-G) -\dfrac{K}{2N}\log(1-F)+\dfrac{B^2 K }{2N(1-F)} + \dfrac{C^2 K}{2N(1-G)}.
    \label{eq:OBapp}
\end{align}
In the thermodynamic limit, using the relation $\alpha=\lim_{N\to\infty}K/N$, we get
\begin{align}
    \mathcal{A}^{RS}_{\alpha,a}(\beta;t=0) =& \mathbb{E} \left\{\log \left[1+ 2\exp \left( E \tilde J + \dfrac{H_1+H_2}{2} +  \dfrac{\varphi}{a(1-a)} \eta^1\right)\cosh \left( \dfrac{\psi}{a} \xi^1+ A  J \right) \right]\right\} \notag \\
    &-\dfrac{\alpha}{2}\log(1-G) -\dfrac{\alpha}{2}\log(1-F)+\dfrac{\alpha B^2 }{2(1-F)} + \dfrac{\alpha C^2 }{2(1-G)}
    \label{eq:OBapp_1}
\end{align}
which is well defined in the thermodynamic limit.

\section{General Momenta Calculus}
\label{app:momenta}
We start from the distribution
\begin{align}
\mathbb{P}(\xi_i^\mu) =
    \begin{cases}
        \big(1-a\big)\,\delta_{\xi_i^\mu,0}+ \dfrac{a}{2S}\SOMMA{\substack{k=0 \\ k\neq S}}{2S}\delta_{\xi_i^\mu,-1+\tfrac{k}{S}} & \text{if } S\in \mathbb{N},
        \\\\
        \big(1-a\big)\,\delta_{\xi_i^\mu,0}+ \dfrac{a}{2S+1}\SOMMA{k=0}{2S}\delta_{\xi_i^\mu,-1+\tfrac{k}{S}} & \text{if } \dfrac{2S+1}{2}\in \mathbb{N}.
    \end{cases}     
\end{align}
From this definition it is straightforward to compute explicitly the normalization factors introduced in \eqref{eq:GS_normalization}, which read as
\begin{equation}
\begin{array}{lll}
     \N(a,S)=\mathrm{Var}[\xi_i^\mu]=\begin{cases}
        \dfrac{a (1 + S) (1 + 2 S)}{6 S^2},&\text{if }\;S\in\mathbb{N},
        \\\\
        \dfrac{a (1 + S)}{3 S},&\text{if }\;\dfrac{2S+1}{2}\in\mathbb{N},
    \end{cases}
\\\\
    \Nn(a,S)=\mathrm{Var}[\eta_i^\mu]=\begin{cases}
        \N(a,S)\,\dfrac{18 S (S+1)+6-5 a (2 S+1) (S+1)}{30 S^2},&\text{if }\;S\in\mathbb{N},
        \\\\
        \N(a,S)\,\dfrac{ S (S+1)(9-5 a)-3}{15 S^2},&\text{if }\;\dfrac{2S+1}{2}\in\mathbb{N}.
    \end{cases} 
\end{array}
\label{eq:norm_GS_low_dil}
\end{equation}
The extension to the high-dilution regime is straightforward: it only requires the replacement 
\[
a \;\to\; a_{\gamma}(N)=N^{-\gamma}.
\] 
Keeping only the first-order term in $N^{-\gamma}$ for $N\gg 1$, equation \eqref{eq:norm_GS_low_dil} reduces to
\begin{equation}
\begin{array}{lll}
     \N(\gamma, N,S)= N^{-\gamma}\dfrac{(1+S)}{3S}\begin{cases}
        \dfrac{ 1 + 2 S}{2 S},&\text{if }\;S\in\mathbb{N},
        \\\\
        1,&\text{if }\;\dfrac{2S+1}{2}\in\mathbb{N},
    \end{cases}
\\\\
    \Nn(\gamma, N,S)= N^{-\gamma}\dfrac{ (1 + S) \big(3 S (1 + S)-1\big)}{15 S^3}\begin{cases}
        \dfrac{  1+2 S }{2 S},&\text{if }\;S\in\mathbb{N},
        \\\\
        1,&\text{if }\;\dfrac{2S+1}{2}\in\mathbb{N}.
    \end{cases}
\end{array}
\label{eq:norm_GS_high_dil}
\end{equation}

\section{Numerical simulations and pseudo-code}
Starting from \eqref{eq:HamiltSspin} we can rewrite it as
\begin{align}
\mathcal H^{(GS)}_{N,K,a,S}(\boldsymbol\sigma \mid \boldsymbol\xi)
&=-\frac{1}{2}\sum_{i,j=1}^N J_{ij}\,\sigma_i\sigma_j
-\frac{1}{2}\sum_{i,j=1}^N K_{ij}\,\sigma_i^2\sigma_j^2,
\end{align}
where we set
\begin{equation}
    J_{ij}= \frac{1}{N\,\N(a,S)}\sum_{\mu=1}^K \xi_i^\mu\xi_j^\mu,\qquad K_{ij}=\frac{1}{N\,\Nn(a,S)}\sum_{\mu=1}^K \eta_i^\mu\eta_j^\mu.
\end{equation}
Now exploiting the mean-field structure of the network we have
\begin{align}
\mathcal H^{(GS)}_{N,K,a,S}(\boldsymbol\sigma \mid \boldsymbol\xi)
&=-\SOMMA{i=1}{N}h_1^{(1)}(\boldsymbol\sigma \mid \boldsymbol\xi,\boldsymbol\eta)\sigma_i
-\SOMMA{i=1}{N}h_i^{(2)}\sigma_i^2
\end{align}
where
\begin{align}
h^{(1)}_{i}(\boldsymbol\sigma \mid \boldsymbol J)
=\frac{1}{2}\sum_{j=1}^N J_{ij}\,\sigma_j,\qquad h^{(2)}_{i}(\boldsymbol\sigma \mid K)
=\frac{1}{2}\sum_{j=1}^N  K_{ij}\,\sigma_j^2.
\end{align}
The neurons are updated asynchronously according to the transition probability
\begin{equation}
    \mathbb{P}(\sigma^{'}_i=s(k)|\bm\sigma, \bm\xi) = \dfrac{\exp\left[\beta h^{(1)}_{i}(\boldsymbol\sigma \mid \boldsymbol J)s(k)+\beta h^{(2)}_{i}(\boldsymbol\sigma \mid \boldsymbol K)s^2(k)\right]}{\SOMMA{k=0}{2S}\exp\left[\beta h^{(1)}_{i}(\boldsymbol\sigma \mid \boldsymbol J)s(k)+\beta h^{(2)}_{i}(\boldsymbol\sigma \mid \boldsymbol K)s^2(k)\right]}
    \label{eq:updating_rules_MCMC}
\end{equation}
where $s(k)=-1+k/S$.

\medskip
\begin{algorithm}[H]
\caption{MCMC: serial updating  \label{algo:MCMC}}
\KwIn{Patterns $\{\xi_{i}^{\mu}\}_{i=1, \cdots, N}^{\mu=1, \cdots, K}$, starting configurations $\bm\sigma^{(1)}(t=0)$, number of serial dynamic steps $ N_s $, thermal noise $T$}
\KwOut{Final neuronal configuration $\bm\sigma^{(1)}(t=N_s)$}

Set $n = 0$\;

\Repeat{$n = N_s$}{
     Choose randomly $i$ from $[1,\cdots, N]$;\\
     For each $k=0, \cdots, 2S$, compute  the probability of occurrence  $$p^{(k)}_i = \mathbb{P}\left(\sigma_i(n+1)=s(k)|\bm\sigma(t=(n)), \bm\xi\right);$$ 
     Extract randomly using the vector of probability $p^{(k)}_i$ one of the $2S+1$ possible value of $s(k)$;\\
     Update the $i$-th component of the spin configuration $\sigma_i(t=n+1) = s^*(k) $;\\
    $ n = n + 1 $\;
}
\end{algorithm}

\newpage
\section{List of symbols}

\begin{table}[h!]
\centering
\renewcommand{\arraystretch}{1.25}
\begin{tabular}{ll}
\hline
\textbf{Symbol} & \textbf{Definition} \\
\hline

$\sigma_i$ & State of neuron $i$ (BEG: $\{-1,0,+1\}$; GS: graded values in $\{-1 + k/S\}$). \\

$\xi_i^\mu$ & Entry $i$ of pattern $\mu$ (stored input). \\
$\eta_i^\mu$ & Activity field associated with pattern $\mu$: $(\xi_i^\mu)^2 - \mathcal N_1$ (which is $(\xi_i^\mu)^2 - a$ in BEG model). \\

$a$ & Pattern dilution parameter. \\
$S$ & Resolution parameter of the GS model controlling the graded neural response. \\

$K$ & Number of stored patterns. \\
$N$ & Number of neurons in the network. \\

$\beta$ & Inverse temperature (thermal noise level), $\beta = 1/T$. \\
$\alpha$ & Memory load: $\alpha = \lim_{N\to\infty} K / N^\delta$. \\
$\delta$ & Exponent specifying the storage regime ($0<\delta\le1$). \\
$\mathcal N_1(a,S)$ & Variance of $\xi_i^\mu$ (in BEG is equal to $a$). \\ 
$\mathcal N_2(a,S)$ & Variance of $\eta_i^\mu$ (in BEG is equal to $a(1-a)$. \\

$\Omega$ & Configuration space of the neurons. \\

$\mathcal H(\sigma \mid \xi)$ & Hamiltonian of the model. \\
$\mathcal Z(\beta \mid \xi)$ & Partition function. \\
$\omega(\cdot)$ & Gibbs expectation (thermal average). \\
$\langle \cdot \rangle$ & Global average over both thermal noise and quenched disorder. \\

$m_\mu$ & Mattis magnetisation aligned with pattern $\mu$. \\
$M_\mu$ & Mattis activity magnetisation aligned with $\eta^\mu$. \\

$q_{\ell\ell'}$ & Replica overlap between replicas $\ell$ and $\ell'$. \\
$\tilde q_{\ell\ell'}$ & Overlap of squared neural activities $\sigma^2$ between replicas. \\

$p_{\ell\ell'}$ & Overlap between $z_\mu$ variables. \\
$\tilde p_{\ell\ell'}$ & Overlap between $\tilde z_\mu$ variables. \\
$\tilde{\bar{p}}$ & Replica-symmetric average of $\tilde p_{\ell\ell'}$. \\
$\bar p$ & Replica-symmetric average of $p_{\ell\ell'}$. \\
$\bar P$ & Replica-symmetric average of $p_{\ell\ell}$. \\
$\tilde {\bar P}$ & Replica-symmetric average of $\tilde p_{\ell\ell}$. \\
$Q$ & Replica-symmetric average of $q_{\ell\ell}$ . \\
$\tilde Q$ & Replica-symmetric average of $\tilde q_{\ell\ell}$ (in BEG is equal to $Q$). \\
$\bar q$ & Replica-symmetric average of $q_{\ell\ell'}$. \\
$\tilde{\bar{q}}$ & Replica-symmetric average of $\tilde q_{\ell\ell'}$. \\

$t$ & Guerra interpolation parameter. \\

$s(k)$ & Allowed GS neuron state: $s(k) = -1 + k/S$. \\

$\mathcal A(\beta,\alpha)$ & Quenched statistical pressure. \\
$f(\beta,\alpha)$ & Intensive free energy, related via $\mathcal A = -\beta f$. \\

\hline
\end{tabular}
% \caption{Symbols and notation employed throughout the manuscript.}
\end{table}

\newpage
\acknowledgments

LA and AB acknowledge the PRIN 2022 grant {\em Statistical Mechanics of Learning Machines} number 20229T9EAT funded by the Italian Ministry of University and Research (MUR) in the framework of European Union - Next Generation EU.\\
LA acknowledges funding also by the PRIN 2022 grant {\em “Stochastic Modeling of Compound Events (SLIDE)”} n. P2022KZJTZ funded by the Italian Ministry of University and Research (MUR) in the framework of European Union - Next Generation EU and also by “Patto Territoriale del Sistema Universitario Pugliese - Open Apulian University” (CUP F61B230003700006).
\newline
AB acknowledges further support by Sapienza Università di Roma (via the grant {\em Statistical learning theory for generalized Hopfield models}), prot. n. RM12419112BF7119.
\newline
ENMC thanks the PRIN 2022 project ``Mathematical Modelling of Heterogeneous Systems (MMHS)", financed by the European Union - Next Generation EU,
CUP B53D23009360006, Project Code 2022MKB7MM, PNRR M4.C2.1.1.
\newline
All the authors are members of the GNFM group within INdAM which is acknowledged too. 

% \bibliographystyle{abbrv}
% \bibliography{bibdense2.bib}

\end{document}